\tikzset{
main node/.style={inner sep=0,outer sep=0},
label node/.style={inner sep=0,outer ysep=.2em,outer xsep=.4em,font=\scriptsize,overlay},
strike out/.style={shorten <=-.2em,shorten >=-.5em,overlay}
}
  \newcommand{\miniscule}{\@setfontsize\miniscule{4}{5}}%
  \newcommand{\miniscule}{\@setfontsize\miniscule{5}{6}}%
  \newcommand{\nano}{\@setfontsize\miniscule{3.5}{4.5}}%
  \newcommand{\nano}{\@setfontsize\miniscule{4.5}{5.5}}%
  \newcommand{\nano}{\@setfontsize\miniscule{4.5}{5.5}}%
\newcommand{\balita}{\raisebox{1.8pt}{\text{ \nano$\bullet$\hspace{1.7pt} }}}
\numberwithin{equation}{section}
\newtheoremstyle{mytheoremstyle} 
    {10pt}                    
    {8pt}                    
    {\itshape}                   
    {}                           
    {\scshape}                   
    {.}                          
    {.5em}                       
    {}  
\theoremstyle{mytheoremstyle}
\newtheorem{thm}{Theorem}[section]
 \newtheorem{corollary}[thm]{Corollary}
 \newtheorem{lemma}[thm]{Lemma}
 \newtheorem{proposition}[thm]{Proposition}
 \newtheoremstyle{definition} 
    {8pt}                    
    {5pt}                    
    {}                   
    {}                           
    {\scshape}                   
    {.}                          
    {.5em}                       
    {}  
 \theoremstyle{definition}
 \newtheorem{definition}[thm]{Definition}
 \newtheorem{example}[thm]{Example}
 \newtheorem*{acknowledgements}{Acknowledgements}
 \newtheorem{remark}[thm]{Remark}
 \newenvironment{proof*}[1]{%
 \begin{proof}}
 {\end{proof}}
 \definecolor{VerdeFH}{HTML}{009374}
\newenvironment{salign} 
  {\csname align*\endcsname}
  {\csname endalign*\endcsname} 
  \newcommand{\numerada}{\refstepcounter{equation}\tag{\theequation}}
 \newcommand*\aay{%
  \text{%
  \fontencoding{LS1}%
  \fontfamily{stixscr}%
  \fontseries{\textmathversion}%
  \fontshape{n}%
  \selectfont\symbol{'141}}}
 \newcommand*\fay{%
  \text{%
  \fontencoding{LS1}%
  \fontfamily{stixscr}%
  \fontseries{\textmathversion}%
  \fontshape{n}%
  \selectfont\symbol{'146}}}
\newcommand*\kay{%
  \text{%
  \fontencoding{LS1}%
  \fontfamily{stixscr}%
  \fontseries{\textmathversion}%
  \fontshape{n}%
  \selectfont\symbol{"6B}}}
  \newcommand*\say{%
  \text{%
  \fontencoding{LS1}%
  \fontfamily{stixscr}%
  \fontseries{\textmathversion}%
  \fontshape{n}%
  \selectfont\symbol{'163}}}
    \newcommand*\xay{%
  \text{%
  \fontencoding{LS1}%
  \fontfamily{stixscr}%
  \fontseries{\textmathversion}%
  \fontshape{n}%
  \selectfont\symbol{"78}}}
  \renewcommand*\day{%
  \text{%
  \fontencoding{LS1}%
  \fontfamily{stixscr}%
  \fontseries{\textmathversion}%
  \fontshape{n}%
  \selectfont\symbol{'144}}}
   \newcommand*\lay{%
  \text{%
  \fontencoding{LS1}%
  \fontfamily{stixscr}%
  \fontseries{\textmathversion}%
  \fontshape{n}%
  \selectfont\symbol{"6C}}}
  \newcommand*\jay{%
  \text{%
  \fontencoding{LS1}%
  \fontfamily{stixscr}%
  \fontseries{\textmathversion}%
  \fontshape{n}%
  \selectfont\symbol{"7C}}}
    \newcommand*\iay{%
  \text{%
  \fontencoding{LS1}%
  \fontfamily{stixscr}%
  \fontseries{\textmathversion}%
  \fontshape{n}%
  \selectfont\symbol{"7B}}}
  \newcommand*\textmathversion{\csname textmv@\math@version\endcsname}
  \newcommand*\textmv@normal{m}
  \newcommand*\textmv@bold{b}
    \numberwithin{equation}{section}
\newcommand{\tetrah}{\raisebox{-7pt}{\includegraphics[width=5.7pt]{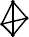}}}
\newcommand{\antiH}{_{\mtr{anti\text{-}Herm.}}}
\newcommand{\sa}{_\mtr{s.a}}
    \renewcommand{\and}{\mbox{and}}
    \newcommand{\mtr}[1]{\mathrm{#1}}
        \newcommand{\mtb}[1]{\mathbb{#1}}
    \newcommand{\mtf}[1]{\mathfrak{#1}}
    \newcommand{\A}{\mathcal{A}}
    \newcommand{\dif}[1]{\mathrm{d}#1}
    \newcommand{\re}{\mathbb{R}}
    \newcommand{\diag}{\mtr{diag}}
    \newcommand{\adj}{\mathrm{ad}}
    \newcommand{\Adj}{\mathrm{Ad}}
    \newcommand{\gauge}{\mathsf{G}}
    \renewcommand{\H}{\mathcal{H}}
    \newcommand{\C}{\mathbb{C}}
    \newcommand{\ii}{\mathrm{i}}
    \newcommand{\ee}{\mathrm{e}}
    \newcommand{\inv}{^{-1}}
    \newcommand{\mtc}[1]{\mathcal{#1}}
    \newcommand{\Z}{\mathbb{Z}}
    \newcommand{\U}{\mtc{U}}
    \newcommand{\hp}[1]{^{(#1)}}
    \newcommand{\with}{\,\,\mtr{with} \,\,}
    \newcommand{\where}{\mbox{where}\,\,}
    \DeclareMathOperator{\Tr}{Tr}
    \DeclareMathOperator{\im}{im}
    \newcommand{\TrH}{\Tr_\H}
    \newcommand{\TrN}{\Tr_{N}}
 \newcommand{\TrV}{\Tr_{V}}
    \newcommand{\TrMNn}{\Tr_{M_{N\otimes n}^{\C}}}
\newcommand{\YMH}{{\mathrm{YM\text{-}H}}}
\newcommand{\gH}{{\mathrm{g\text{-}H}}}
\newcommand{\Hi}{{\mathrm{H}}}
\newcommand{\YM}{{\mathrm{YM}}}
\newcommand{\Mn}{M_n(\C)}
\newcommand{\MN}{M_N(\C)}
\newcommand{\MNn}{M_{N\otimes n}(\C)}
    \newcommand{\sun}{\mathfrak{su}(n)}
    \newcommand{\suN}{\mathfrak{su}(N)}
    \newcommand{\acomm}[1]{\{ #1,\balita\hspace{1pt}\}}
    \newcommand{\comm}[1]{[ #1,\balita]}
    \newcommand{\ac}{\scalebox{0.94}{\ensuremath{\mathsf{a}}}} 
\newcommand{\cc}{\scalebox{0.94}{\ensuremath{\mathsf{c}}}} 
\newcommand{\itemb}{\item[$\balita$]}
\newcommand{\fuz}{_{\fay}\hspace{1pt}}
\newcommand{\eeqref}[1]{Eq. \eqref{#1}}
\newcommand{\sgn}{\mtr{sgn}}
\newcommand{\Dg}{D_{\text{\tiny gauge}}}
\newcommand{\Dh}{D_{\text{\tiny Higgs}}}  
\colorlet{tableheadcolor}{gray!19} 
\newcommand{\headcol}{\rowcolor{tableheadcolor}} %
\colorlet{tablerowcolor}{gray!10} 
\newcommand{\topline}{\arrayrulecolor{black}\specialrule{0.1em}{\abovetopsep}{0pt}%
            \arrayrulecolor{tableheadcolor}\specialrule{\belowrulesep}{0pt}{0pt}%
            \arrayrulecolor{black}}
\newcommand{\midline}{\arrayrulecolor{tableheadcolor}\specialrule{\aboverulesep}{0pt}{0pt}%
            \arrayrulecolor{black}\specialrule{\lightrulewidth}{0pt}{0pt}%
            \arrayrulecolor{white}\specialrule{\belowrulesep}{0pt}{0pt}%
            \arrayrulecolor{black}}
\newcommand{\bottomlinec}{\arrayrulecolor{tablerowcolor}\specialrule{\aboverulesep}{0pt}{0pt}%
            \arrayrulecolor{black}\specialrule{\heavyrulewidth}{0pt}{\belowbottomsep}}%
   \let\langleb=\langle
   \let\rangleb=\rangle
\newcommand{\xm}{\xay}
\newcommand{\km}{\kay}
\newcommand{\am}{\aay}
\newcommand{\sm}{\say}
\newcommand{\lm}{\lay}
\newcommand{\gamu}{\gamma^\mu}
\newcommand{\gahmu}{\gamma^{\hat\mu}}
\newcommand{\hmu}{{\hat{\mu}}}
\newcommand{\munu}{{\mu\nu}}
\newcommand{\Fcurv}{\mathscr{F}} 
\newcommand{\ganu}{\gamma^\nu}
\newcommand{\gahnu}{\gamma^{\hat\nu}}
\begin{document}

\title[Yang-Mills--Higgs matrix spectral triples]{On multimatrix models motivated
by \\ random noncommutative geometry II:\\[0pt]
A Yang-Mills--Higgs matrix model}


 \author{Carlos I. Perez-Sanchez}
  
  \address{Faculty of Physics, University of Warsaw \newline \indent  
  ul. Pasteura 5, 02-093, Warsaw, Poland, European Union \newline \indent \&  \newline \indent   Institute for Theoretical Physics, University of Heidelberg \newline \indent
  Philosophenweg 19, 69120 Heidelberg, Germany, European Union   \newline \indent
 \hspace{.0cm}
  }
  
 \email{cperez@fuw.edu.pl, perez@thphys.uni-heidelberg}
 
 \thanks{}
 

\keywords{Noncommutative geometry, random matrices, spectral action, spectral triples, gauge theory, random geometry, fuzzy spaces, multimatrix models, quantum spacetime, Yang-Mills theory, Clifford algebras, almost-commutative manifolds}

 
 
  \fontsize{11}{13.09}\selectfont

\begin{abstract}
  We continue the study of fuzzy geometries inside Connes' spectral
  formalism and their relation to multimatrix models.  In this
  companion paper to [arXiv: 2007:10914, \href{https://doi.org/10.1007/s00023-021-01025-4}{\color{blue!80!black}\textit{Ann. Henri
      Poincar\'e}, 22: 3095–3148, 2021}] we propose a gauge theory setting based on
  noncommutative geometry, which---just as the traditional formulation
  in terms of almost $\!\!\!$-$\!\!\!$ commutative manifolds---has the ability to also
  accommodate a Higgs field. However, in contrast to
  `almost-commutative manifolds', the present framework employs only
  finite dimensional algebras which we call gauge matrix spectral triples.  In a path-integral quantization
  approach to the Spectral Action, this allows to state
  Yang-Mills--Higgs theory (on four-dimensional Euclidean fuzzy space)
  as an explicit random multimatrix model obtained here,
  whose matrix fields exactly mirror those of the Yang-Mills--Higgs theory
  on a smooth manifold.
\end{abstract}
\maketitle

  \fontsize{11.4}{14.0}\selectfont    

\section{Introduction} \label{sec:intro}

The approximation of smooth manifolds by \textit{finite}
\textit{geometries} (or geometries described by finite dimensional
algebras) has been treated in noncommutative geometry (NCG) some time
ago \cite{Landi:1999ey} and often experiences a regain of interest; in
\cite{DAndrea:2013rix,ConnesWvS}, for instance, these arise from
truncations of space to a finite resolution.  In an ideologically
similar vein but from a technically different viewpoint, this paper
addresses gauge theories derived from the Spectral Formalism of NCG,
using exclusively finite-dimensional algebras, also for the
description of the space(time).  This allows one to make precise sense
of path integrals over noncommutative geometries.  Although this
formulation is valid at the moment only for a small class of
geometries, the present method might shed light on the general problem
of quantization of NCG, already tackled using von Neumann's
information theoretic entropy in \cite{EntropySpectral} and
\cite{KhalkhaliQuantization}, by fermionic and bosonic-fermionic
second quantization, respectively.
\par

Traditionally, in the NCG parlance, the term `finite geometry' is
employed for an extension of the \textit{spacetime} or \textit{base
  manifold} (a spin geometry or equivalently
\cite{Reconstruction,RennieVarilly} a \textit{commutative spectral
  triple}) by what is known in physics as `inner space' and boils down
to a choice of a Lie group (or Lie algebra) in the principal bundle
approach to gauge theory.  In contrast, in the NCG framework via the
Spectral Action \cite{Chamseddine:1996zu}, this inner space---called
\textit{finite geometry} and denoted by $F$---is determined by a
choice of certain finite-dimensional algebra whose purpose is to
encode particle interactions; by doing so, NCG automatically rewards us with the
Higgs field.  Of course, the exploration of the right structure of the
inner space $F$ is also approached using other structures,
e.g. non-associative algebras
\cite{BaezHuerta,Furey:2010fm,Jordan,Todorov:2019hlc} for either the
Standard Model or unified theories, but in this paper we restrict
ourselves to (associative) NCG-structures.\par
 
Still in the traditional approach via \textit{almost-commutative
  geometries} $M\times F$ \cite{Stephan:2005uj,CCM,WvSbook}, the
finite geometry $F$ plays the role of discrete extra dimensions or
`points with structure' extending the (commutative) geometry $M$,
hence the name.  What is different in this paper is the replacement of
smooth spin geometries $M$ by a model of spacetime based on
finite-dimensional geometries (`finite spectral triples') known as
\textit{matrix geometry} or \textit{fuzzy geometry}
\cite{barrettmatrix}.  Already at the level of the classical action, these geometries
have some disposition to the quantum theory, as it is known from
well-studied `fuzzy spaces'
\cite{MadoreS2,DolanHuetOConnor,SteinackerFuzzy,SperlingSteinacker,Subjakova:2020haa,Steinacker:2020nva},
which are not always based on Connes' formalism\footnote{Also, other proposals
related to discretizations or truncations 
\cite{DAndrea:2013rix,GlaserStern,GlaserSternZwei,Bochniak:2020nxf}
are (closer to) spectral triples.}. This article lies in the
intersection and treats `fuzzy spaces' inside the Spectral Formalism.

At this point it is pertinent to clarify the different roles of the
sundry finite-dimensional algebras that will appear. Figure \ref{cubo}
might be useful to illustrate why matrix algebras that differ only in size are given different physical nature.  In this cube, pictorially similar to Okun's `cube for natural units' \cite{Okuncube,OkuncubeLandau}, classical Riemannian geometry sits at the origin $(0,0,0)$.
Several NCG-based theories of physical interest may have, nevertheless, the three more general coordinates $(\hbar,1/N,F)$
described now: 
\begin{figure}[h!]  \includegraphics[width=9.6cm]{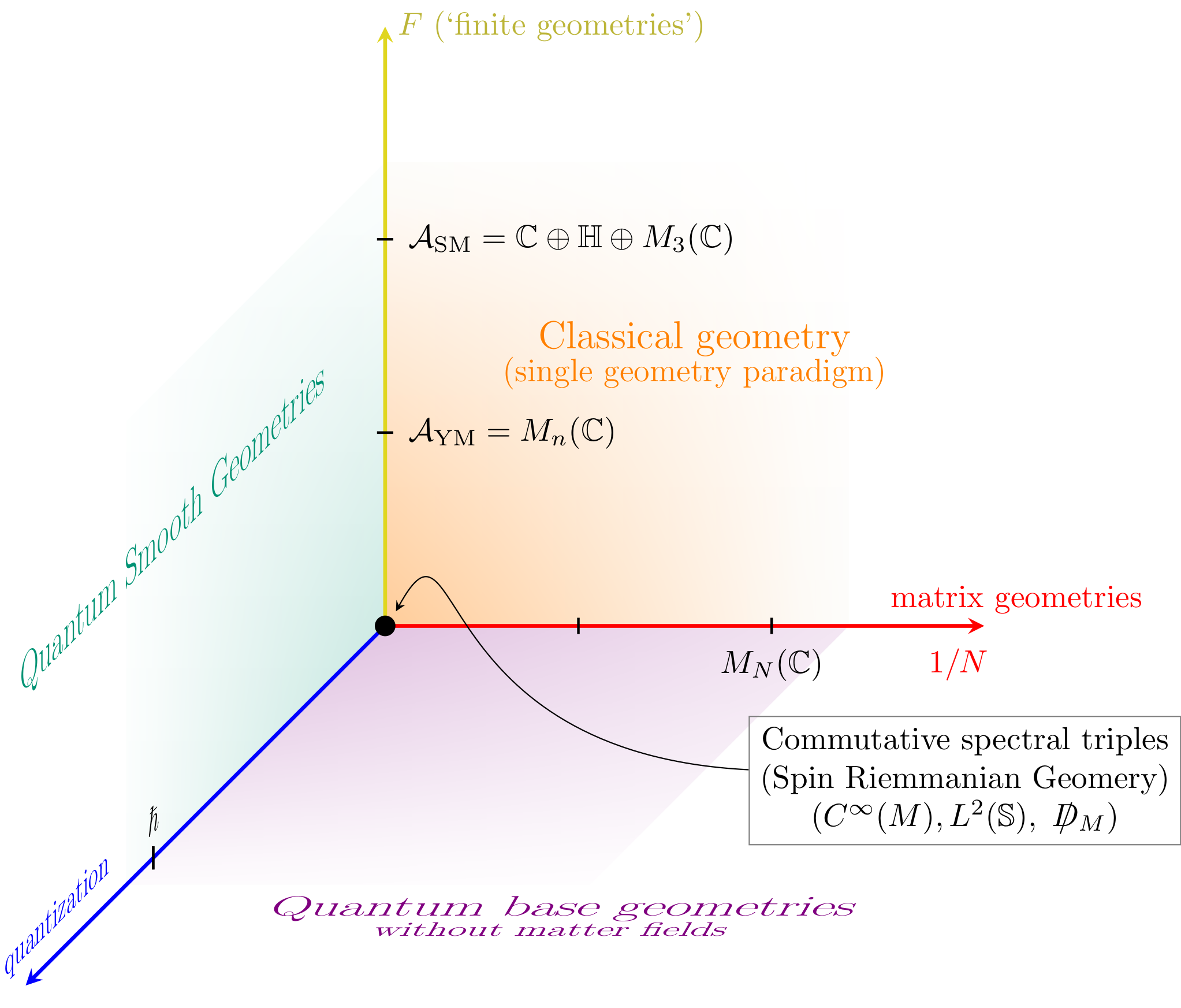}
\caption{Three axis representing independent theories (all inside NCG), 
starting from spin Riemannian geometry at the origin. 
Abbreviations and terminology:  YM=Yang-Mills; SM=Standard Model.
\label{cubo}}
\end{figure}
\begin{itemize} \setlength\itemsep{.4em}%
  \itemb The $F$-direction in Figure \ref{cubo} describes 
  (bosonic) matter fields. Mathematically the possible values for $F$
  correspond to a `finite geometry'. These were classified by Paschke-Sitarz \cite{PaschkeSitarz} and 
  diagrammatically by Krajewski
  \cite{KrajewskiDiagr}.  Particle physics
  models based on NCG and the Connes-Chamseddine spectral action
  \cite{CCM,BarrettSM,DAndrea:2013rix,DabrowskiSitarzDAndrea,BesnardUone,surveySpectral}
  `sit along the $F$-axis'. From those spectral triples $F$, only their algebra appears in Figure \ref{cubo}.

  \itemb A finite second coordinate, $1/N>0$, 
  means that the smooth base manifold that encodes
  space(time) has been replaced by a 
  `matrix geometry', which in the setting  \cite{barrettmatrix} is a spectral triple
  based on an algebra of matrices of size $N$ (and albeit finite-dimensional,
  escaping Krajewski's classification).

  \itemb The remaining coordinate denotes quantization when $\hbar\neq 0$.
  In the path integral formalism, the partition function is a weighted integral
  $Z=\int_{} \dif {\xi}\, \ee^{\ii S(\xi)/\hbar}$ over the space of
  certain class of geometries $\xi$, the aim being the quantization of
  space itself, having quantum gravity as motivation. Here $S$ is the classical action.

\end{itemize}
Accordingly, the planes orthogonal to the axis just described are:
\begin{itemize} \setlength\itemsep{.4em}%
  \itemb \textit{The plane $(\hbar,1/N,0)$ of base geometries}.
  On the marked plane orthogonal to $F$ lie
  `spacetimes' or\footnote{Here the name `base' for is taken from the principal 
  bundle $ G \hookrightarrow P \to M$ terminology, where $M$ is usually the spacetime manifold.} `base manifolds' and, when these are not flat, they
  can model gravity degrees of freedom. If $F=0$, no gauge fields live on such space.
  
     \itemb \textit{The plane $(\hbar , 0, F)=\lim_{N\to \infty} (\hbar,1/N,F) $}.
  On the plane orthogonal to the `matrix geometry' 
  axis,
  one has the quantum, smooth geometries (meaning, their algebra is or contains 
  a $C^\infty(M)$ as factor). The long-term aim is to get to the
  `quantum smooth geometry plane' as matrix algebras become
  large-dimensional, which is something that, at least
  for the sphere, is based on sound statements \cite{Rieffel:2001qw,Rieffel:2007hv,Rieffel:2015aya, Rieffel:2017hbl}
  in terms of Gromov-Hausdorff convergence. Additional to such large-$N$ one might require to
  adjust the couplings to criticality
  \cite{BarrettGlaser,Glaser:2016epw,KhalkhaliPhase}. This can also be
  addressed using doubly scaling limits together with the Functional Renormalization Group to find
  candidates for phase transition; for models still without matter,
  see \cite{FRGEmultimatrix}.
  
  \itemb \textit{The plane $(0,1/N,F)=\lim_{\hbar\to 0} (\hbar , 1/N,F)$ of classical geometries}.
  By `classical geometry' we mean a single
  geometrical object (e.g.  a Lorentzian or Riemannian manifold, a 
  $\mathrm{SU}(n)$-principal bundle with connection, etc.), which can
  be determined by, say, the least-action principle (Einstein
  Equations, $\mtr{SU}(n)$-Yang-Mills Equations, etc.). This is in
  contrast to the quantization of space, which implies a
  multi-geometry paradigm, at least in the path integral approach.
\end{itemize}

The program started here is not as ambitious as to yield physically
meaningful results in this very article, but it has the initiative to
apply three small steps---one in each of the independent directions
away from classical Riemannian geometry---and presents a model in
which the three aforementioned features coexist. This paves the way
for NCG-models of quantum gravity coupled to the rest of the
fundamental interactions (it is convenient to consider the theory as a
whole, due to the mutual feedback between matter and gravity sectors
in the renormalization group flow; cf.  \cite{Dona:2013qba} for an
asymptotic safety picture). For this purpose we need the next
simplifications, illustrated in Figure \ref{fig:cubodos}:

\begin{itemize} \setlength\itemsep{.4em} \itemb Our choice for the
  finite geometry $F$ is based on the algebra $\A_F=\Mn$ ($n\geq 2$).
  This is the first input, aiming at a $\mathrm{SU}(n)$ Yang-Mills
  theory.

  \itemb Instead of the function algebra on a manifold, we take a
  simple matrix algebra $M_N(\C)$. This is an input too. (Also $N$ is
  large and $n$ need not be.)

  \itemb We use random geometries instead of honest quantum
  geometries; this corresponds with a Wick rotation from
  $ \ee^{\ii S(\xi)/ \hbar } $, in the partition function, towards the
  Boltzmann factor $ \ee^{- S(\xi)/\hbar}$.  This setting is often referred to as
  \textit{random noncommutative geometry}
  \cite{Glaser:2016epw,BarrettDruceGlaser}.
\end{itemize}
\begin{figure}[h]
\includegraphics[width=9cm]{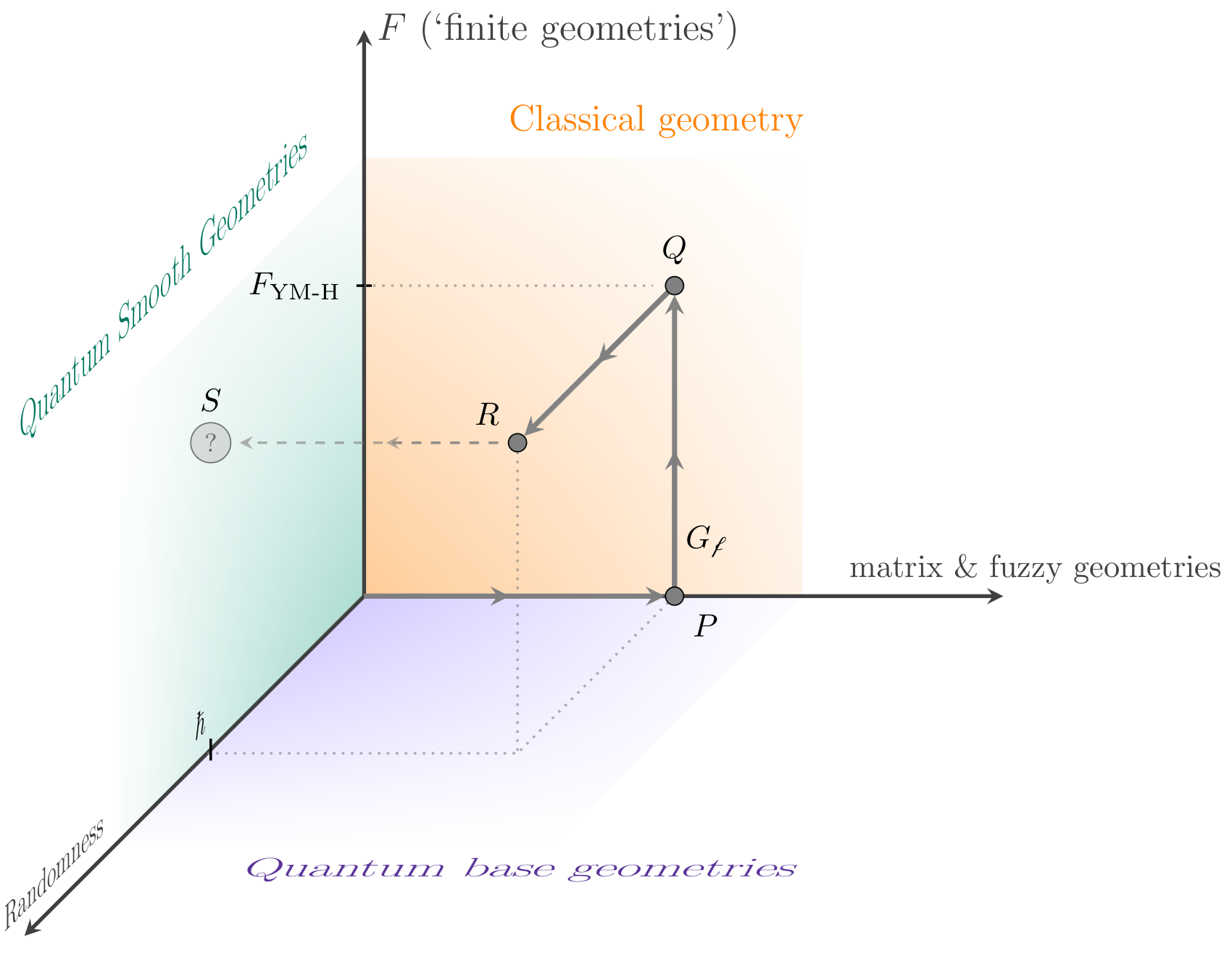}
\caption{
Depicting the organization of this article,
following the path $PQR$. Here, $F_\YMH={(M_n(\mathbb C), M_n(\mathbb C),D_F)}$ 
corresponds to the spectral triple for the Yang-Mills--Higgs theory
and $G_\fay$ is a fuzzy 4-dimensional geometry. As outlook (dashed),  
to reach a smooth geometry
at the point $S$ one needs a sensible limit 
(e.g. large-$N$ and possibly tuning some parameters to criticality) in order to achieve phase transition\label{fig:cubodos}
}
\end{figure}

Random NCG was introduced in \cite{BarrettGlaser}.  While aiming at
numerical simulations for the Dirac operators, Barrett-Glaser stated
the low-dimensional geometries as a random matrix model.  The Spectral
Action of these theories was later systematically computed for general
dimensions and signatures in \cite{SAfuzzy}.  Also, in the first part
of this companion paper, the Functional Renormalization Group to
multimatrix models \cite{FRGEmultimatrix} inspired by random
noncommutative geometry was addressed for some two-dimensional models
obtained in \cite{SAfuzzy}. Solution of the matrix-models
corresponding to one-dimensional geometries was addressed in
\cite{KhalkhaliTR}, using Topological Recursion \cite{EynardOrantin}
(due to the presence of multitraces, in its blobbed
\cite{BorotBlobbed} version).

The organization of the article is as follows. Next section introduces
fuzzy geometries as spectral triples and gives Barrett's
characterization of their Dirac operators in terms of finite
matrices. Section \ref{sec:MatrixSpin} interprets these as variables
of a `matrix spin geometry' for the $(0,4)$-signature.  Section
\ref{sec:GeneralFinACGeom} introduces the main object of this article,
\textit{gauge matrix spectral triples}, for which the spectral
action is identified with Yang-Mills theory, if the piece $D_F$ of
Dirac operator along the `inner space spectral triple'\footnote{This
  is usually referred to as `finite spectral triple' but in this paper
  all spectral triples are finite dimensional.} vanishes, and with
Yang-Mills--Higgs theory, if this is non-zero, $D_F\neq 0$ (see
Sec. \ref{sec:YMH}).  Our cutoff function $f$ appearing in the
Spectral Action $\Tr_\H f(D)$ is a polynomial $f$ (instead of a bump
function\footnote{This is not the first time that the Connes-Chamseddine regulating
function $f$ does not appear and instead a polynomial is used, e.g. see the approach by
\cite{MvS} in the spin network context.}).  In Section
\ref{sec:smoothlim} we make the parallel of the result with ordinary
gauge theory on smooth manifolds.  Finally, Section \ref{sec:Conclusions}
gives the conclusion and  Section \ref{sec:Outlook} the outlook, while also stating the explicit
Yang-Mills--Higgs matrix model for further study.
\par

This article is self-contained, but some familiarity with spectral
triples helps. Favoring a particle physics viewpoint, we kept the
terminology and notation compatible with \cite{WvSbook}.
\tableofcontents

\section{Spectral Triples and Fuzzy Geometries} \label{sec:STandFuzzy} 

Let us start with Barrett's definition of fuzzy geometries that makes
them fit into Connes' spectral formalism.
 
\begin{definition} \label{def:fuzzy} A \textit{fuzzy geometry} is
  determined by
  \begin{itemize} \setlength\itemsep{.4em} \itemb a \textit{signature}
    $(p,q)\in \Z_{\geq 0}$, or equivalently, by
    \[\eta =
      \diag(\underbrace{+,\ldots,+}_{p},\underbrace{-,\ldots,-}_q)=\diag(+_p,-_q)\]
    \itemb three signs $\epsilon, \epsilon',\epsilon''\in \{-1,+1\}$
    fixed through $s$ by the following table: \vspace{1pt}

  \centering
  \vspace{5pt}
\begin{tabular}{ccccccccc}
  \topline
  \headcol $s\equiv q-p \,\,\mtr{ mod }\, 8$ & 0 & 1 & 2 & 3 & 4& 5 &6 &7  \\
  \midline
  $\epsilon$ & $+$ & $+$ & $-$ & $-$ &$-$&$-$& $+ $ &$+ $ \\
 $\epsilon'$ & $+$ & $-$ & $+$ & $+$ &$+$&$-$&$+$&$+$ \\
  $\epsilon''$ & $+$ & + & $-$ & + &$+$& +&$-$& +\\
  \bottomlinec
\end{tabular}
\flushleft
\end{itemize}
\begin{itemize} \setlength\itemsep{.4em} \itemb a matrix algebra
  $\A\fuz=\MN$ \itemb a Clifford $\mathcal{C}\ell(p,q)$-module $V$ or
  \textit{spinor space} \itemb a \textit{chirality}
  $\gamma\fuz=\gamma\otimes 1_\A:\H\fuz \to \H\fuz$ for the vector
  space $\H\fuz = V\otimes M_N({\C})$ with inner product
  \[\qquad \quad\langleb v\otimes T, w\otimes W \rangleb = (v,w) \TrN(T^*W)\, \text{}\,  \]
  for all $T,W\in \MN$ and $v,w\in V$. 
  To wit $\gamma:V\to V$ is self-adjoint with respect to the Hermitian
  form $(v,w)= \sum_a \bar v_a w_a$ on $V \cong \C^{k}$ and satisfying
  $\gamma^2=1 $. This $k$ is so chosen as to make $V$ irreducible for
  even $s$. Only the $\pm1$-eigenspaces of $V$ with the grading
  $\gamma$ are supposed to be irreducible, if $s$ is odd

  \itemb a left-$\A\fuz$ \textit{representation} on $\H\fuz$,
  $\varrho(a)(v\otimes W) = v\otimes (a W)$, for $a\in \A\fuz$ and
  $W\in \MN$.  The representation $\varrho$ is often implicit
 
  \itemb an anti-linear isometry, called \textit{real structure},
  $J\fuz:=C\otimes *:\H\fuz \to \H\fuz$ given in terms of the
  involution $*$ (in physics represented by $\dagger$) on the matrix
  algebra and $C:V\to V$ an anti-linear operator satisfying, for each
  gamma matrix,
 \begin{align}\label{C}
 C^2= \epsilon   
 \mbox{ and }\gamma^\mu C = 
 \epsilon ' C \gamma^\mu  
 \end{align}

\itemb  a self-adjoint operator $D$ on $\H$
satisfying the \textit{order-one condition}
\begin{equation}
 \big[\hspace{.51pt}[ D\fuz ,\varrho(a)]\,, J\fuz\varrho(b)J\fuz\inv\big]=0 \qquad \mbox{for all }a,b\in \A
\label{orderone}
\end{equation}

\itemb  the condition\footnote{This condition
does not appear in list given by Barrett and in fact 
follows from the construction of the explicit $\gamma$ matrices,
so it is tautological but useful to emphasize, as it also appears in 
the smooth case \cite{WvSbook}. Barrett
also allows algebras $\A$ over $\re,\C$ and $\mathbb H$; 
for quaternion coefficients, $M_{N/2}(\mathbb H)\subset  M_{N}({\C})$.}  $D\gamma\fuz  = -\gamma\fuz D  $ for even $s$. Moreover, the three 
signs above impose: \vspace{-.4cm}\begin{subequations}%
   \begin{align}%
    J^2\fuz& =\epsilon \, , \label{redundante}\\
 J\fuz D\fuz & =\epsilon' D\fuz J\fuz \,,  \\
 J\fuz\gamma\fuz & =\epsilon '' \gamma\fuz J\fuz\,.
 \end{align}%
\label{signos}%
\end{subequations}%
 \vspace{-.4cm}
\end{itemize}%
Notice that, in this setting, the square of $J\fuz$ is obtained from
$C$ as specified above, but we added the redundant \eeqref{redundante},
as this equation appears so for general real, even spectral triples.
For $s$ odd, $\gamma\fuz$ can be trivial $\gamma\fuz=1_{\H}$.  The
number $d:=p+q$ is the \textit{dimension} and $s:=q-p$ (mod $8$) is
the \textit{KO-dimension}.
\end{definition}%
\begin{remark} \label{rmk:commrel} It will be useful later to stress
  that the `commutant property' (cf. for instance
  \cite[eq. 4.3.1]{WvSbook})
\begin{align}
\label{LRaxiom}
[a,J b^* J\inv ] =0\,,\qquad \text{for all } a,b \in\A \,,
\end{align}
which is typically an axiom for spectral triples, is not assumed in our
setting. However, one can show that it is a consequence of those in
Definition \ref{def:fuzzy}. The axiom states that the right
$\A$-action $\psi b:=b^{\mtr o} \psi = J b^* J\inv \psi $, for
$b \in \A, \psi \in \H$, commutes with the left $\A$-action 
$\varrho$ for each $a,b\in \A $. Since $J=C\otimes *$, and the algebra
acts trivially on $V$,
\begin{align} \label{RLcommute}
 a b^{\mtr o} (v\otimes m ) = 
 a J  ( v \otimes b^* m^* ) &=   
  v \otimes (a  m b  ) \\
 & = b^{\mtr o}  a (v\otimes m)\,,  \quad v \in V, m\in \MN  \,. \nonumber
\end{align}
\end{remark}

The focus of this paper is dimension four, but we still proceed in
general dimension.
We impose on the 
gamma matrices $\gamma^\mu$ the following conditions:
\begin{subequations} \label{emusquare}
\begin{align}
(\gamma^\mu)^2&=+1_V, &&  \text{and } \gamma^\mu  \mbox{ Hermitian for } & & \mu=1,\ldots, p,   \\ 
(\gamma^\mu)^2&=-1_V, && \text{and } \gamma^\mu \mbox{ anti-Hermitian for } & & \mu=p+1,\ldots,p+ q.
\end{align}
\end{subequations}%
Since it will be convenient to treat several signatures simultaneously,
we let 
$(\gamma^\mu)^2=: e_\mu 1_V$ for each $\mu=1,\ldots,d$.
According to Eqs. \eqref{emusquare}, one thus obtains 
the unitarity of all gamma-matrices:
\[
(\gamu v,\gamu w)= 
( (\gamu)^* \gamu  v,  w)
= (e_\mu \gamu \gamu  v, w)
=(e_\mu)^2 ( v, w)=( v, w) \qquad 
\]
without implicit sum, and for each $v,w \in V$. Let these matrices generate
$\Omega:= \langleb \gamma^1,\ldots,\gamma^d \rangleb_\re$ as algebra,
for which one obtains a splitting $ \Omega = \Omega^+\oplus \Omega ^-$
where $\Omega^\pm$ is contains products of even/odd number of
gamma-matrices.  According to \cite[Eq. 64]{barrettmatrix}, the Dirac
operator $D\fuz$ solves the axioms of an even-dimensional fuzzy
geometry whenever it has the next form:
\begin{align} \label{DiracCharact}\raisetag{8pt}
 D\fuz (v\otimes T) & =
\sum_{I } \gamma^I v  \otimes \{ K_I , T \}_{e_I}\text{ and } e_I\in \{+1,-1\}\,,\\
\{A,B\}_{\pm}& := AB \pm BA  \,,\nonumber
\end{align} 
where $T\in \MN$ and the sum is over increasingly ordered
multi-indices $I=(\mu_1,\ldots,\mu_{2r-1})$ of odd length.  With such
multi-indices $I$ the following product
$\gamma^I:=\gamma^{\mu_1}\cdots \gamma^{\gamma_{2r-1}} \in \Omega^-$
is associated (the sum terminates after finitely many terms, since
gamma-matrices square to a sign times $1_V$). Moreover, still as part
of the characterization of $D\fuz$, $e_I$ denotes a sign chosen
according to the following rules:
\begin{itemize} \setlength\itemsep{.4em} \itemb if $\gamma^I$ is
  anti-Hermitian (so $e_I=-1$), then $\{ K_I , T \}_{e_I}=[L_I,T] $,
  i.e. $ \{ K_I , \balita \}_{e_I}$ is a commutator of the
  anti-Hermitian matrix $K_I$ (denoted by $L_I$); and \itemb if
  $\gamma^I$ is Hermitian, so must be $K_I$, which will be denoted by
  $H_I$. Then $e_I=+1$, and $\{ K_I , T \}_{e_I} =\{H_I,T\}$, so
  $\{ K_I , \balita \}_{e_I}$ is an anti-commutator with a Hermtian
  matrix $H_I$.

\end{itemize}

\begin{example} \label{ex:DiracSignatures}
Some Dirac operators of fuzzy $d$-dimensional geometries, $d=2,3,4$
in several `types' (or signatures) $(p,q)$.

\begin{itemize} \setlength\itemsep{.4em}
 \itemb  \textit{Type} (0,2). Then $s=d=2$, so $\epsilon'=1$.
 The gamma matrices are anti-Hermitian
 and satisfy $(\gamma^i)^2=-1$. The Dirac operator is
 \[ D\hp{0,2}\fuz=\gamma^1 \otimes [L_1,\balita] +\gamma^2 \otimes [L_2,\balita]
\] 
\itemb \textit{Type} (0,3), $s=3$.  In this signature, the gamma
matrices can be replaced for the quaternion units $\iay,\jay$ and
$\kay$ to express the $(0,3)$-geometry Dirac operator
as\footnote{\label{foot:simplification}This formula differs from the
  most general $(0,3)$-geometry Dirac operator
  \cite[Eq. 73]{barrettmatrix} spanned by eight gamma matrices, since
  ours corresponds to a simplification (also addressed in \S V. A of
  op. cit.) byproduct of $V$ being irreducible and the product of all
  gamma matrices being a scalar multiple of the identity.}
\[D\hp{0,3}\fuz=\{H,\balita\} + \iay [L_1,\balita]+ \jay
  [L_2,\balita]+\kay [L_3,\balita] \] \itemb \textit{Type} (0,4),
$s=4$, \textit{Riemannian}.  Since the triple product of anti-Hermitian gamma
matrices is self-adjoint,
$(\gamma^\alpha\gamma^\mu\gamma^\nu)^*=(-)^3\gamma^\nu\gamma^\mu\gamma^\alpha=\gamma^\alpha\gamma^\mu\gamma^\nu$,
so are the operator-coefficients, which have then the form
$ \{H_{\alpha\mu\nu},\balita\}$ for
$(H_{\alpha\mu\nu})^*=H_{\alpha\mu\nu}$:
        \begin{align*}
     \qquad D\hp{0,4}\fuz& =  \sum_\alpha\gamma^\alpha \otimes [L_\alpha,\balita]+ \sum_{\kappa<\lambda <\mu} 
\underbrace{\gamma^\kappa  \gamma^\lambda \gamma^\mu }_{\gamma^{\hat \rho}}\otimes  \{\underbrace{H_{\kappa\lambda \mu}}_{H_{\hat\rho}},\balita\}
     \\ 
     &= \sum_\rho\gamma^\rho \otimes [L_\rho,\balita]+  
     \gamma^{\hat \rho} \otimes  \{H_{\hat \rho},\balita \}
     \quad \text{ ($\{\rho,\kappa,\lambda, \mu\}=\{0,1,2,3\}$) }
     \end{align*}
     where $\gamma^{\hat\rho}$ means the product of gamma matrices
     with indices different from $\rho$, multiplied in ascending order; see the restriction 
     in the sum in the expression for $D\hp{0,4}\fuz$.
     \itemb  \textit{Type}  (1,3), $s=2$, \textit{Lorentzian}.  Let $\gamma^0$
       be the time-like gamma matrix, i.e. the only one squaring to $+1$. Then
  \begin{align} \nonumber
     D\hp{1,3}\fuz & = \gamma^0 \otimes \acomm{H_0}+
     \sum_i \gamma^i  \otimes   \comm{L_i} \\[-12pt]
     &   + \sum_{i<j}\gamma^0 \gamma^i\gamma^j \otimes \comm{L_{ij}}
       +  \overbrace{\gamma^1  \gamma^2 \gamma^3}^{\gamma^{\hat 0}}\otimes  \{H_{{\hat 0}} ,\balita\} 
        \end{align}
\end{itemize}
\end{example}
In the sequel we use $K_I$ generically for either $H_I$ or $L_I$,
whose adjointness-type is then specified by the signature and by
$I$. We also define the sign $e_{I}$ by $K_I ^*:= e_I K_I$, or
equivalently by $(\gamma^I)^*= e_I \gamma^I$, for a multi-index
$I$. In four dimensions, one has for triple indices $I=\hat \mu$
\cite[App. A]{SAfuzzy}
\begin{align}
  e_{\hat \mu}
= e_\mu (-1)^{q+1}\hspace{1cm} 1\leq\mu\leq d=p+q=4,\mbox{ for signature } (p,q)\,. \label{mhm}
\end{align}
In summary, a fuzzy geometry of signature ($p,q$) has following
objects:
\begin{itemize}\setlength\itemsep{.4em}
  \itemb $\A\fuz=\MN$
  \itemb $\H\fuz=V \otimes \MN$, Hilbert-Schmidt
  inner product on $\MN$
  \itemb a representation of $\A\fuz$ on
  $\H\fuz$, $\varrho(a) (v\otimes T)= v\otimes a T$
  \itemb $D\fuz $
  given by \eeqref{DiracCharact}
  \itemb $J\fuz=C\otimes *$ with $C$
  anti-linear satisfying \eeqref{C}
  \itemb  $\gamma\fuz=\gamma \otimes 1_{\MN}$, with $\gamma$ constructed from
  all $\gamma$-matrices; see \eeqref{gammadef} for $d=4$
 \end{itemize}
Although next equation is well-known, we recall it due of its recurrent usefulness later. 
In any dimension and signature, it holds:
\begin{align}\label{tr_4gammas} 
\TrV(\gamma^\mu\gamma^\nu \gamma^\alpha \gamma^\rho) &= \dim V \cdot  \bigg(
\raisebox{-.40\height}{\includegraphics[height=1.2cm]{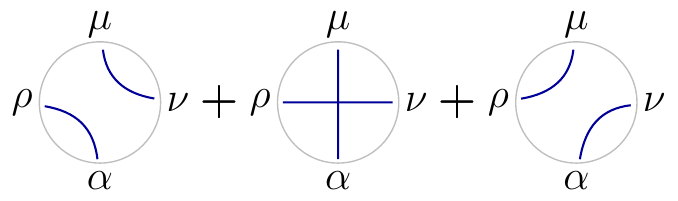}  } \!\! \bigg) \\
& = \dim V \cdot  ( \eta^{\mu\nu} \eta^{\alpha \rho}
-\eta^{\mu \alpha } \eta^{ \nu\rho }
+\eta^{\mu \rho} \eta^{\nu \alpha}
) \nonumber 
\end{align}
Each inscribed segment in the \textit{chord diagrams} denotes an
index-pairing between two indices labeling their ends, say $\lambda$
and $\theta$, which leads to $\eta^{\lambda \theta}$; all the pairings
of each diagram are then multiplied bearing a total sign corresponding
to $(-1)$ to the number of simple chord crossings.  This picture is
helpful to compute traces of more gamma-matrices, but is not essential
here; see \cite{SAfuzzy} to see how the spectral action for fuzzy
geometries was computed by associating with these chord diagrams
noncommutative polynomials in the different matrix blocks $K_I$
composing the Dirac operator. Incidentally, notice that so far this
chord diagram expansion is classical, unlike that treated by Yeats in
\cite[\S 9]{Yeats}, which appears in the context of Dyson-Schwinger
equations.  \allowdisplaybreaks[2]
\section{Towards a `Matrix Spin Geometry'}\label{sec:MatrixSpin}

We restrict the discussion from now on to dimension four, leaving the
geometry type (KO-dimension) unspecified.  Next, we elaborate on the
similarity of the fuzzy Dirac operator and the spin-connection part
spanned by multi-indices, which has been sketched in \cite[Sec. V \S
A]{barrettmatrix} for $d=4$.  The identification works only in
dimensions four and, if `unreduced' (cf. Footnote
\ref{foot:simplification} above) also three. For higher dimensions, quintuple
products appear; for lower ones, triple products are absent. Although
it would be interesting to address each dimensionality separately,
since the physically most interesting case is dimension $4$, we stick to
it.  \vspace{.2cm}

\noindent 
\begin{remark}
Since some 
    generality might be useful for the future, or
    elsewhere (e.g. in a pure Clifford algebra context), even though
    we identify the geometric meaning only for the objects in
    Riemannian signature, we prove most results in general
    signature.
\end{remark}

For a Riemannian spin manifold $M$, recall the local expression (on an
open $U\subset M$) of the canonical Dirac operator on the spinor
bundle $\mathbb S\to M$ for each section $\psi$ there,
\begin{subequations}
\begin{align} (D_M \psi) (x) &=  \ii\Gamma^j(x) \nabla_j^S \psi(x)\,, \text{ for $x\in U$ and $\psi \in 
\Gamma^\infty (U,\mathbb S)$,} \qquad \\
  \nabla_i^S& = \partial_i + \omega_i\,.
\end{align}
\end{subequations}%
The coefficients
$ \omega_i= \frac{1}{2} \omega_i^{\mu\nu} \gamma_{\mu\nu}$ of the spin
connection $\nabla^S$ (the lift of Levi-Civita connection) are here
expressed with respect to a base
$\gamma_{\mu\nu}= \frac14 [\gamma_\mu, \gamma_\nu]$ that satisfies the
$\mathfrak o(4)$ Lie algebra in the spin representation (see e.g. \cite[\S
11.4]{ConnesMarcolli}). The gamma matrices with Greek indices (or
`flat') $\gamma^\mu$ relate to the above
$\Gamma^i(x) = e^i_\mu \gamma^\mu $ by means of \textit{tetrads}
$e^i_\mu (x) $. The coefficients $e_\mu^i\in C^\infty (U)$, by
definition, make of the set of fields
$(E_\mu)_{\mu=0,1,2,3}= (e_\mu^i \cdot \partial_i)_{\mu=0,1,2,3}$ an
orthonormal basis of $\mathfrak{X}(U)$ with respect to the metric $g$
of $M$, which is to say $g(E_\mu,E_\nu)= \eta_{\mu\nu}$.  Thus
$\{\Gamma^i(x), \Gamma^j (x) \}=2g^{ij}(x)=2(g\inv)_{ij}(x) $ for
$x\in U$, but $\{\gamma^\mu,\gamma^\nu\}=2\eta^{\mu\nu}$.  In contrast
to the commutation relations that the elements of the coordinate base
$\partial_i=\partial/ \partial x^i$ satisfy, one generally has
$[E_\mu,E_\nu]\neq 0$ for the non-coordinate base $E_0,\ldots, E_3$,
also sometimes called \textit{non-holonomic} \cite[\S
4]{TorresDiffManifolds}.  Notice that in the fuzzy setting only Greek
indices 
appear. \begin{quote} \textit{This, together with the fact that rather
    $\eta^{\mu\nu}$ instead of $g^{ij}$ appears in the Clifford
    algebra, should \underline{not} be interpreted at this stage as
    flatness. Instead, for fuzzy geometries the equivalent of a metric
    is encoded in the the signature $\eta=\diag (e_0,\ldots, e_3)$
     \underline{and} in the matrices parametrizing the Dirac
    operator. }\end{quote}
In Riemannian signature, we rewrite\footnote{ The restrictions
  $0 < \sigma\leq 3$ and $ 1<\nu\leq 3$ account for the appearance in
  \eeqref{Fuzzyconnection} of exactly three gamma matrices whose
  indices are increasingly ordered, as in the characterization
  \eqref{DiracCharact} of fuzzy Dirac operators.  To match the
  canonical Dirac operator on a spinor bundle, one could redefine $H$,
  fully anti-symmetrize, and compare expressions.}
(cf. Ex. \ref{ex:DiracSignatures})
\begin{subequations}
\begin{align}
D\fuz & = \sum_\mu (\gamma^\mu\otimes 1_N ) (\nabla^{S}\fuz)_\mu \,, \\
 (\nabla^{S}\fuz)_\mu & =  1_V \otimes  [ L_\mu, \balita ] 
+ \sum_{\substack{0 < \sigma\leq 3 \\ (\mu < \sigma)}} \, \sum_{\substack{1 < \nu \leq 3\\ ( \sigma <\nu)}} \gamma^ \sigma \gamma^\nu \otimes 
 \{ H_{\mu \sigma\nu}, \balita \}_{e_{\mu \sigma\nu} } \,. \label{Fuzzyconnection}
\end{align}%
\label{Dirac4d_explicit}%
\end{subequations}%
Simultaneously (up to the trivial factor $1_V$), we identify the
commutators $ [ L_\mu,\balita ] $ with
$\ii E _\mu=\ii e_\mu^j \partial_j $ and the coefficients of the
triple gamma products
$ \{H_{\mu \sigma\nu}, \balita \}_{e_{\mu \sigma\nu} }$ with the full
anti-symmetrization
$\frac{\ii}{4} \omega_{[\mu| ik }^{\phantom{.}} e^i_{|\sigma}
e^k_{\nu]} $ of the spin connection coefficients in the three Greek
indices.  The triple products of gamma-matrices present in the Dirac
operator \eqref{Dirac4d_explicit} are the analogue of those in the
spin connection appearing in
$D_M= \ii \gamma^\mu ( E_\mu + e_\mu^i \omega_i)$, here in the `flat'
(non-holonomic or non-coordinate) basis $E_0,\ldots,E_3$. Altogether,
$\nabla^{S}\fuz$ can be understood as the matrix spin connection.

We let $\Delta_4=\{0,1,2,3\}$ and denote by
$ \delta_{\mu\nu\alpha\sigma}$ the fully symmetric symbol with indices
in $\Delta_4$, which is non-vanishing (and then equal to 1) if and
only if the four indices are all different; equivalently,
$\delta_{\mu\nu\alpha\sigma} = |\epsilon_{\mu\nu\alpha\sigma}|$, in
terms of the (flat) Levi-Civita symbol $\epsilon$.
\\

\noindent
\textsc{Remark on notation.} 
Specially when dealing with fuzzy geometries, we sometimes do not use Einstein's summation (traditional in differential geometry). 
We avoid raising and lowering indices as well, e.g. gamma matrices are presented only
with upper indices.   We set $k=(k_\mu)_{\mu \in \Delta_4}, K=(K_\mu)_{\mu \in \Delta_4}, x=(x_\mu)_{\mu\in \Delta_4}$, et cetera.

\begin{lemma}\label{thm:gamma13}
For any $\mu,\nu\in \Delta_4=\{0,1,2,3\}$ the following relations are satisfied
for any signature $\eta=\diag(e_0,e_1,e_2,e_3)$ in four dimensions:  
\begin{subequations} \label{extwolemmas}
 \begin{align}\label{gammas13}
\gamma^\mu \gamma^{\hat \nu}
&= 
(-1)^{\mu}\Big ( 
\delta^{\mu}_\nu \gamma^0 \gamma^1\gamma^2\gamma^3
+\mtr{sgn}(\nu-\mu)\sum_{\alpha < \sigma} \delta_{\mu\nu\alpha\sigma} e_\mu \gamma^\alpha \gamma^\sigma
\Big)
\,, \\[-4pt]
\gamma^{\hat \mu}\gamma^\mu \label{gammas13b}
&= - \gamma^\mu \gamma^{\hat \mu}
\,, \\[5pt]
 \gamma^{\hat \nu}\gamma^\mu
&= +\gamma^\mu \gamma^{\hat \nu} \qquad (\nu\neq\mu)
\,,  \label{gammas13c} \\
\label{eq:gammastriples}
 \gamma^{\hat \mu }\gamma^{\hat \nu } &
 = (-1)^{1+|\mu-\nu|}\sum_{\rho,\lambda}\frac{1}{2}
 \delta_{\mu\nu\lambda\rho}e_\lambda e_\rho \gamma^\mu\gamma^\nu -1_V 
\delta^\mu_{\nu} \cdot e_\mu\cdot  \det (\eta) \,.
\end{align}  
\end{subequations}
\end{lemma}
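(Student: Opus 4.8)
The plan is to reduce everything to the two defining properties of the Clifford generators, namely $\gamma^\mu\gamma^\nu=-\gamma^\nu\gamma^\mu$ for $\mu\neq\nu$ and $(\gamma^\mu)^2=e_\mu 1_V$, both immediate from \eqref{emusquare}. Throughout I abbreviate the (unnormalized) chirality by $\chi:=\gamma^0\gamma^1\gamma^2\gamma^3$ and record once and for all the conjugation rule $\gamma^\mu\gamma^j(\gamma^\mu)^{-1}=(-1)^{1-\delta^\mu_j}\gamma^j$, where $(\gamma^\mu)^{-1}=e_\mu\gamma^\mu$: conjugating by $\gamma^\mu$ flips the sign of every generator of index $\neq\mu$ and fixes $\gamma^\mu$. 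Applied to any product of distinct generators, the accumulated sign is $(-1)^{m}$, with $m$ the number of factors of index $\neq\mu$. This single observation will dispatch \eqref{gammas13b} and \eqref{gammas13c} for free and supply the scalar facts needed for \eqref{eq:gammastriples}.

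For \eqref{gammas13} I split into the diagonal and off-diagonal cases. When $\mu=\nu$, the matrix $\gamma^{\hat\mu}$ is the ascending product over the three indices $\neq\mu$, and sliding the leading $\gamma^\mu$ into its ordered slot costs exactly $\mu$ transpositions, so $\gamma^\mu\gamma^{\hat\mu}=(-1)^\mu\chi$; this is the $\delta^\mu_\nu$ term, while the second term is killed by $\sgn(0)=0$. When $\mu\neq\nu$ the generator $\gamma^\mu$ occurs once inside $\gamma^{\hat\nu}$, so I slide the prefactor $\gamma^\mu$ rightward until the two copies collide, collapse them via $(\gamma^\mu)^2=e_\mu$, and am left with the ascending product $\gamma^\alpha\gamma^\sigma$ over the complementary pair $\{\alpha,\sigma\}=\Delta_4\setminus\{\mu,\nu\}$, exactly what $\sum_{\alpha<\sigma}\delta_{\mu\nu\alpha\sigma}$ selects. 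The one delicate point is the sign: the number of transpositions is $\mu$ or $\mu-1$ according to whether $\nu>\mu$ or $\nu<\mu$, which is precisely the bookkeeping carried by the factor $(-1)^\mu\sgn(\nu-\mu)$.

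Relations \eqref{gammas13b} and \eqref{gammas13c} then fall out of the conjugation rule with no further work. Since $\gamma^{\hat\mu}$ is a product of three generators all of index $\neq\mu$, conjugation gives $\gamma^\mu\gamma^{\hat\mu}(\gamma^\mu)^{-1}=(-1)^3\gamma^{\hat\mu}=-\gamma^{\hat\mu}$, i.e. $\gamma^{\hat\mu}\gamma^\mu=-\gamma^\mu\gamma^{\hat\mu}$, which is \eqref{gammas13b}; whereas for $\nu\neq\mu$ the triple $\gamma^{\hat\nu}$ carries $\gamma^\mu$ once together with two other generators, so the count is $(-1)^2=+1$ and $\gamma^\mu$ commutes with $\gamma^{\hat\nu}$, which is \eqref{gammas13c}. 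The same count (four distinct factors, one matching a given $\gamma^\rho$, so three anticommutations) shows that $\chi$ anticommutes with every $\gamma^\rho$, and a short reordering gives $\chi^2=\det(\eta)$; both are needed next.

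The last identity \eqref{eq:gammastriples} is where the genuine computation lives, and I would route it through the factorization $\gamma^{\hat\mu}=(-1)^\mu e_\mu\gamma^\mu\chi$, obtained by left-multiplying $\gamma^\mu\gamma^{\hat\mu}=(-1)^\mu\chi$ by $(\gamma^\mu)^{-1}=e_\mu\gamma^\mu$. Then $\gamma^{\hat\mu}\gamma^{\hat\nu}=(-1)^{\mu+\nu}e_\mu e_\nu\,\gamma^\mu\chi\gamma^\nu\chi$; pushing the middle $\chi$ past $\gamma^\nu$ (one sign) and applying $\chi^2=\det(\eta)$ collapses this to $-(-1)^{\mu+\nu}e_\mu e_\nu\det(\eta)\,\gamma^\mu\gamma^\nu$. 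For $\mu\neq\nu$ one has $e_\mu e_\nu\det(\eta)=e_\alpha e_\sigma$ with $\{\alpha,\sigma\}$ the complementary pair and $(-1)^{\mu+\nu}=(-1)^{|\mu-\nu|}$, which matches the first term (the prefactor $\tfrac12\sum_{\rho,\lambda}\delta_{\mu\nu\lambda\rho}e_\lambda e_\rho$ is merely $e_\alpha e_\sigma$ written symmetrically); for $\mu=\nu$ the $\delta$-symbol annihilates the first term and the identical computation yields $(\gamma^{\hat\mu})^2=-e_\mu\det(\eta)\,1_V$, the second term. The only subtlety throughout is keeping the signs honest in \eqref{gammas13} and \eqref{eq:gammastriples}, since all the geometric content is already encoded in the anticommutation relations; I expect the sign bookkeeping, not any conceptual difficulty, to be the main obstacle.
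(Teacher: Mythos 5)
Your proof is correct, and for the first three identities it is essentially the paper's argument: the same split of \eqref{gammas13} into the diagonal case (sliding $\gamma^\mu$ through $\mu$ transpositions to form $\gamma^0\gamma^1\gamma^2\gamma^3$) and the off-diagonal case (collapsing the repeated $\gamma^\mu$ via $(\gamma^\mu)^2=e_\mu$, leaving the ascending complementary pair), and the same transposition counts for \eqref{gammas13b} and \eqref{gammas13c}. Where you genuinely diverge is in two places. First, you replace the paper's sign determinations ``by explicit computation'' (checking which pairs $(\mu,\nu)$ give $\pm$) with a uniform counting argument --- the number of generators of index $\neq\mu$ preceding $\gamma^\mu$ inside $\gamma^{\hat\nu}$ is $\mu$ or $\mu-1$ according to the sign of $\nu-\mu$ --- which is cleaner and leaves nothing to case-checking. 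Second, and more substantially, for \eqref{eq:gammastriples} the paper reorders the six-fold product directly and determines the sign $\varsigma_{\mu\nu}=(-1)^{1+|\mu-\nu|}$ by inspection of the pairs $(0,2),(2,0),(1,3),(3,1)$, whereas you route the computation through the factorization $\gamma^{\hat\mu}=(-1)^\mu e_\mu\,\gamma^\mu\gamma^0\gamma^1\gamma^2\gamma^3$ together with $(\gamma^0\gamma^1\gamma^2\gamma^3)^2=\det(\eta)$ and the anticommutation of the chirality with each generator; this yields the closed-form sign $-(-1)^{\mu+\nu}e_\mu e_\nu\det(\eta)=(-1)^{1+|\mu-\nu|}e_\alpha e_\sigma$ in one stroke and handles $\mu=\nu$ and $\mu\neq\nu$ uniformly. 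Your approach buys a case-free derivation that would generalize more readily to other dimensions; the paper's buys directness at the cost of an enumerative sign check. I verified the identities $(\gamma^\mu)^{-1}=e_\mu\gamma^\mu$, $(\gamma^0\gamma^1\gamma^2\gamma^3)^2=\det(\eta)$, and $e_\mu e_\nu\det(\eta)=e_\alpha e_\sigma$ that your argument relies on; all are sound.
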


This lemma is proven in Appendix \ref{app:prooflemma}. 
Notice that in Eqs. \eqref{extwolemmas} the repeated indices
$\mu,\nu$ in the RHS are not summed (therefore the index-symmetry of
$\delta_{\mu\nu\lambda\rho} $ with the antisymmetry of
$\gamma^\mu \gamma^\nu$ does annihilate that term).

We now need the explicit form of the chirality 
$ \gamma\fuz  =\gamma\otimes 1_{\MN}$, 
given by
\begin{align}\label{gammadef}
 \gamma  =  (-\ii)^{\frac{1}{2}(q-p)(q-p+1)} \gamma^0\gamma^1\gamma^2\gamma^3 
 =:  \sigma(\eta)  \gamma^0\gamma^1\gamma^2\gamma^3 \,.
\end{align}
This factor $ \sigma(\eta)$ in $\gamma$ in front of the matrices is $-1,+\ii,+1,-\ii$,
for the signatures $(p,q)= (0,4), (1,3), (2,2),(3,1)$, respectively,
corresponding to KO-dimensions $s=4,2,0,6$.

\begin{lemma}\label{thm:Lichnerowicz}
The square of the Dirac operator of a fuzzy geometry $G\fuz$ of signature 
$\eta=\diag(e_0,\ldots,e_3)$ is
\begin{align}
D^2\fuz&=
  \sum_{\mu,\nu}1_V \otimes \eta^{\mu\nu }k_\mu \circ k_\nu  + \frac{1}{2}\gamma^\mu\gamma^\nu \otimes [k_\mu, k_\nu]_\circ
\nonumber
-
\sum_\mu   \det(\eta)  e_\mu 1_V\otimes x_\mu\circ x_\mu
\\ 
&
\label{Lichnerowicz}
+ \sum_{\mu < \nu } 
t_{\mu\nu} \gamma^{\mu} \gamma^\nu \otimes [x_\mu,x_\nu]_{\circ}
+\frac12\sum_{\mu,\nu,\sigma,\alpha } s_{\mu\nu\alpha\sigma} \cdot \gamma^\alpha \gamma^\sigma \otimes \{ x_\nu,k_\mu\}_{\circ}
\\ \nonumber
&+\frac1{\sigma(\eta)}  \sum_{\mu} {(-1)^\mu}  \gamma \otimes [x_\mu,k_\mu]_{\circ}\,,
\end{align}
with the `commutator' $[f,g]_\circ$ given by $f\circ g - g\circ f$ in
terms of the composition $\circ$ of the following operators (which are themselves commutators or
anti-commutators)
\begin{align}\label{minusculas}
k_\mu:= \{ K_\mu,\balita \}_{e_\mu}\, \quad \mtr{and}  \quad
x_\mu:= \{ K_{\hat \mu},\balita \}_{e_{\hat \mu}}\,.
\end{align}
We defined also the (whenever non-vanishing) signs 
\begin{align}
  s_{\mu\nu\alpha\sigma}&:= e_\mu (-1)^{\mu} \cdot \sgn(\nu-\mu )\cdot  \sgn(\sigma-\alpha) \cdot \delta_{\mu\nu\alpha\sigma}\in \{-1,0,+1\} \,,\\
  t_{\mu\nu}&:=  \sum_{\lambda  < \rho}(-1)^{1+|\mu-\nu|}\delta_{\mu\nu\lambda\rho}e_\lambda e_\rho \in \{-1,0,+1\}\,.
\end{align}
\end{lemma}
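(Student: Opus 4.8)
The plan is to expand $D^2\fuz$ directly from the four-dimensional form of the fuzzy Dirac operator and then collapse every product of gamma matrices by means of the Clifford identities collected in Lemma \ref{thm:gamma13}. In dimension four the only odd-length multi-indices are the singletons $\mu$ and their triple complements $\hat\mu$, so writing the characterization \eqref{DiracCharact} with the abbreviations \eqref{minusculas} gives $D\fuz=\sum_\mu\gamma^\mu\otimes k_\mu+\sum_\mu\gamma^{\hat\mu}\otimes x_\mu$, whence
\begin{align*}
D^2\fuz={}&\sum_{\mu,\nu}\gamma^\mu\gamma^\nu\otimes k_\mu\circ k_\nu
+\sum_{\mu,\nu}\gamma^\mu\gamma^{\hat\nu}\otimes k_\mu\circ x_\nu\\
&+\sum_{\mu,\nu}\gamma^{\hat\mu}\gamma^\nu\otimes x_\mu\circ k_\nu
+\sum_{\mu,\nu}\gamma^{\hat\mu}\gamma^{\hat\nu}\otimes x_\mu\circ x_\nu .
\end{align*}
I would treat these four families (call them $kk$, $kx$, $xk$, $xx$) separately. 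The recurring device in every case is a symmetrization argument: a coefficient that is symmetric (resp. antisymmetric) in $(\mu,\nu)$ contracts only against the symmetric (resp. antisymmetric) part of the operator product $k_\mu\circ k_\nu$ or $x_\mu\circ x_\nu$, the antisymmetric part producing the $\circ$-commutator $[\,\cdot\,,\,\cdot\,]_\circ$ and the symmetric combinations producing the $\circ$-anticommutator $\{\,\cdot\,,\,\cdot\,\}_\circ$.

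For the $kk$ family I would split $\gamma^\mu\gamma^\nu=\eta^{\mu\nu}1_V+\tfrac12[\gamma^\mu,\gamma^\nu]$ using the Clifford relation $\{\gamma^\mu,\gamma^\nu\}=2\eta^{\mu\nu}1_V$. The scalar part yields $\sum_{\mu,\nu}1_V\otimes\eta^{\mu\nu}k_\mu\circ k_\nu$, while the antisymmetric bivector, contracted with the antisymmetric part of $k_\mu\circ k_\nu$ and relabelled, gives $\tfrac12\gamma^\mu\gamma^\nu\otimes[k_\mu,k_\nu]_\circ$, i.e.\ the first two summands of \eqref{Lichnerowicz}. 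For the $xx$ family I apply \eqref{eq:gammastriples}: on the diagonal $\mu=\nu$ the $\delta_{\mu\nu\lambda\rho}$-sum vanishes, leaving $-1_V e_\mu\det(\eta)$ and hence $-\sum_\mu\det(\eta)e_\mu 1_V\otimes x_\mu\circ x_\mu$; off the diagonal the coefficient is exactly $t_{\mu\nu}$ (using $\sum_{\rho,\lambda}\tfrac12\delta_{\mu\nu\lambda\rho}e_\lambda e_\rho=\sum_{\lambda<\rho}\delta_{\mu\nu\lambda\rho}e_\lambda e_\rho$), which is symmetric in $(\mu,\nu)$ and multiplies the antisymmetric $\gamma^\mu\gamma^\nu$, so only $\tfrac12[x_\mu,x_\nu]_\circ$ survives; pairing $(\mu,\nu)$ with $(\nu,\mu)$ restricts the sum to $\mu<\nu$ and reproduces $\sum_{\mu<\nu}t_{\mu\nu}\gamma^\mu\gamma^\nu\otimes[x_\mu,x_\nu]_\circ$.

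The mixed families $kx$ and $xk$ must be handled together. I would expand $\gamma^\mu\gamma^{\hat\nu}$ by \eqref{gammas13}, and rewrite $\gamma^{\hat\mu}\gamma^\nu$ as $\gamma^\nu\gamma^{\hat\mu}$ for $\mu\neq\nu$ via \eqref{gammas13c} and as $-\gamma^\mu\gamma^{\hat\mu}$ for $\mu=\nu$ via \eqref{gammas13b}. The diagonal contributions are proportional to $\gamma^0\gamma^1\gamma^2\gamma^3=\sigma(\eta)^{-1}\gamma$ by \eqref{gammadef}, and the relative sign supplied by \eqref{gammas13b} turns the sum of the $kx$ and $xk$ diagonals into the chirality term $\tfrac1{\sigma(\eta)}\sum_\mu(-1)^\mu\gamma\otimes[x_\mu,k_\mu]_\circ$. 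For the off-diagonal pieces, which carry $\gamma^\alpha\gamma^\sigma$-bivectors, I would relabel $\mu\leftrightarrow\nu$ in the $xk$ sum so that it matches the $kx$ sum up to replacing $k_\mu\circ x_\nu$ by $x_\nu\circ k_\mu$; the two then combine into $\{x_\nu,k_\mu\}_\circ$, and rewriting $\sum_{\alpha<\sigma}$ as $\tfrac12\sum_{\alpha,\sigma}\sgn(\sigma-\alpha)$ assembles precisely the symbol $s_{\mu\nu\alpha\sigma}$ and the stated coefficient.

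I expect the main obstacle to be purely the sign and symmetry bookkeeping rather than any conceptual difficulty: one must correctly propagate the factors $(-1)^\mu$, $e_\mu$ and $\sgn(\nu-\mu)$ through Lemma \ref{thm:gamma13}, and at each step decide—via the interplay between the symmetry of the scalar coefficient and the antisymmetry of $\gamma^\alpha\gamma^\sigma$ for distinct indices—whether a $\circ$-commutator or a $\circ$-anticommutator is selected. The most error-prone juncture is the fusion of the $kx$ and $xk$ families into the single $s_{\mu\nu\alpha\sigma}$ anticommutator term, since it simultaneously requires the dummy-index relabel and the exact relative signs furnished by \eqref{gammas13b}–\eqref{gammas13c}.
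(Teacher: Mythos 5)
Your proposal is correct and follows essentially the same route as the paper's proof: the same split of $D^2\fuz$ into the $kk$, diagonal and off-diagonal mixed, and diagonal and off-diagonal $xx$ families, the Clifford-relation decomposition of $\gamma^\mu\gamma^\nu$ for the first two summands, Eq.~\eqref{eq:gammastriples} for the $\det(\eta)$ and $t_{\mu\nu}$ terms, Eqs.~\eqref{gammas13b}--\eqref{gammas13c} plus the relabel $\mu\leftrightarrow\nu$ to fuse the mixed families into the chirality term and the $\{x_\nu,k_\mu\}_\circ$ anticommutator, and the $\sgn(\sigma-\alpha)$ symmetrization that assembles $s_{\mu\nu\alpha\sigma}$. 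The only (inessential) divergence is that you spell out the symmetry argument for the off-diagonal $xx$ term, which the paper dispatches with a one-line appeal to the same lemma.
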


\begin{proof}
One straightforwardly finds $
 D^2 \fuz= 
(\mathfrak{a}+\mathfrak{b}+\mathfrak{c}+\mathfrak{d}+\mathfrak{e})(k,x)$ 
with  
\begin{subequations}
\begin{align}
 \mathfrak{a}(k,x)&= \sum_{\mu,\nu} 
 \gamma^\mu \gamma^\nu \otimes (k_\mu \circ k_\nu)\,,
 \\
  \mathfrak{b}(k,x)&=\sum_\mu  \gahmu \gamu \otimes (x_\mu \circ k_\mu )+ \gamu \gahmu \otimes( k_\mu \circ x_\mu )  \,,\\
  \mathfrak{c}(k,x)&=\sum_{\mu\neq \nu}    \gahmu \ganu \otimes (x_\mu \circ k_\nu ) + \gamu \gahnu \otimes( k_\mu \circ x_\nu ) \,, \\
  \mathfrak{d}(k,x)&=\sum_\mu \gahmu\gahmu\otimes (x_\mu \circ x_\mu)\,, \\
  \mathfrak{e}(k,x)&=\sum_{\mu\neq \nu}  \gahmu\gahnu\otimes (x_\mu \circ x_\nu)\,.\end{align}
\end{subequations}
For the first term one obtains
\begin{salign} \mathfrak{a}(k,x)&= \sum_{\mu,\nu}
  \gamma^\mu \gamma^\nu \otimes k_\mu \circ k_\nu\\
  &=\sum_{\mu,\nu}
  \gamma^\mu \gamma^\nu \otimes \frac12 \Big( k_\mu \circ k_\nu +  k_\nu \circ k_\mu + [k_\mu, k_\nu] _{\circ} \Big)\\
  &=\sum_{\mu,\nu}\gamma^\mu \gamma^\nu \otimes \frac12   ( k_\mu \circ k_\nu)  + \Big(\eta^{\mu\nu}1_V - \frac12\gamma^\nu\gamma^\mu  \Big) \otimes \Big( k_\nu \circ k_\mu + [k_\mu, k_\nu]\Big) \\
  &=\sum_{\mu,\nu} 1_V \otimes \eta^{\mu\nu }k_\mu \circ k_\nu -
  \frac{1}{2}\gamma^\nu\gamma^\mu \otimes [k_\mu, k_\nu] _{\circ}\,\,.
 \end{salign} 
 To get the first two terms in the RHS of \eeqref{Lichnerowicz} one
 renames indices in the last term. The third summand is precisely
 $\mathfrak{d}$ after applying Lemma \ref{thm:lemma3gammas} with
 $\mu=\nu$.  The fourth term is $\mathfrak{e}$, also by Lemma
 \ref{thm:lemma3gammas}.  The sixth and last term in
 \eeqref{Lichnerowicz} come from $\mathfrak{b}$; if one uses
 $\{\gahmu,\gamu\}=0 $ and \eeqref{gammas13}$|_{\mu=\nu}$, after using
 introducing the chirality element:
 \begin{salign}
   \mathfrak{b}(k,x)&=\sum_\mu \gamu \gahmu \otimes( k_\mu \circ x_\mu -x_\mu \circ k_\mu )\\
   &= \sum_\mu (-1)^\mu \gamma \otimes [k_\mu,x_\mu] \qquad (\text{via
     Eq. \ref{gammadef}})\,.
 \end{salign} 
 We now see that the only Gothic letter left unmatched,
 $\mathfrak{c}$, is precisely the fifth term. Indeed, due to Lemma
 \ref{thm:gamma13},
 \begin{align*}
  \mathfrak{c}(k,x)
  &=  \sum_{\mu\neq \nu}    \ganu \gahmu \otimes (x_\mu \circ k_\nu ) + \gamu \gahnu \otimes( k_\mu \circ x_\nu ) \quad\text{(by Eq. \ref{gammas13c})} \\ 
  &=  \sum_{\mu\neq \nu}    \gamu \gahnu \otimes (x_\nu \circ k_\mu ) + \gamu \gahnu \otimes( k_\mu \circ x_\nu ) \quad \text{(index renaming)}
   \\ 
  &= \sum_{\mu\neq \nu}    \gamu \gahnu \otimes  \{ x_\nu,k_\mu\}
  \\ 
  &= \sum_{\mu\neq \nu}  (-1)^{\mu} e_\mu
  \sum_{\alpha < \sigma } (\delta_{\mu\nu\alpha\sigma}  \mtr{sgn}(\nu - \mu) )\gamma^\alpha \gamma^\sigma \otimes  \{ x_\nu,k_\mu\}  \quad \text{(by Lemma \ref{thm:gamma13})}
   \\ 
  &=\frac12  \sum_{\mu,\nu,\alpha,\sigma}  \big[ (-1)^{\mu} e_\mu\mtr{sgn}(\nu - \mu) \mtr{sgn}(\alpha - \sigma)\delta_{\mu\nu\alpha\sigma}   
  \big]\gamma^\alpha \gamma^\sigma \otimes  \{ x_\nu,k_\mu\} 
 \end{align*}
where in the last step we exploited the skew-symmetry 
of the gammas with different indices to annul the 
restriction $\alpha < \sigma$ on the sum 
by introducing $\mtr{sgn}(\sigma-\alpha)$. 
The term in square brackets is $s_{\mu\nu\alpha\sigma}$.
\end{proof}

\begin{table} 
 \begin{tabular}{lcl} \topline   \headcol
   & & \hspace{.4cm}\textsc{Riemannian} \\  
   \headcol \textsc{Concept} &  \textsc{Smooth Geometry}&  \textsc{Fuzzy Geometry}  \\[1ex]  \midline
   Base of $\mathfrak X(U)$ & $  E_\mu= e^i _\mu (x) \partial_i$ &  \hspace{.48cm}$l_\mu= [L _\mu, \balita ]$  \\
   Spin connection & $ \sum_{i,k}\frac{1}{4} \omega^{\phantom{j}}_{[\alpha| ik } e^i_{|\sigma} e^k_{\nu]} $ &    $
                                                                                                              h_{\alpha \sigma \nu}=  \{ H_{\alpha \sigma\nu}, \balita \}$ \\  \bottomlinec
 \end{tabular}
 \vspace{.2cm}
\caption{Analogies between smooth spin geometry
and Riemannian fuzzy geometries\label{tab:analogies}. Local 
expressions in a chart $U$ of $M$ are given. Here, $\mathfrak X(U)$ are
the vector fields on $U$, whose non-coordinate base is $\{E_\mu\}$.}
\end{table}
Notice that the analogy in Table
\ref{tab:analogies} goes further, since in the case of a smooth
manifold spin manifold $(M,g)$, the fields
$\partial_0,\ldots,\partial_3$, or equivalently $E_0,\ldots, E_3$,
(locally) span the space of vector fields $\mathfrak X(M)$ on $M$,
that is, derivations in $C^\infty (M)$.  {The analogue of
  $\partial_j$ is here (after the base change to $E_\mu$) the
  derivation in $\mtr{Der}(\MN)$ that corresponds to
  $l_\mu = \adj_{L_\mu} = [L_\mu,\balita]$.}

\section{Gauge matrix spectral triples} \label{sec:GeneralFinACGeom}

We restrict the discussion to even KO-dimensions ($\epsilon'=1$) and
define the main spectral triples for the rest of the article.  Their
terminology is inspired by the results.  The reader might want to see
Table \ref{tab:notation}, which will be hopefully helpful to grasp the
organization of the objects introduced this section.  But first, we
recall that the spectral triple product $G_1\times G_2$ of two real,
even spectral triples $G_i=(\A_i,\H_i,D_i, J_i,\gamma_i)$
is
\[(\A_1\otimes \A_2 ,\H_1\otimes \H_2, D_1\otimes 1_{\H_2} + \gamma_1
  \otimes D_2, J_1\otimes J_2, \gamma_1\otimes\gamma_2)\,.\]
\begin{definition}
  We define a \textit{gauge matrix spectral triples}  as the spectral triple product
  $G\fuz\times F$ of a fuzzy geometry $G\fuz$ with a finite geometry
  $F=(\A_F,\H_F,D_F, J_F,\gamma_F)$, $\dim \A_F < \infty$.  If $F$ is
  a finite geometry with $\A_F=\Mn$ and $\H_F=\Mn$ with $2\leq n$, we
  say that $G\fuz\times F$ is a \textit{Yang-Mills--Higgs matrix spectral triple}.  If moreover $D_F=0$ above holds,
  then $G\fuz\times F$ is called \textit{Yang-Mills matrix spectral triple}.
\end{definition}

We should denote these geometries by $G\hp{N}\fuz\times F\hp{n}$, 
but for sake of a compact notation, we leave those integers implicit
and write $G\fuz\times F$. 

\subsection{Yang-Mills theory from gauge matrix spectral triples}

In order to derive the SU$(n)$-Yang-Mills theory on a fuzzy base we
choose the following inner space algebra: $\A_F= \Mn$. This algebra
acts on the Hilbert space $\H_F=\Mn$ by multiplication. The Connes'
1-forms $\Omega^1_{D}(\A)$ for $\A=\MN\otimes \Mn$ are then elements
of the form
\begin{align}
 \label{aboveform}
\omega= \sum \ac [D, \cc ] \, \with \, \ac= \sum  W \otimes  a ,\,\,\,  \cc=\sum   T \otimes c \in \MN \otimes \Mn\,,
\end{align}
where the sums are finite. The latter algebra is the fuzzy analogue 
of the algebra $C^\infty(M, \A_F)=C^\infty(M)\otimes \A_F$ of an
($\infty$-dimensional, smooth) almost-commutative geometry.

In order to compute the fluctuated Dirac operator, we start in this
section with the fluctuations along the fuzzy geometry (labeled with
$\fay$) and leave those along the $F$ direction for the Section
\ref{sec:YMH}. Thus, turning off the `finite part' $D_F=0$, one
obtains
\begin{align}\label{DiracFluctuationsGeneral}
\Dg:=D_{\omega\fuz}= D\fuz\otimes 1_F + \omega\fuz + J \omega\fuz J\inv 
\end{align}
for $\omega\fuz$ of the form \eqref{aboveform}, with respect to the `purely fuzzy' Dirac operator
\begin{subequations}\label{DiracWithoutFluct}
\begin{align}
D\fuz\otimes 1_F&= \sum_{\mu } \gamma^\mu \otimes \{\mathsf{K}_\mu, \balita \}_{e_\mu}
+\gahmu \otimes \{\mathsf{X}_\mu, \balita \}_{e_{\hat \mu}}\,,\\
\mathsf{K}_\mu&= K_\mu\otimes 1_F\quad \mtr{and}\quad \mathsf{X}_\mu:= X_\mu\otimes 1_F\,.
\end{align}%
\end{subequations}%

\begin{thm}\label{thm:flucutatedD_AC}
  On the Yang-Mills matrix spectral triple over a
  four-dimensional fuzzy geometry of type $(p,q)$, i.e. of signature
  $\eta=\diag(+_p,-_q) $, the fluctuated Dirac operator
  $D=D\fuz\otimes 1_F$ reads
\begin{align}\label{fullyfluctuatedD_AC} 
\Dg:=D_{\omega\fuz} & =\sum_\mu \gamma^\mu \otimes   \{ \mathsf{K}_\mu + \mathsf{A}_\mu, \balita  \}_{e_\mu} + \gamma^{\hat\mu} \otimes \{ \mathsf{X}_\mu + \mathsf{S}_\mu,\balita\}_{e_{\hat\mu }}\,,\end{align}
in terms of matrices $\mathsf{A}_\mu, \mathsf{S}_\mu \in \Omega^1 _D(\A\fuz \otimes \A_F)$ satisfying 
\begin{align}
 ( \mathsf{A}_\mu)^*&= \vphantom{\int}e_\mu \mathsf{A}_\mu,\qquad \text{and}\qquad ( \mathsf{S}_\mu)^*= (-1)^{q + 1 }e_\mu \mathsf{S}_\mu \,.
\end{align} 
Here, the curly brackets are a generalized commutator $\{A,B\}_{\pm} = AB  \pm BA$ depending on $e_\mu,e_{\hat\mu}\in \{+1,-1\}$.
\end{thm}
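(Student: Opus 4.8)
\textit{Proof proposal.} The plan is to expand the inner fluctuation \eqref{DiracFluctuationsGeneral} termwise, to compute $\omega\fuz$ and $J\omega\fuz J\inv$ separately as operators on $\H\fuz\otimes\H_F=V\otimes(\MN\otimes\Mn)$, and to observe that their sum reinstates the generalized (anti-)commutator around shifted matrices. Writing a generic element of the matrix factor as $\Xi\in\MN\otimes\Mn$ and a one-form as in \eqref{aboveform} with $\ac=W\otimes a$ and $\cc=T\otimes c$, the computational nucleus is the elementary identity
\[
\{\mathsf{K}_\mu,\cc\,\Xi\}_{e_\mu}-\cc\,\{\mathsf{K}_\mu,\Xi\}_{e_\mu}=[\mathsf{K}_\mu,\cc]\,\Xi\,,
\]
which holds for both $e_\mu=\pm1$ because the cross terms cancel irrespective of the sign, together with its companion for $\mathsf{X}_\mu,e_{\hat\mu}$.

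First I would insert this into $[D\fuz\otimes 1_F,\varrho(\cc)]$ evaluated on $v\otimes\Xi$, obtaining $\sum_\mu\gamma^\mu v\otimes[\mathsf{K}_\mu,\cc]\Xi+\gamma^{\hat\mu}v\otimes[\mathsf{X}_\mu,\cc]\Xi$; left-multiplying by $\varrho(\ac)$ and summing then yields
\[
\omega\fuz(v\otimes\Xi)=\sum_\mu\gamma^\mu v\otimes\mathsf{A}_\mu\Xi+\gamma^{\hat\mu}v\otimes\mathsf{S}_\mu\Xi\,,\qquad \mathsf{A}_\mu:=\sum\ac[\mathsf{K}_\mu,\cc]\,,\quad \mathsf{S}_\mu:=\sum\ac[\mathsf{X}_\mu,\cc]\,,
\]
so $\omega\fuz$ acts by left multiplication on the matrix factor, with coefficients the announced components $\mathsf{A}_\mu,\mathsf{S}_\mu$ of a one-form in $\Omega^1_D(\A\fuz\otimes\A_F)$. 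For the opposite term I would use that $J=C\otimes *$ converts left into right multiplication, i.e. $J\varrho(\ac)J\inv$ is right multiplication by $\ac^*$ (cf. Remark \ref{rmk:commrel}), and that $JD\fuz J\inv=\epsilon'D\fuz=D\fuz$ in even KO-dimension by \eeqref{signos}. Feeding the mirror identity $\{\mathsf{K}_\mu,\Xi\cc^*\}_{e_\mu}-\{\mathsf{K}_\mu,\Xi\}_{e_\mu}\cc^*=-e_\mu\Xi[\mathsf{K}_\mu,\cc^*]$ and the adjoint rule $[\mathsf{K}_\mu,\cc]^*=-e_\mu[\mathsf{K}_\mu,\cc^*]$ into the same manipulation, the two factors of $-e_\mu$ cancel and give
\[
J\omega\fuz J\inv(v\otimes\Xi)=\sum_\mu\gamma^\mu v\otimes\Xi\,\mathsf{A}_\mu^*+\gamma^{\hat\mu}v\otimes\Xi\,\mathsf{S}_\mu^*\,,
\]
that is, right multiplication by the adjoints.

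Finally I would impose self-adjointness of $\omega\fuz$, the standard condition guaranteeing that $\Dg$ is self-adjoint. Taking adjoints of the left-multiplication form and using $(\gamma^\mu)^*=e_\mu\gamma^\mu$ forces $\mathsf{A}_\mu^*=e_\mu\mathsf{A}_\mu$ and, through $e_{\hat\mu}=e_\mu(-1)^{q+1}$ of \eeqref{mhm}, $\mathsf{S}_\mu^*=(-1)^{q+1}e_\mu\mathsf{S}_\mu$ --- precisely the two adjointness relations in the statement. Substituting them back turns each $\Xi\mathsf{A}_\mu^*$ into $e_\mu\Xi\mathsf{A}_\mu$, so that $\omega\fuz+J\omega\fuz J\inv$ becomes $\sum_\mu\gamma^\mu v\otimes\{\mathsf{A}_\mu,\Xi\}_{e_\mu}+\gamma^{\hat\mu}v\otimes\{\mathsf{S}_\mu,\Xi\}_{e_{\hat\mu}}$; adding $D\fuz\otimes 1_F$ from \eqref{DiracWithoutFluct} merges the brackets into $\{\mathsf{K}_\mu+\mathsf{A}_\mu,\balita\}_{e_\mu}$ and $\{\mathsf{X}_\mu+\mathsf{S}_\mu,\balita\}_{e_{\hat\mu}}$, which is \eqref{fullyfluctuatedD_AC}. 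The main obstacle is the bookkeeping in the opposite term: one must simultaneously track the antilinearity of $J$, the sign $\epsilon'$, and the adjoint-of-commutator rule, and verify that the matrix produced there is exactly $\mathsf{A}_\mu^*$ rather than a reordered product, so that self-adjointness of $\omega\fuz$ is precisely what reconstitutes a single bracket. Everything else reduces to the routine two-sign check of the bracket identities.
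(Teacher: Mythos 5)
Your proposal is correct and follows essentially the same route as the paper's proof, which splits the computation into Lemmas \ref{thm:lemma1gamma} and \ref{thm:lemma3gammas}: the commutator identity $\{\mathsf{K}_\mu,\cc\,\Xi\}_{e_\mu}-\cc\{\mathsf{K}_\mu,\Xi\}_{e_\mu}=[\mathsf{K}_\mu,\cc]\Xi$ turning $\omega\fuz$ into left multiplication, the conjugation by $J$ turning it into right multiplication by the adjoint, and self-adjointness of $\omega\fuz$ supplying exactly the sign relations $(\mathsf{A}_\mu)^*=e_\mu\mathsf{A}_\mu$ and $(\mathsf{S}_\mu)^*=(-1)^{q+1}e_\mu\mathsf{S}_\mu$ via $e_{\hat\mu}=(-1)^{q+1}e_\mu$ that reassemble the generalized brackets. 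The only cosmetic difference is that the paper conjugates the already-simplified left-multiplication operator by $J$ directly, whereas you re-run the commutator manipulation with right multiplications via a mirror identity; both give the same result.
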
 

\begin{proof} 
  The theorem follows by combination of Lemma \ref{thm:lemma1gamma}
  with Lemma \ref{thm:lemma3gammas}, both proven below.
\end{proof}

\begin{lemma}[Fluctuations with respect to the $K_\mu$-matrices]\label{thm:lemma1gamma} 
  With the same notation of Theorem \ref{thm:flucutatedD_AC} and
  setting $X_\mu= K_{\hat \mu}=0$---cf. \eeqref{Dirac4d_explicit} and
  \eeqref{minusculas}---the innerly fluctuated Dirac operator $\Dg$ is
  given by
\begin{align}\label{flucutated1gamma} 
 \Dg|_{X=0} & =\sum_\mu \gamma^\mu \otimes \big \{ \mathsf{K}_\mu + \mathsf{A}_\mu, \balita  \big \}_{e_\mu} \, \where e_\mu( \mathsf{A}_\mu)^*= \mathsf{A}_\mu \in \Omega^1 _D(\A)\,. 
\end{align}
\end{lemma}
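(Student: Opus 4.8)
The plan is to substitute the explicit form \eqref{DiracWithoutFluct} of $D\fuz\otimes1_F$ (with the triple-index matrices switched off, $X=0$) into the fluctuation \eqref{DiracFluctuationsGeneral}, and to evaluate the two pieces $\omega\fuz$ and $J\omega\fuz J\inv$ separately. First I would compute the elementary commutator $[D\fuz\otimes1_F,\cc]$ for a simple tensor $\cc=T\otimes c\in\MN\otimes\Mn$. Since the representation $\varrho$ is trivial on the spinor factor $V$ and acts by left multiplication on the matrix factor $\MN\otimes\Mn$, only the latter is affected, and for each $\mu$ one finds
\begin{align*}
\big[\{\mathsf K_\mu,\balita\}_{e_\mu},\,\varrho(\cc)\big](M)=\{\mathsf K_\mu,\cc M\}_{e_\mu}-\cc\,\{\mathsf K_\mu,M\}_{e_\mu}=[\mathsf K_\mu,\cc]\,M\,.
\end{align*}
The decisive observation is that the two terms carrying the sign $e_\mu$ cancel, so the generalized (anti)commutator collapses to an ordinary commutator $[\mathsf K_\mu,\cc]$, \emph{independently of the value of $e_\mu$}.

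Left-multiplying by $\ac$ and summing over the finite decomposition of a one-form \eqref{aboveform} then gives, writing $L_B$ (resp.\ $R_B$) for left (resp.\ right) multiplication by $B$ on the matrix factor,
\begin{align*}
\omega\fuz=\sum\ac\,[D\fuz\otimes1_F,\cc]=\sum_\mu\gamma^\mu\otimes L_{\mathsf A_\mu}\,,\qquad \mathsf A_\mu:=\sum\ac\,[\mathsf K_\mu,\cc]\in\MN\otimes\Mn\,,
\end{align*}
so that $\mathsf A_\mu$ is the matrix datum extracted from the $1$-form $\omega\fuz\in\Omega^1_D(\A)$. Next I would compute the opposite piece. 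Writing the product real structure $J=J\fuz\otimes J_F$ as $C\otimes{*}$ with ${*}$ the adjoint on $\MN\otimes\Mn$, and using that we are in even KO-dimension ($\epsilon'=1$, hence $C\gamma^\mu C\inv=\gamma^\mu$ by \eqref{C}), a short computation---the operator analogue of \eqref{RLcommute} in Remark \ref{rmk:commrel}---shows that conjugation by $J$ sends left multiplication to right multiplication by the adjoint, $J\,L_B\,J\inv=R_{B^*}$. Therefore $J\omega\fuz J\inv=\sum_\mu\gamma^\mu\otimes R_{\mathsf A_\mu^*}$.

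Adding the three contributions yields, on $M\in\MN\otimes\Mn$,
\begin{align*}
\Dg|_{X=0}=\sum_\mu\gamma^\mu\otimes\big(\{\mathsf K_\mu,\balita\}_{e_\mu}+L_{\mathsf A_\mu}+R_{\mathsf A_\mu^*}\big)\,.
\end{align*}
To recognize the bracket as a single generalized (anti)commutator $\{\mathsf K_\mu+\mathsf A_\mu,\balita\}_{e_\mu}=L_{\mathsf K_\mu+\mathsf A_\mu}+e_\mu R_{\mathsf K_\mu+\mathsf A_\mu}$, one needs $R_{\mathsf A_\mu^*}=e_\mu R_{\mathsf A_\mu}$, i.e.\ precisely the stated relation $e_\mu(\mathsf A_\mu)^*=\mathsf A_\mu$. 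I would obtain this by imposing that the gauge potential $\omega\fuz$ be self-adjoint, the condition guaranteeing that $\Dg$ remains a genuine (self-adjoint) Dirac operator: since $(\gamma^\mu)^*=e_\mu\gamma^\mu$ and $(L_B)^*=L_{B^*}$ for the Hilbert--Schmidt inner product, self-adjointness of $\omega\fuz=\sum_\mu\gamma^\mu\otimes L_{\mathsf A_\mu}$ is equivalent, by linear independence of the $\gamma^\mu$ on $V$, to $\mathsf A_\mu^*=e_\mu\mathsf A_\mu$ for each $\mu$. Substituting back produces \eqref{flucutated1gamma}.

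The main obstacle is the sign bookkeeping rather than any hard estimate: one must track $e_\mu$ through the (anti)commutators in the first step, where it fortunately drops out of $[\mathsf K_\mu,\cc]$, and then correctly combine the anti-linearity of $J$ with $\epsilon'=1$ and with $(\gamma^\mu)^*=e_\mu\gamma^\mu$ in the last step, so that the left- and right-multiplication pieces fuse into the single bracket $\{\mathsf K_\mu+\mathsf A_\mu,\balita\}_{e_\mu}$ carrying the correct adjointness of $\mathsf A_\mu$.
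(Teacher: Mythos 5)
Your proposal is correct and follows essentially the same route as the paper's proof: the same cancellation of the $e_\mu$-terms collapsing $[\{\mathsf K_\mu,\balita\}_{e_\mu},\varrho(\cc)]$ to left multiplication by $[\mathsf K_\mu,\cc]$, the same identification of $J\omega\fuz J\inv$ as right multiplication by the adjoint via $C\gamma^\mu C\inv=\gamma^\mu$, and the same use of self-adjointness of $\omega\fuz$ to fix $(\mathsf A_\mu)^*=e_\mu\mathsf A_\mu$ so the two pieces fuse into $\{\mathsf K_\mu+\mathsf A_\mu,\balita\}_{e_\mu}$. The only difference is cosmetic: you phrase the computation in terms of the operators $L_B$, $R_B$ rather than evaluating on homogeneous vectors $v\otimes Y\otimes\psi$ as the paper does.
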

\begin{proof*}{Proof of Lemma \ref{thm:lemma1gamma}}
We set $X=0$ globally in this proof.
Pick a homogeneous vector in the
full Hilbert space $\Psi=v\otimes Y\otimes \psi \in \H= V \otimes \MN\otimes \H_F$. 
For $\ac=1_V\otimes  W \otimes  a$ and $\ac'=1_V\otimes  T \otimes  c$ parametrized by
$T,W\in \MN$ and $a,c\in \A_F$, the action of $ \omega$ on $\Psi$ yields
\allowdisplaybreaks[3]
\begin{salign}
\omega\fuz (\Psi)& = \ac [D\fuz\otimes 1_F, \ac' ] (\Psi) 
\\[3pt] & = (1_V\otimes  W \otimes  a) \big [
\sum_\mu
\gamma^\mu 
\otimes \{K_\mu  , \balita \}_{ e_\mu} \otimes 1, 1_V \otimes T \otimes c \big ]  (\Psi)\\[-3pt]
&=\sum_\mu\gamma^\mu v \otimes W\Big( 
\{K_\mu  , \balita \}_{ e_\mu} T - T \{K_\mu  , \balita \}_{ e_\mu} 
\Big) (Y \otimes a c \psi ) \\
&=\sum_\mu\gamma^\mu v \otimes W\Big( 
\{K_\mu  , T Y \}_{ e_\mu} - T \{K_\mu  , Y \}_{ e_\mu} 
\Big)  \otimes  a c  \psi  \\
&=\sum_\mu\gamma^\mu v \otimes W\Big( 
 K_\mu   T Y +  e_\mu  T Y K_\mu  - T (K_\mu Y +{ e_\mu} Y K_\mu )
\Big)  \otimes  a c  \psi  \\
&=\sum_\mu\gamma^\mu v \otimes W\big( 
 [ K_\mu ,  T ]\big) Y \otimes  a c  \psi  \\ & = \sum_\mu\big( \gamma^\mu \otimes W [K_\mu, T] \otimes a c \big) \Psi 
\end{salign}
so $\omega\fuz = \sum_\mu \gamma^\mu \otimes A_\mu \otimes b $, relabeling $b=a c \in \A_F$ and $A_\mu: = W [K_\mu, T]$.  
Notice that since 
\begin{align}
\label{restricting_b}
 (\gamma^\mu \otimes A_\mu \otimes b)^* &= e_\mu
\gamma^\mu \otimes   A_\mu ^* \otimes b^* 
\end{align}
the self-adjointness condition $\omega\fuz^*=
\omega\fuz$ is achieved if and only if $(A_\mu\otimes b)^*=e_\mu (A_\mu\otimes b) $ 
for each $\mu$. The second part of the inner fluctuations is, 
for each 
\[\Psi=v\otimes Y\otimes \psi \in V \otimes \MN\otimes \Mn\,, \]
the next expression:
\begin{salign}
(J \omega\fuz J\inv ) (\Psi)& = \vphantom{\sum_\mu}\big( J \ac [D\fuz\otimes 1_F, \ac' ] J\inv \big)  (\Psi) \\ & = \sum_\mu (C\otimes *_N \otimes *_n )
(\gamma^\mu C\inv v \otimes A_\mu Y^* \otimes b \psi ^*)
\\ & =
\sum_\mu  
(\underbrace{C \gamma^\mu C\inv}_{\gamma^\mu} v \otimes (A_\mu Y^*)^* \otimes  (b \psi ^*)^* \qquad \mbox{(cf. Eq. \ref{C})} 
\\ &= 
\sum_\mu  
 \gamma^\mu  v \otimes YA_\mu^* \otimes   \psi b^*
\\ &= 
\sum_\mu 
( \gamma^\mu \otimes 1_{\MN}\otimes 1_n) \Psi
(1_V\otimes A_\mu \otimes b)^*
\\ &= 
\sum_\mu  
(e_\mu \gamma^\mu \otimes 1_{\MN}\otimes 1_n) \Psi
(1_V\otimes A_\mu \otimes b)\,,
\end{salign}
where the last step is a consequence of Eq. \eqref{restricting_b}. Thus
$ J \omega\fuz J\inv  = \sum_\mu  e_\mu \gamma^\mu \otimes (\balita ) (A_\mu \otimes b)$ 
where the bullet stands for the argument in $\MN\otimes \Mn\subset \H$ 
to be multiplied by the right. Hence%
\begin{subequations}%
\label{flucutationsYM}%
\begin{align}%
\omega\fuz +  J \omega\fuz J\inv  &= \sum_\mu  \gamma^\mu \otimes \Big (  
A_\mu \otimes b + e_\mu  (\balita)
(A_\mu \otimes b) \Big)\,, \with \qquad\\  e_\mu (A_\mu\otimes b)^* 
&=  A_\mu \otimes b \in \Omega^1 _D(\A) \mbox{ for each $\mu$.} 
\label{flucutationsYMb}
 \end{align}%
\end{subequations}%
As a result, the fully-fluctuated operator acting on 
$\Psi=v  \otimes Y\otimes \psi \in \H$ is
\begin{salign}
D_{\omega\fuz} \Psi & = \overbrace{\sum_\mu \gamma^\mu v \otimes \{K_\mu, Y\}_{ e_\mu} \otimes \psi}^{(D\fuz \otimes 1_F) \Psi}   + \sum_\mu \gamma^\mu v \otimes \big( A_\mu Y \otimes  b \psi +  e_\mu Y A_\mu \otimes \psi b\big) \,.
\end{salign}
or defining $\mathsf{K}_\mu := K_\mu \otimes 1_n $ and $\mathsf{A}_\mu:= A_\mu \otimes b \in \MN \otimes \Mn$, one has
\begin{align}  
D_{\omega\fuz} & =\sum_\mu \gamma^\mu \otimes \big \{ \mathsf{K}_\mu + \mathsf{A}_\mu, \balita  \big \}_{e_\mu}\,, \qquad ( \mathsf{A}_\mu)^*= e_\mu \mathsf{A}_\mu \in \Omega^1 _D(\A)\,. 
\qedhere
\end{align}
\end{proof*}
The triviality of the part of the Dirac operator 
along the finite geometry $F$ implies that 
\[ \Omega^1_{D}(\A)= \Omega^1 _{D\fuz \otimes 1_F}\big( \MNn \big)=
  \Omega^1 _{D\fuz}(\MN) \otimes \Mn\,, \] where $\MNn$ abbreviates
$\MN \otimes \Mn$ (in sub-indices, later further shortened as
$M_{N\otimes n}^{\C}$ too), and the significance of each factor can be
obtained by comparison with the smooth case. There, the inner
fluctuations of a Dirac operator on an almost-commutative geometry are
given by
\[\sum_\mu\Gamma^\mu \otimes (\mathbb{A}_\mu - J_F \mathbb{A}_\mu J_F ),\,\with   \mathbb{A} _\mu =-
  \ii a \partial_\mu {b} \in C^\infty (M) \otimes \A_F\, .\] Recall
that in the smooth case it is customary to treat only Riemannian
signature together with self-adjointness (which we do not assume) for
each gamma-matrix $\Gamma^i= \mathrm{c}(\dif x^i)$, $\mathrm{c}$ being
Clifford multiplication. For each point $x$ of the base manifold $M$
one has
$\mathbb{A}_i(x)\in \ii\, \mathfrak{su}(n)= \ii\, \mtr{Lie \,SU}(n)$.
Since Eq. \eqref{flucutationsYMb} represents the fuzzy analogue, that
equation can be further reduced to
$b^*=b\in \Mn_{\mtr{s.a.}}=\ii \,\mathfrak{u}(n) $ and
$(A_\mu)^*=e_\mu A_\mu $, that is
\begin{align}
 A_\mu \in \begin{cases}
           \ii \,\mtf{u} (N)  & \text{if }  e_\mu= +1   \text{ iff }  (\gamma^\mu)^*= + \gamma^\mu \,, \\
            \mtf{u} (N)  & \text{if } e_\mu= -1    \text{ iff }  (\gamma^\mu)^*= -\gamma^\mu  \,.
           \end{cases}
\end{align}
We now have to add the fluctuations resulting from the 
triple products of gamma matrices. 

\begin{lemma}[Fluctuations with respect to the $X_\mu$-matrices]
\label{thm:lemma3gammas} 
With the same notation of Theorem \ref{thm:flucutatedD_AC} and
additionally setting $K_\mu=0$, the innerly fluctuated Dirac operator
$\Dg$ is given by
\begin{align}\label{flucutated2gamma} 
 \Dg|_{K=0} & =\sum_\mu \gamma^{\hat \mu} \otimes \big \{ \mathsf{X}_\mu + \mathsf{S}_\mu, \balita  \big \}_{e_{\hat \mu}}\,, \,\,\,  (-1)^{q+1} e_\mu( \mathsf{S}_\mu)^*= \mathsf{S}_\mu \in \Omega^1 _D(\A)\,. 
\end{align}

\end{lemma}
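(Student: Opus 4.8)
The plan is to run the computation of Lemma~\ref{thm:lemma1gamma} essentially verbatim, replacing each single gamma matrix $\gamma^\mu$ by the triple product $\gamma^{\hat\mu}$ (and correspondingly $K_\mu$ by $X_\mu=K_{\hat\mu}$), while keeping careful track of how the adjointness sign changes from $e_\mu$ to $e_{\hat\mu}$. Concretely, I would fix a homogeneous vector $\Psi=v\otimes Y\otimes\psi\in\H=V\otimes\MN\otimes\H_F$ together with a one-form $\omega\fuz=\ac[D\fuz\otimes 1_F,\ac']$, where $\ac=1_V\otimes W\otimes a$ and $\ac'=1_V\otimes T\otimes c$, but now using the $K=0$ part $D\fuz\otimes 1_F=\sum_\mu\gamma^{\hat\mu}\otimes\{X_\mu,\balita\}_{e_{\hat\mu}}\otimes 1$.

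First I would evaluate $\omega\fuz(\Psi)$. Because the algebra acts trivially on $V$ and leaves the $\gamma^{\hat\mu}$ tensor slot untouched, the commutator acts only on the middle matrix factor, and the same cancellation as before occurs: $\{X_\mu,TY\}_{e_{\hat\mu}}-T\{X_\mu,Y\}_{e_{\hat\mu}}=[X_\mu,T]Y$, the $e_{\hat\mu}$-weighted terms dropping out irrespective of the sign. This gives $\omega\fuz=\sum_\mu\gamma^{\hat\mu}\otimes S_\mu\otimes b$ with $S_\mu=W[X_\mu,T]$ and $b=ac$. The point where the triple product genuinely enters is the adjoint: since $(\gamma^{\hat\mu})^*=e_{\hat\mu}\gamma^{\hat\mu}$, self-adjointness of $\omega\fuz$ forces $(S_\mu\otimes b)^*=e_{\hat\mu}(S_\mu\otimes b)$, in place of the $e_\mu$-condition of the previous lemma.

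Next I would compute $J\omega\fuz J\inv$. The only genuinely new verification is the behavior of the triple product under the real structure, namely $C\gamma^{\hat\mu}C\inv=\gamma^{\hat\mu}$; this is exactly where the standing restriction to even KO-dimension, $\epsilon'=1$, is used. Each factor satisfies $C\gamma^\alpha C\inv=\epsilon'\gamma^\alpha=\gamma^\alpha$ by Eq.~\eqref{C}, and inserting $C\inv C=1$ between the three factors yields $C\gamma^{\hat\mu}C\inv=(\epsilon')^3\gamma^{\hat\mu}=\gamma^{\hat\mu}$. With this, the remainder copies the previous proof and produces $J\omega\fuz J\inv=\sum_\mu e_{\hat\mu}\gamma^{\hat\mu}\otimes(\balita)(S_\mu\otimes b)$ after invoking the self-adjointness condition. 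Adding both pieces to $D\fuz\otimes 1_F|_{K=0}$ and writing $\mathsf{X}_\mu=X_\mu\otimes 1_F$, $\mathsf{S}_\mu=S_\mu\otimes b$ assembles the generalized anti-commutator of Eq.~\eqref{flucutated2gamma} with sign $e_{\hat\mu}$.

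The final step, and the one requiring the most care, is rewriting the reality condition $(\mathsf{S}_\mu)^*=e_{\hat\mu}\mathsf{S}_\mu$ into the stated form. Here I would simply invoke Eq.~\eqref{mhm}, $e_{\hat\mu}=e_\mu(-1)^{q+1}$, multiply the condition through by $(-1)^{q+1}e_\mu$, and use $((-1)^{q+1})^2=1=e_\mu^2$ to reach $(-1)^{q+1}e_\mu(\mathsf{S}_\mu)^*=\mathsf{S}_\mu$. I do not expect a serious obstacle: the whole argument is a mirror of Lemma~\ref{thm:lemma1gamma}, and the only real subtlety is the consistent bookkeeping of the sign $e_{\hat\mu}$ through the $J$-conjugation, together with its conversion via \eqref{mhm}.
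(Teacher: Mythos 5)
Your proposal is correct and follows essentially the same route as the paper's own proof (given in Appendix \ref{app:prooflemma}): the commutator computation is copied from Lemma \ref{thm:lemma1gamma} with $\gamma^{\hat\mu}$ and $X_\mu$ in place of $\gamma^\mu$ and $K_\mu$, the new ingredient is $C\gamma^{\hat\mu}C\inv=(\epsilon')^3\gamma^{\hat\mu}=\gamma^{\hat\mu}$ obtained by conjugating each factor, and the reality condition is converted via $e_{\hat\mu}=(-1)^{q+1}e_\mu$ from \eeqref{mhm}. Nothing is missing.
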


\begin{proof*}{Proof}
See Appendix \ref{app:prooflemma}. 
\end{proof*}
From the last subsections, the rules for $S_\mu$ and $A_\mu$
lead to the manifest self-adjointness of $D_{\omega\fuz}$.

\begin{table}


\includegraphics[width=\textwidth]{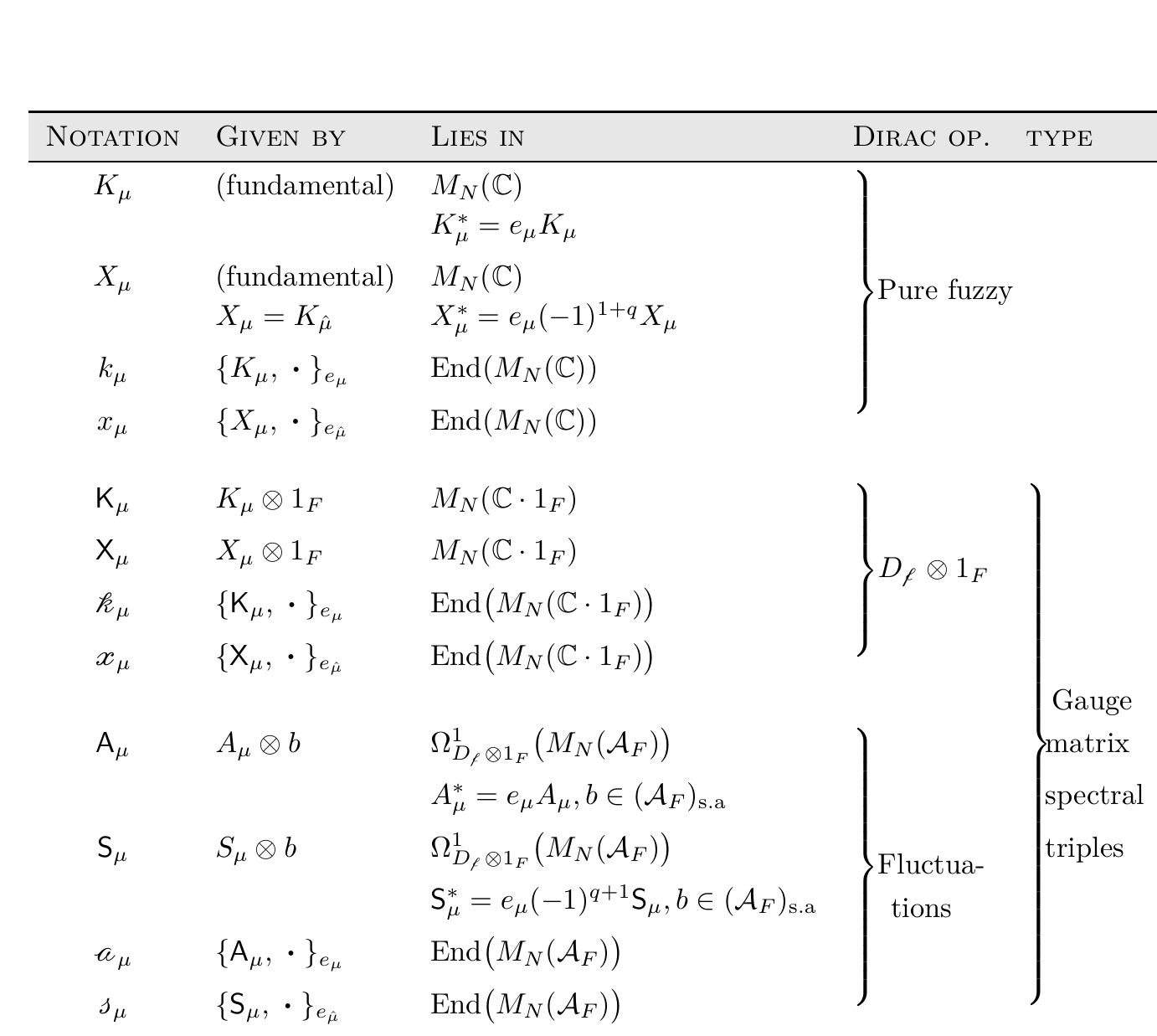}

 \caption{
 Notation for different matrices and operators 
 appearing in the Dirac operator $D=D\fuz\otimes 1_F$ of 
  $G\fuz\times F $ (case with $D_F=0$)
 and its fluctuations. In the table  $M_N(\C\cdot 1_F)=\MN\otimes ( \C \cdot 1_F)$.}
\label{tab:notation}\end{table}

\subsection{Field Strength and the square of the fluctuated Dirac operator}
We introduce now the main object of the gauge theory.  To this end,
let\footnote{Here we emphasize the composition to avoid potential
  confusion arising from the objects inside commutator already being
  (anti-)commutators themselves. Should no confusion arise, so we drop
  the $\circ$.}
$[\mathsf f,\mathsf g]_{\circ}= (\mathsf f\circ \mathsf g)-(\mathsf
g\circ \mathsf f)$ for any endomorphisms $\mathsf f,\mathsf g$ of the
same vector space. Similarly, we define
$\{\mathsf f,\mathsf g\}_\circ= (\mathsf f\circ \mathsf g) + (\mathsf
g\circ \mathsf f)$.
\begin{definition}\label{def:FieldStr} We abbreviate the following (anti)commutators 
\begin{subequations}
\begin{align}
\km_\mu&:= \{\mathsf{K}_\mu,\balita \}_{e_{\mu}}\,, &&&
\xm_\mu&:= \{\mathsf{X}_\mu,\balita \}_{e_{\hat\mu}}=\{\mathsf{K}_{\hat \mu},\balita \}_{e_{\hat\mu}}\,, \\
\am_\mu&:=\{ \mathsf{A}_\mu,\balita  \}_{e_\mu}\,, &&&
\sm_\mu&:=\{ \mathsf{S}_\mu,\balita  \}_{e_\hmu}\,.
\end{align}
\end{subequations}
It follows in particular that 
$k_\mu \otimes 1_F= \km_\mu.$
The \textit{field strength} $\mathscr{F}_{\mu\nu}\in \mtr{End}( \A\fuz\otimes \A_F) $ of a gauge matrix spectral triple $G_\fay \times F$ is defined as
\begin{align}\label{FieldStrength}
 \mathscr{F}_{\mu\nu}:=[\day_\mu,\day_\nu] = [\km_\mu + \am_\mu, \km_\nu+\am_\nu]_{\circ}\,
 \end{align}
 where $\day_\mu:=\km_\mu + \am_\mu$.
\end{definition}

\begin{proposition} \label{thm:flatWeitzenb} The square of the
  fluctuated Dirac operator of a Yang-Mills gauge matrix
  spectral triple that is flat ($X=0$, $\xm=\sm=0$) is given by
\begin{align}\label{flatWeitzenb} 
\Dg^2|_{X=0}=  \frac12  \sum_{\mu,\nu}\gamma^\mu\gamma^\nu \otimes  \mathscr{F}_{\mu\nu}+ 1_V\otimes \vartheta\,,
\end{align}
where
\begin{align}\label{Theta}
 \vartheta := 
  \sum_{\mu,\nu}\eta^{\mu\nu } (\am_\mu +\km_\mu ) \circ (\am _\nu+\km _\nu)\,. \end{align}
\end{proposition}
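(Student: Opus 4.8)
The plan is to compute $\Dg^2|_{X=0}$ directly by squaring the expression for $\Dg|_{X=0}$ furnished by Lemma~\ref{thm:lemma1gamma}, and then to group the resulting terms exactly as in the proof of Lemma~\ref{thm:Lichnerowicz}. Indeed, setting $X=0$ turns off $\xm$ and $\sm$, so by Lemma~\ref{thm:lemma1gamma} the fluctuated operator has the purely-single-gamma form $\Dg|_{X=0}=\sum_\mu \gamma^\mu \otimes \day_\mu$, where $\day_\mu=\km_\mu+\am_\mu=\{\mathsf K_\mu+\mathsf A_\mu,\balita\}_{e_\mu}$. This is structurally identical to the first summand $\mathfrak a(k,x)$ appearing in Lemma~\ref{thm:Lichnerowicz}, but with $k_\mu$ replaced by the fluctuated operator $\day_\mu$, and with the triple-gamma pieces (the $x_\mu$ terms contributing $\mathfrak b,\mathfrak c,\mathfrak d,\mathfrak e$) entirely absent. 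So the whole computation collapses to reproducing the algebra of the $\mathfrak a$-term.

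Concretely, first I would write
\[
\Dg^2|_{X=0}=\sum_{\mu,\nu}\gamma^\mu\gamma^\nu \otimes (\day_\mu\circ \day_\nu),
\]
using that the gamma matrices act on the $V$-factor while the $\day_\mu$ act on $\A\fuz\otimes\A_F$, so the tensor factors multiply independently. Then I would split $\day_\mu\circ\day_\nu$ into its symmetric and antisymmetric parts in $(\mu,\nu)$, exactly as in the $\mathfrak a$ calculation:
\[
\day_\mu\circ\day_\nu=\tfrac12\big(\day_\mu\circ\day_\nu+\day_\nu\circ\day_\mu\big)+\tfrac12[\day_\mu,\day_\nu]_\circ .
\]
Contracting the symmetric part against $\gamma^\mu\gamma^\nu$ and using the Clifford relation $\gamma^\mu\gamma^\nu+\gamma^\nu\gamma^\mu=2\eta^{\mu\nu}1_V$, the symmetric part produces $\sum_{\mu,\nu}\eta^{\mu\nu}\,1_V\otimes(\day_\mu\circ\day_\nu)$, which is precisely $1_V\otimes\vartheta$ once one substitutes $\day_\mu=\am_\mu+\km_\mu$ into \eqref{Theta}. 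The antisymmetric part gives $\tfrac12\sum_{\mu,\nu}\gamma^\mu\gamma^\nu\otimes[\day_\mu,\day_\nu]_\circ$, and by Definition~\ref{def:FieldStr} the inner commutator is exactly $\mathscr F_{\mu\nu}$, yielding the first summand of \eqref{flatWeitzenb}.

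I do not expect a genuine obstacle here, since the argument is a verbatim specialization of the $\mathfrak a(k,x)$ manipulation already carried out in Lemma~\ref{thm:Lichnerowicz}. The only point requiring a little care is bookkeeping of the antisymmetric term: in the Lichnerowicz proof the index relabeling left a $-\tfrac12\gamma^\nu\gamma^\mu\otimes[k_\mu,k_\nu]_\circ$, which upon swapping the dummy indices $\mu\leftrightarrow\nu$ (and using the antisymmetry $[\,\cdot\,,\,\cdot\,]_\circ$) becomes $+\tfrac12\gamma^\mu\gamma^\nu\otimes[\day_\mu,\day_\nu]_\circ$; one must verify the sign lands consistently with the stated $+\tfrac12$ coefficient and with the orientation of $\mathscr F_{\mu\nu}$ in Definition~\ref{def:FieldStr}. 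Once that sign is confirmed, assembling the two pieces gives \eqref{flatWeitzenb} immediately.
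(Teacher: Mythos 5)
Your proof is correct, but it takes a genuinely different route from the paper's own proof of this proposition. The paper squares the three-term sum $\Dg=D\fuz\otimes 1_F+\omega\fuz+J\omega\fuz J\inv$ directly, obtaining four groups of terms: it imports $(D\fuz)^2\otimes 1_F$ from Lemma~\ref{thm:Lichnerowicz}, computes $(\omega+J\omega J\inv)^2$ by redoing the $\mathfrak a$-type manipulation for $\am$, and then works out the two cross terms $(D\fuz\otimes 1_F)(\omega+J\omega J\inv)+(\omega+J\omega J\inv)(D\fuz\otimes 1_F)$ explicitly, before reassembling everything into $\mathscr F_{\mu\nu}$ via its expanded form \eqref{FieldStrFull}. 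You instead start from the closed single-gamma form $\Dg|_{X=0}=\sum_\mu\gamma^\mu\otimes\day_\mu$ already established in Theorem~\ref{thm:flucutatedD_AC} (Lemma~\ref{thm:lemma1gamma}), square it once, and split $\day_\mu\circ\day_\nu$ into symmetric and antisymmetric parts: the Clifford relation turns the symmetric part into $1_V\otimes\vartheta$ and the antisymmetric part is $\tfrac12\gamma^\mu\gamma^\nu\otimes[\day_\mu,\day_\nu]_\circ$, which is the field strength by Definition~\ref{def:FieldStr} with no need to expand it. Your version is shorter and avoids the cross-term bookkeeping entirely; it is in fact the same substitution principle ($k\to\km+\am$) that the paper itself invokes only later, in the proof of the curved-case Proposition~\ref{thm:Weitzenb}. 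What the paper's longer route buys is an explicit display of how each fluctuation term contributes (in particular the intermediate identity \eqref{crossedterms} and the expanded field strength \eqref{FieldStrFull}, which are reused in the gauge-transformation discussion); your route buys brevity and makes the structural identity with Lemma~\ref{thm:Lichnerowicz} transparent. The sign point you flag resolves as you expect: relabeling $\mu\leftrightarrow\nu$ in $-\tfrac12\gamma^\nu\gamma^\mu\otimes[\day_\mu,\day_\nu]_\circ$ and using antisymmetry of the commutator yields $+\tfrac12\gamma^\mu\gamma^\nu\otimes[\day_\mu,\day_\nu]_\circ$, consistent with \eqref{flatWeitzenb}.
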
%
\begin{proof}
Squaring $\Dg   = D\fuz \otimes 1_F +\omega\fuz + J\omega\fuz J\inv$ one gets
\begin{align} \label{referenciatrivial}
\Dg^2&= (D\fuz)^2 \otimes 1_F + (D\fuz \otimes 1_F )(\omega + J\omega J\inv)  \\ & + (\omega + J\omega J\inv) (D\fuz  \otimes   1_F)  +  (\omega + J\omega J\inv ) ^2 \,. \nonumber
\end{align}
The first summand is known from Lemma \ref{thm:Lichnerowicz}.  One
obtains the last summand by \eeqref{flucutated1gamma} and using the
Clifford algebra relations just as in the proof of that lemma. The
result reads
\begin{salign}
 (\omega + J\omega J\inv ) ^2 |_{X=0}\label{omega2}\numerada
 &= \sum_{\mu,\nu} \gamma^\mu \gamma^\nu \otimes \frac12   ( \am_\mu \circ \am_\nu)   \\ & + \sum_{\mu,\nu}\Big(\eta^{\mu\nu}1_V - \frac12\gamma^\nu\gamma^\mu  \Big) \otimes \Big( \am_\nu \circ \am_\mu + [\am_\mu, \am_\nu]\Big) \\
 &=\sum_{\mu,\nu} 1_V \otimes \eta^{\mu\nu }\am_\mu \circ \am_\nu  + \frac{1}{2}\gamma^\nu\gamma^\mu \otimes [\am_\mu, \am_\nu] \,.  
 \end{salign}
 being $[f,g]=f\circ g - g\circ f $ a simplified notation for the
 composition-commutator.  We renamed indices and rewrote the last
 summand in \eqref{omega2} as
 $\frac{1}{2}\gamma^\mu\gamma^\nu \otimes [\am_\mu, \am_\nu] $ To make
 the notation lighter, we also mean by $\am_\mu \am_\nu$ the
 composition $\am_\mu \circ \am_\nu$ from now on (also for
 $\kay_\mu$).  Using \eeqref{flucutated1gamma} one can obtain for the
 two summands in the middle of \eeqref{referenciatrivial}; further
 abbreviating $\km_\mu = \{ K_\mu \otimes 1_F , \balita \}_{e_\mu} $
 one obtains
 \allowdisplaybreaks[3]\begin{salign}\vphantom{\frac{1}{2}} &
   \big\{(D\fuz \otimes 1_F )(\omega + J\omega J\inv) + (\omega +
   J\omega J\inv) (D\fuz \otimes 1_F) \big\} |_{X=0}
   \\
   \vphantom{\frac{1}{2}} &=\sum_{\mu,\nu} \big(\gamma^\nu \otimes \{
   \mathsf{K}_\nu , \balita \}_{e_\nu} \big) \big(\gamma^\mu \otimes
   \{ \mathsf{A}_\mu , \balita \}_{e_\mu}\big ) + \big(\gamma^\mu
   \otimes \{ \mathsf{A}_\mu , \balita \}_{e_\mu}\big) \big(\gamma^\nu
   \otimes \{ \mathsf{K}_\nu , \balita \}_{e_\nu}\big)
   \\[-2pt]
   &=\sum_{\mu,\nu} (\gamma^\nu \otimes \km_\nu )(\gamma^\mu\otimes \am_\mu )
   +(\gamma^\mu\otimes \am_\mu )(\gamma^\nu \otimes \km_\nu
   )\vphantom{\frac{1}{2}}
 \\ 
&= \sum_{\mu,\nu} (\gamma^\nu \gamma^\mu \otimes \km_\nu \am_\mu)
+ (\gamma^\mu \gamma^\nu \otimes \am_\mu  \km_\nu) \vphantom{\frac{1}{2}}
\\
&= \sum_{\mu,\nu} \frac{1}{2}\Big( 2\eta^{\mu\nu} 1_V- \gamma^\nu \gamma^\mu \Big)
\otimes (\km_\mu \am_\nu + \am_\mu \km_\nu )
+
 \frac{1}{2} \gamma^\mu \gamma^\nu \otimes (\km_\mu \am_\nu + \am_\mu \km_\nu) 
 \\
 &=\sum_{\mu} 1_V \otimes ( \km_\mu \am^\mu +
 \am^\mu \km_\mu) 
 +\frac12 \sum_{\mu,\nu}\gamma^\mu\gamma^\nu 
 \otimes
 \Big( -\km_\nu \am_\mu - \am_\nu \km_\mu +\km_\mu \am_\nu + \am_\mu \km_\nu \Big)
 \\
 &= \sum_{\mu}1_V\otimes  
 \{\km_\mu ,\am^\mu\}  
  +\frac12\sum_{\mu,\nu} \gamma^\mu\gamma^\nu 
 \otimes \Big( 
 [\km_\mu,\am_\nu ] - [\km_\nu,\am_\mu] 
 \Big)\,. \label{crossedterms} \numerada
\end{salign}
Again, we used the Clifford relations for the gamma matrices and
renamed indices. 
Equations \eqref{omega2} and \eqref{crossedterms}
imply  
\begin{align*}
 (\Dg|_{\xm=\sm=0})^2&= 
D\fuz^2\otimes 1_F 
  +\frac12\sum_{\mu,\nu} \gamma^\mu\gamma^\nu \otimes\Big(
  [\km_\mu,\am_\nu ] - [\km_\nu,\am_\mu] + [\am_\mu, \am_\nu]
  \Big) \\[3pt] & + \sum_{\mu}1_V\otimes  \am^\mu \am_\mu  
 +\{\km^\mu, \am_\mu  \}\,.
\end{align*}
Expanding \eeqref{FieldStrength} 
\begin{align} \label{FieldStrFull}
 \mathscr{F}_{\mu\nu}= [\km_\mu , 
\am_\nu]_{\circ} -  [\km_\nu , \am_\mu]_{\circ}
+ [\am_\mu,  \am_\nu]_{\circ} +[ \km_\mu,\km_\nu]\,,
\end{align}
and using Lemma \ref{thm:Lichnerowicz}$|_{X=0}$ 
together with $\vartheta=   \km^\mu \circ \km_\mu + \{ \km^\mu , \am_\mu  \}_{\circ}  +\am^\mu \circ \am_\mu  $ one gets 
 the result.\end{proof}

\allowdisplaybreaks[2]
\begin{proposition}\label{thm:Weitzenb}
The fluctuated Dirac operator of a finite Yang-Mills geometry 
satisfies 
\begin{align}  \nonumber 
\Dg^2 &=
  \sum_{\mu,\nu}1_V \otimes \eta^{\mu\nu } (\km_\mu+\am_\mu) \circ(\km_\nu+\am_\nu)  + \frac{1}{2}\gamma^\mu\gamma^\nu \otimes [\km_\mu+\am_\mu, \km_\nu+\am_\nu]_\circ
 \\ 
\nonumber
  &
+
\sum_\mu   \det(\eta)  ( - e_\mu) 1_V\otimes (\xm_\mu + \sm_\mu) \circ (\xm_\mu + \sm_\mu )
\\ 
& \label{Weitzenb} 
+ \sum_{\mu < \nu } 
t_{\mu\nu} \gamma^{\mu} \gamma^\nu \otimes [\xm_\mu + \sm_\mu,\xm_\nu + \sm_\nu]_{\circ}
\\  
&  \nonumber 
+\frac12\sum_{\mu,\nu,\sigma,\alpha } s_{\mu\nu\alpha\sigma} \cdot \gamma^\alpha \gamma^\sigma \otimes \{ \xm_\nu + \sm_\nu,\km_\mu +\am_\mu\}_{\circ}
\\ \nonumber
&
+ \sum_{\mu}(-1)^\mu \gamma \otimes [\xm_\mu + \sm_\mu,\km_\mu +\am_\mu]_{\circ}\,.
\end{align}

\end{proposition}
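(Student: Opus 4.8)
The plan is to observe that Proposition~\ref{thm:Weitzenb} is a direct corollary of Lemma~\ref{thm:Lichnerowicz}, once one notices that the fluctuated Dirac operator has exactly the same structural shape as the purely fuzzy one. By Theorem~\ref{thm:flucutatedD_AC} together with Definition~\ref{def:FieldStr}, the fully fluctuated operator can be written as $\Dg=\sum_\mu\gamma^\mu\otimes(\km_\mu+\am_\mu)+\gamma^{\hat\mu}\otimes(\xm_\mu+\sm_\mu)$, since $\{\mathsf{K}_\mu+\mathsf{A}_\mu,\balita\}_{e_\mu}=\km_\mu+\am_\mu$ and $\{\mathsf{X}_\mu+\mathsf{S}_\mu,\balita\}_{e_{\hat\mu}}=\xm_\mu+\sm_\mu$. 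This is precisely the expression $D\fuz=\sum_\mu\gamma^\mu\otimes k_\mu+\gamma^{\hat\mu}\otimes x_\mu$ governed by Lemma~\ref{thm:Lichnerowicz}, under the replacement $k_\mu\mapsto\km_\mu+\am_\mu$ and $x_\mu\mapsto\xm_\mu+\sm_\mu$.

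First I would record this replacement explicitly and then argue that it may be carried through the entire derivation of \eqref{Lichnerowicz}. The essential point is that the computation proving Lemma~\ref{thm:Lichnerowicz}---the splitting into the five contributions $\mathfrak{a},\ldots,\mathfrak{e}$ and their reduction via the Clifford relations and Lemma~\ref{thm:gamma13}---is purely formal in $k_\mu$ and $x_\mu$: it never invokes their concrete shape $k_\mu=\{K_\mu,\balita\}_{e_\mu}$, using only that they are endomorphisms of the matrix factor tensored with $1_V$, hence commuting with every $\gamma^\mu$ (which acts on $V$ alone). Since $\km_\mu+\am_\mu$ and $\xm_\mu+\sm_\mu$ are again endomorphisms of $\MNn$ of this type, each of the gamma-matrix manipulations applies verbatim, and the same five terms reassemble into the right-hand side of \eqref{Weitzenb}. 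In particular, the sign of the $\det(\eta)$-term matches because $\det(\eta)(-e_\mu)=-\det(\eta)e_\mu$, and the chirality contribution carries over with the same coefficient $(-1)^\mu$, keeping track of the normalization convention for $\gamma$ (whether one writes the normalized chirality $\gamma\fuz$ with an explicit $1/\sigma(\eta)$ or the bare product $\gamma^0\gamma^1\gamma^2\gamma^3$).

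I expect the only real point requiring care---and hence the ``main obstacle''---to be the justification that the proof of Lemma~\ref{thm:Lichnerowicz} is genuinely insensitive to the internal (anti)commutator structure of $k_\mu,x_\mu$, so that the substitution is legitimate rather than merely cosmetic. Concretely, one must verify that nowhere in that computation is $[k_\mu,k_\nu]_\circ$ or $\{x_\nu,k_\mu\}_\circ$ simplified using the fact that $k,x$ are themselves (anti)commutators; inspecting the five Gothic contributions confirms they are treated as opaque composable operators throughout. Granting this, the proposition follows at once, and there is no need to re-square $\Dg=D\fuz\otimes1_F+\omega\fuz+J\omega\fuz J\inv$ by hand, as was done in the flat case of Proposition~\ref{thm:flatWeitzenb}; the general Lichnerowicz--Weitzenb\"ock identity \eqref{Weitzenb} is obtained immediately by the substitution.
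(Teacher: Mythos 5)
Your proposal is correct and is essentially the paper's own argument: the paper likewise notes that $\Dg=\sum_\mu\gamma^\mu\otimes(\km_\mu+\am_\mu)+\gamma^{\hat\mu}\otimes(\xm_\mu+\sm_\mu)$ has the same structure as the operator in Lemma \ref{thm:Lichnerowicz} and obtains \eqref{Weitzenb} by the substitution $k\to\km+\am$, $x\to\xm+\sm$. Your additional check that the Gothic-letter computation treats $k_\mu,x_\mu$ as opaque operators commuting with the gamma factors is a worthwhile explicit justification of a step the paper leaves implicit.
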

\begin{proof}
According to \eeqref{fullyfluctuatedD_AC} 
\[D_\omega  =\sum_\mu \gamma^\mu \otimes (\km_\mu +\am_\mu )+ \gamma^{\hat\mu} \otimes  (\xm_\mu+\sm_\mu)
\]   
so $D_\omega^2$ has the same structure already observed in the `fuzzy
Lichnerowicz formula' above (Lem. \ref{thm:Lichnerowicz}). To be
precise, notice that one can compute the square of the present Dirac
operator by replacing the in $D\fuz$ the following operators:
$k\to \km + \am $ and $x\to \xm + \sm$.
\end{proof}

\subsection{Gauge group and gauge transformations}
For any even spectral triple, the Hilbert space $\H$ is an
$\A$-bimodule.  The right action of $\A$ on the Hilbert space
$\mathcal H \ni \Psi$ is implemented by the real structure $J$,
$ \Psi a:= a^{\mtr{o}} \Psi := J a^* J\inv \Psi $.  Both actions
define the adjoint action $\Adj(u) \Psi := u \Psi u^*$ of the
unitarities $u\in \mathcal U(\A)$ on $\H$.  We want to determine the
action of the unitarities $\mtc U (\A)=\{u \in \A \mid u^*u=1=uu^* \}$
of the algebra $\A$ on the Dirac operator,
\begin{subequations}
 \label{GaugeD}
 \begin{align}
 U \big(D + \omega + \epsilon' J \omega J\inv \big) U^*
&= D + \omega_u + \epsilon' J
\omega_u J\inv\,, \\ \quad U:\!&=\Adj_u,\, u\in \mtc U(\A)\,,
\end{align}
which namely leads to the transformation rule  
\begin{align}
\omega \mapsto \omega_u = u \omega u^* +u[D,u^*]
\end{align}
\end{subequations}%
for the inner fluctuations. It is instructive to present a variation
of the original proof given in \cite[Prop. 1.141]{ConnesMarcolli} for
the analogous property of general spectral triples. Verifying this
again is important, since the axiom $[a,b^{\mtr o}]=0$ that appears in
\textit{op. cit.}, does not appear in the present axioms. However,
according to Remark \ref{rmk:commrel} above, it is a consequence of
these in the fuzzy setting. So one can see that not only there, but
also for gauge matrix spectral triples, the commutant property
$[\A,\A^{\mtr o}]=0 $ (elsewhere an axiom) holds.  Indeed, since
$J=C\otimes *_N \otimes *_n$, for $a,b\in\A$,
\begin{align}\label{commutantprop}
 a b^{\mtr o} (v\otimes m )  = 
 a J  ( v \otimes b^* m^* )&= 
  v \otimes (a  m b  ) 
 \\ 
 &= b^{\mtr o}  a (v\otimes m)\,,  \quad v \in V, m\in \MNn \,. \nonumber
\end{align}
The commutant property is essential for the
subalgebra \begin{align}\label{subalgebraJ} \A_{J}:=\{ a \in \A \mid a
  J = J a^*\} \subset \A
               \end{align}
               to be also a subalgebra of the center $Z(\A)$, as we
               will see later.

\begin{proof*}{Proof of \eeqref{GaugeD}; adapted from
                   \cite[\S 10]{ConnesMarcolli} to fuzzy geometries}
We split the adjoint action into the right action by $u^*$, $z:=(u^*)^{\mtr o}=J u J\inv $, and the left action by $u$, $U=u z$. 
\begin{itemize}\setlength\itemsep{.4em}
 \itemb Transformation of $D$: Applying 
$wD w^* = D + w [D,w^*]$ 
consecutively for 
$w=z,u$, one gets 
\begin{align}\label{above}
  U D U^* = u ( D + z[D,z]) u^*  = D + u[D,u^*] + z[D,z^*]\,.
\end{align}
\itemb Transformation of $\omega$: since $\omega \in \Omega^1_D(\A)$,
$\omega= a[D,b]$ (or sum of this 1-forms), one also has 
\begin{align*} 
\omega z^* = \omega u^{\mtr{o}} \stackrel{\eqref{orderone}}{=} a[D,b] u^{\mtr{o}}
=a  u^{\mtr{o}}[D,b]  \stackrel{\eqref{RLcommute}}=  u^{\mtr{o}} a [D,b] =   z^* \omega\,,
\end{align*}
so  $ U \omega U^* = u (z  \omega  z ^*)u  = u \omega u^* $, since $z z ^*=1$. 
Also the term  $u[D,u^*] $ is absorbed from the pure Dirac operator, then \[
\omega \mapsto  u \omega u^*  + u[D,u^*]\,.
\]

\itemb Transformation of $ J\omega J\inv $: Similarly one obtains
$U J \omega J\inv U^* = J (u \omega u^* ) J$. But actually
$ z[D,z^*] $ from Eq. \eqref{above} can be taken from the
transformation of the pure Dirac operator and passed to that of
$J \omega J\inv$, contributing, by the axioms \eqref{signos} of the
fuzzy geometry, since one can rearrange it as
\[
z[D,z^*] = J u J\inv \big( D J u^* J\inv - J u^* J\inv D \big)= \epsilon' J u [D, u^*] J\inv \,. \qedhere
\]
\end{itemize}
\end{proof*}

The gauge group $ \gauge$ of a real spectral triple is defined via the
adjoint action $\Adj_u(a) = u a u^*$ of the unitary group $\U(\A)$ on
$\H$ as follows:
\begin{equation} \label{gauge}
\gauge(\A,J) = \{ \Adj_u \,|\, u\in \U(\A) \} = \{  uJ uJ\inv \,| \,u \in \U(\A) \}\,.
\end{equation}
Before proceeding to compute it for a case concerning our study,
we do the notation more symmetric, setting $n_1=N$ and $n_2=n$
for the rest of this section. We assume $n_1 > n_2 \geq 2$.
The next statement is not surprising, but due to the presence of the tensor product, some care is needed. 
\begin{proposition} \label{thm:gaugegroup} Let
  $G_1\times G_2=G\fuz\times F$ be a gauge matrix
  geometry, with algebra $\A=A_1\otimes A_2$, $A_1=M_{n_1}(\C) $ and
  $ A_2 = M_{n_2}(\C)$, and reality $J=J_1\otimes J_2$.  The gauge
  group is given by the product of unitary projective groups
  $\gauge(\A,J) = \mathrm {PU}(n_1) \times \mathrm {PU}(n_2)$.
\end{proposition}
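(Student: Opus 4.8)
The plan is to exhibit $\gauge(\A,J)$ as the image of a group homomorphism out of the unitaries of $\A=A_1\otimes A_2$ and then to identify its kernel, the whole subtlety being the interaction of the two central $\U(1)$'s carried by the tensor factors. First I would make the adjoint action explicit: by the commutant property of Remark \ref{rmk:commrel} (cf. \eqref{RLcommute} and \eqref{commutantprop}), the real structure $J=C\otimes *_N\otimes *_n$ implements right multiplication, so for $u\in\U(\A)$ the operator $\Adj_u=uJuJ\inv$ acts on homogeneous vectors by $\Adj_u(v\otimes m)=v\otimes umu^*$ --- trivially on the spinor slot $V$ and by two-sided multiplication on $m\in\MN\otimes\Mn$. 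By the defining equation \eqref{gauge}, the assignment $u\mapsto\Adj_u$ is then a surjective homomorphism onto $\gauge(\A,J)$, so everything reduces to computing its kernel.

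Next I would exploit the product structure: the natural unitaries of the product geometry are the $u=u_1\otimes u_2$ with $u_i\in\U(A_i)=\U(n_i)$, giving a homomorphism $\U(n_1)\times\U(n_2)\to\gauge(\A,J)$, $(u_1,u_2)\mapsto\Adj_{u_1\otimes u_2}$. A pair lies in the kernel iff $(u_1\otimes u_2)\,m\,(u_1\otimes u_2)^*=m$ for every $m\in\MN\otimes\Mn$, i.e.\ iff $u_1\otimes u_2$ is central. Since $A_1$ and $A_2$ are simple, $Z(\A)=\C\cdot 1$ --- this is precisely the subalgebra $\A_J$ of \eqref{subalgebraJ}, forced into the center by the commutant property --- and a simple tensor of unitaries is scalar exactly when each factor is scalar. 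Hence $\ker=\{(\alpha 1_{n_1},\beta 1_{n_2}):\alpha,\beta\in\U(1)\}\cong\U(1)\times\U(1)$, the product of the two centers, and the first isomorphism theorem yields $\gauge(\A,J)\cong(\U(n_1)\times\U(n_2))/(\U(1)\times\U(1))=\mathrm{PU}(n_1)\times\mathrm{PU}(n_2)$.

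The step I expect to be the real obstacle --- and the reason care is needed despite the result being unsurprising --- is pinning the kernel down as the \emph{full} $\U(1)\times\U(1)$ rather than a single $\U(1)$. Indeed, the tensor map $(u_1,u_2)\mapsto u_1\otimes u_2$ into $\U(n_1n_2)$ already has the anti-diagonal $\{(\alpha 1_{n_1},\alpha\inv 1_{n_2})\}\cong\U(1)$ in its kernel; were this the only redundancy, the quotient would fail to split as a direct product. It is the subsequent passage to $\Adj$, which annihilates every global scalar, that removes the remaining diagonal $\U(1)$ and so lets both factors be rescaled independently. Establishing that conjugation by $u_1\otimes u_2$ is trivial if and only if \emph{both} $u_i$ are scalar --- via simplicity of the $A_i$ and faithfulness of two-sided multiplication on $\MN\otimes\Mn$ --- is exactly what produces $\ker=\U(1)\times\U(1)$ and hence the claimed product of projective unitary groups.
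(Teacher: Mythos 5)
Your proof is correct and reaches the right answer, but by a genuinely different and more economical route than the paper. The paper proceeds through the short exact sequence $1\to \mtc U(\A_J)\to\mtc U(\A)\to\gauge(\A,J)\to 1$ of \eqref{SESgauge}, proves $\mtc U(\A_J)=\mtc U(Z(\A))$, and then describes both $\mtc U(A_1\otimes A_2)$ and $\mtc U(Z(A_1\otimes A_2))$ as quotients of auxiliary groups decorated with $\re^+$ factors and $|\det|$ constraints (Lemmas \ref{thm:SESUAA} and \ref{thm:UofCenterOfTensorProd}), finally cancelling the common $\C^\times$ by the third isomorphism theorem. You instead compute $\Adj_u(v\otimes m)=v\otimes umu^*$ directly, parametrize by $\mtr U(n_1)\times\mtr U(n_2)$, identify the kernel of the adjoint action as the central unitaries $\{(\alpha 1_{n_1},\beta 1_{n_2})\}\cong\mtr U(1)\times\mtr U(1)$ via simplicity of the $A_i$, and apply the first isomorphism theorem; your closing observation that passing from the tensor map to $\Adj$ enlarges the redundancy from the anti-diagonal $\mtr U(1)$ to the full $\mtr U(1)^2$ is exactly the point that makes the quotient split as a direct product, and it is cleaner than the paper's bookkeeping. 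What the paper's heavier scaffolding buys is reusability: the $|\det|$-fibered groups set up in Lemmas \ref{thm:SESUAA} and \ref{thm:UofCenterOfTensorProd} are recycled in the unimodularity analysis (Lemmas \ref{thm:SU} and \ref{thm:secondSU}), which your argument does not prepare for. One caveat you share with the paper rather than introduce yourself: both arguments silently restrict to decomposable unitaries $u_1\otimes u_2$ (in the paper this is hidden in the surjectivity claim of Lemma \ref{thm:SESUAA}, in your write-up in the phrase ``the natural unitaries of the product geometry''); since $\A\cong M_{n_1n_2}(\C)$, the full unitary group $\mtc U(\A)$ is $\mtr U(n_1n_2)$ and conjugation by a non-decomposable unitary would enlarge the gauge group to $\mtr{PU}(n_1n_2)$, so this restriction is doing real work and deserves to be stated explicitly in either version.
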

   
Before proving this proposition, broken down in some lemmata below, we
recall the characterization of the gauge group that will be
used. Namely, the next short sequence is exact, according to
\cite[Prop. 6.5]{WvSbook}:
\begin{align}
\label{SESgauge} 
1\to \mtc U(\A_J) \to \mtc U (\A) \to \gauge (\A,J)\to 1\,.
\end{align}
Thus, if the groups $\mtc U(Z(\A))$ and $\mtc U(\A_J)$ coincide,
then \begin{align}\label{ifalgebrascoincide} \gauge (\A,J) \cong \mtc
  U(\A) / \mtc U(Z(\A)) \,.
     \end{align} We now verify that they do,
     so that after computing 
     $\mtc U(\A) $ and $\mtc U(Z(\A))$, we can finally obtain the gauge 
     group by this isomorphism \eqref{ifalgebrascoincide}.

\begin{lemma} For $\A$ and $J$ as in Proposition \ref{thm:gaugegroup},
$\mtc U(Z(\A)) =\mtc U(\A_J)$.
\end{lemma}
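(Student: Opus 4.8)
The plan is to establish the stronger set-theoretic equality $\A_J = Z(\A)$, from which the asserted equality of unitary groups follows at once by intersecting with the unitaries of $\A$. First I would record that, since $\A = A_1\otimes A_2 = M_{n_1}(\C)\otimes M_{n_2}(\C)\cong M_{n_1 n_2}(\C)$ is a (simple) full matrix algebra, its center is $Z(\A)=\C\cdot 1_\A$, so that $\mtc U(Z(\A))=\{\lambda 1_\A : |\lambda|=1\}\cong \mathrm U(1)$. It therefore suffices to show that the defining condition $aJ=Ja^*$ in \eqref{subalgebraJ} selects exactly the scalar multiples of the identity.

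For the inclusion $\A_J\subseteq Z(\A)$ I would rewrite the defining relation: $aJ=Ja^*$ is equivalent to $\varrho(a)=J\varrho(a^*)J\inv=a^{\mtr o}$, i.e. the left action of $a$ agrees with the right action $a^{\mtr o}=Ja^*J\inv$ introduced in Remark \ref{rmk:commrel}. The commutant property $[\A,\A^{\mtr o}]=0$ proven there (recall it is a consequence of the fuzzy axioms, not an assumption) then forces $\varrho(a)=a^{\mtr o}$ to commute with all of $\varrho(\A)$, whence $a\in Z(\A)$. This is the step that genuinely uses the earlier structural result and is the most conceptually delicate point, since $[\A,\A^{\mtr o}]=0$ is precisely the axiom that the present setting does not postulate.

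For the reverse inclusion $Z(\A)\subseteq\A_J$ I would take $z=\lambda 1_\A$ with $\lambda\in\C$ and check $zJ=Jz^*$ on a homogeneous vector $v\otimes m$. The only thing requiring care — the step I expect to be the main obstacle, or at least the easiest to get wrong — is the antilinearity of $J$: since $z^*=\bar\lambda 1_\A$, applying $J$ to $\bar\lambda(v\otimes m)$ yields $\lambda\, J(v\otimes m)$, which matches $zJ(v\otimes m)=\lambda\,J(v\otimes m)$, so no reality restriction on $\lambda$ survives and the entire complex line $\C\cdot 1_\A$ lies in $\A_J$. (Were $J$ linear one would instead be left only with the real scalars, which is the mechanism producing genuine real forms in other spectral triples.) Combining the two inclusions gives $\A_J=Z(\A)$, and restricting to unitary elements yields $\mtc U(\A_J)=\mtc U(Z(\A))$, as claimed.
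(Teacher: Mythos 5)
Your proof is correct and follows essentially the same route as the paper: the inclusion $\A_J\subseteq Z(\A)$ via the commutant property $[\A,\A^{\mtr o}]=0$ (established in Remark \ref{rmk:commrel} rather than assumed), and the reverse inclusion by checking $zJ=Jz^*$ on scalars using the antilinearity of $J$, the only cosmetic difference being that you read off $Z(\A)=\C\cdot 1_\A$ directly from simplicity of $M_{n_1n_2}(\C)$ while the paper routes through $Z(A_1\otimes A_2)=Z(A_1)\otimes Z(A_2)$ and Schur's lemma. No gaps.
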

\begin{proof}
First, observe that if $a\in \A_J$ and $b \in \A$, then
\[a b= Ja^* J\inv b= a^{\mtr 0 }b = b a^{\mtr 0 } = b Ja^* J= b
  a\,, \] where one gets the equalities at the very left or very right
by the defining property \eqref{subalgebraJ} of $\A_J$, and the third
equality by the commutant property \eqref{commutantprop}. Hence
$\A_J \subset Z(\A)$, and thus $\mtc U(\A_J) \subset \mtc
U(Z(\A))$. \par
We only have to prove the containment
$\mtc U(Z(\A)) \subset \mtc U(\A_J)$.  According to Lemma
\ref{thm:CenterOfTensorProd} (proven in Appendix
\ref{app:prooflemma}),
$Z(\A)=Z(A_1\otimes A_2)= Z(A_1)\otimes Z (A_2)$.  Since the
representation $\varrho$ of $ A_1 \otimes A_2 $ on
$ \H_1\otimes \H_2=V\otimes A_1 \otimes A_2$ is the fundamental on
each factor (except the trivial action on spinor space factor $V$) by
Schur's Lemma, each $ Z(A_i)$ consists of multiples of the identity.
Then, for any $z_1\otimes z_2 \in Z(A_1)\otimes Z(A_2)$ one
has \begin{align*} \varrho[( z_1\otimes z_2)^* ]J \Psi& = (1_V\otimes
  \bar z_1\otimes \bar z_2) (C\otimes *_1\otimes *_2) \Psi \\& =
  (C\otimes *_1\otimes *_2) (1 _V\otimes z_1\otimes z_2 )\Psi =J
  \varrho(z_1\otimes z_2 )\Psi
\end{align*}
where $*_i$ is the involution of $A_i$ and $\Psi$ an arbitrary vector
in the Hilbert space described above.  Therefore
$z_1\otimes z_2 \in \A_J$. One verifies that this proof leads equally
to $Z(\A) \subset \A_J$ by taking other representing element
$z_1\lambda \otimes z_2 \lambda\inv$ $(\lambda\in\C^\times)$ the same
conclusion $Z(\A) \subset \A_J$ is reached, which restricted to the
unitarities gives $\mtc U(Z(\A)) \subset \mtc U(\A_J)$.
\end{proof}

\begin{lemma} \label{thm:SESUAA}
The following is a short exact sequence of groups:
\[
1 \to  \C^\times \to \{\re^+ \times \mtr U (n_1)\} \times_{|\det|} \{\re^+ \times \mtr{U}(n_2)\} \stackrel{\alpha}{\to} \mtc U (A_1\otimes A_2) \to 1\,,
\]
where\footnote{That group is isomorphic to $\re^+ \times \mtr U (n_1) \times \mtr U (n_2)$, but we will keep the full notation and the embedding for later convenience.}
\begin{align}  
\{\re^+ \times \mtr U (n_1)\} &\times_{|\det|} \{\re^+ \times \mtr{U}(n_2)\}
\\
&:= \big\{ (\rho_1,u_1,\rho_2,u_2) \in 
\re^+ \times \mtr{U}(n_1) \times 
\re^+ \times \mtr{U}(n_2) \mid \rho_1\rho_2 =1 \big\}\,. \nonumber\end{align}
\end{lemma}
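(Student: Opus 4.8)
The plan is to produce the homomorphism $\alpha$ explicitly and then verify exactness at each of the three terms. I would set
\[
\alpha(\rho_1,u_1,\rho_2,u_2) := (\rho_1 u_1)\otimes(\rho_2 u_2)\,.
\]
First I would check that $\alpha$ really lands in $\mtc U(A_1\otimes A_2)$: by the defining constraint $\rho_1\rho_2=1$ the positive scalar collapses, so $(\rho_1 u_1)\otimes(\rho_2 u_2)=\rho_1\rho_2\,(u_1\otimes u_2)=u_1\otimes u_2$, a tensor product of unitaries and hence unitary. With the componentwise group law on the fibered product, $\alpha$ is a homomorphism since $(a\otimes b)(a'\otimes b')=aa'\otimes bb'$.

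Next, surjectivity. The image of $\alpha$ is exactly the set of \emph{decomposable} unitaries $\{u_1\otimes u_2 : u_i\in\mtr{U}(n_i)\}$, i.e. the image of $\mtr{U}(n_1)\times\mtr{U}(n_2)$ under the tensor factorization of $A_1\otimes A_2$; every such element is $u_1\otimes u_2=\alpha(1,u_1,1,u_2)$, so $\alpha$ is onto. I would flag the identification of the target with this \emph{factorized} unitary group (rather than the full $\mtr{U}(n_1 n_2)$, which a dimension count already excludes, $1+n_1^2+n_2^2-2\neq n_1^2 n_2^2$) as the one genuinely conceptual point; everything else is mechanical.

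The concrete computation is the kernel, i.e. exactness at the middle term. I would solve $\alpha(\rho_1,u_1,\rho_2,u_2)=1$, which by the collapse above reads $u_1\otimes u_2=1_{n_1 n_2}$. Reading this Kronecker product in $n_1\times n_1$ blocks, its $(i,j)$ block is $(u_1)_{ij}\,u_2$; comparing with $\delta_{ij}\,1_{n_2}$ forces the off-diagonal entries of $u_1$ to vanish and $(u_1)_{ii}\,u_2=1_{n_2}$, whence $u_2=\lambda^{-1}1_{n_2}$ and $u_1=\lambda 1_{n_1}$ with $\lambda\in\mtr{U}(1)$ by unitarity. Since $\rho_1\in\re^+$ stays free while $\rho_2=\rho_1^{-1}$, this yields
\[
\ker\alpha=\{(\rho,\lambda 1_{n_1},\rho^{-1},\bar\lambda 1_{n_2}) : \rho\in\re^+,\ \lambda\in\mtr{U}(1)\}\cong\re^+\times\mtr{U}(1)\cong\C^\times\,.
\]

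Finally I would record the explicit injection $z=\rho\,\ee^{\ii\theta}\mapsto(\rho,\ee^{\ii\theta}1_{n_1},\rho^{-1},\ee^{-\ii\theta}1_{n_2})$ realizing $\C^\times$ as this kernel, since Proposition \ref{thm:gaugegroup} uses precisely this embedding and not merely the abstract isomorphism type. The main obstacle is thus not any hard calculation but keeping the two sources of freedom straight: the redundant $\re^+$ factors together with the phase $\lambda$ must combine into exactly $\C^\times$ rather than $\mtr{U}(1)$, which is the whole reason for carrying the $\re^+$'s and the constraint $\rho_1\rho_2=1$ in the statement.
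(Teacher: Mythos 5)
Your proof is correct and follows essentially the same route as the paper's: the same map $\alpha$, the same kernel computation forcing $u_1$ and $u_2$ to be mutually inverse scalar matrices so that $\ker\alpha\cong\re^+\times\mtr U(1)\cong\C^\times$, and the same explicit embedding $z=\rho\,\ee^{\ii\theta}\mapsto(\rho,\ee^{\ii\theta}1_{n_1},\rho^{-1},\ee^{-\ii\theta}1_{n_2})$. The only (cosmetic) difference is that you get surjectivity directly from $u_1\otimes u_2=\alpha(1,u_1,1,u_2)$ whereas the paper routes it through the kernel of $|\det(\,\balita\,)|$; both arguments, as you rightly flag, read $\mtc U(A_1\otimes A_2)$ as the group of decomposable unitaries $g_1\otimes g_2$ rather than all of $\mtr U(n_1 n_2)$.
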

\begin{proof}
  Let us abbreviate the group in the middle as follows
  $G= \{\re^+ \times \mtr U (n_1)\} \times_{|\det|} \{\re^+ \times
  \mtr{U}(n_2)\}$ and define $\alpha : G\to \mtc U (A_1\otimes A_2)$
  by $(\rho_1,u_1,\rho_2,u_2) \mapsto \rho_1u_1\otimes \rho_2u_2$.
  Suppose $(\rho_1,u_1,\rho_2,u_2)\in \ker \alpha$, so that
  $\alpha(\rho_1,u_1,\rho_2,u_2) =\rho_1 u_1\otimes \rho_2 u_2
  =1_{n_1} \otimes 1_{n_2}$.  Since $ \ker \alpha \subset G$ one has
  $\rho_1\rho_2=1$. Thus, the previous equation yields
  $u_1\otimes u_2=1_{n_1} \otimes 1_{n_2}$, which says that in the lhs
  $u_1$ and $u_2$ are a scalar multiples and mutual inverses. Then
  $\ker \alpha \cong \{\rho,\lambda,\rho\inv,\lambda\inv\} $, and if
  one embeds $\C^\times \hookrightarrow G $ as follows (which will be
  the definition of the leftmost map)
  $z=|z|\cdot \ee^{\ii \theta}= r \cdot \ee^{\ii \theta} \mapsto (r,
  \ee^{\ii \theta} ,r\inv, \ee^{-\ii \theta} )$, one gets exactness at
  $G$. \par
  The rightmost map $\mtc U (A_1\otimes A_2) \to 1$ is the determinant
  in absolute value. Its kernel has elements
  $g_1\otimes g_2 \in \mtc U (A_1\otimes A_2)$ satisfying
  $|\det( g_1\otimes g_2)|=1$.  But this condition is satisfied by all
  elements $g_i= \rho_i u_i $, as far as
  $(\rho_1,u_1,\rho_2,u_2 )\in G$. Conversely, if
  $ g_1\otimes g_2 \in \mtc U (A_1\otimes A_2)$ satisfies
  $|\det (g_1\otimes g_2)|=1$, then there exists a
  $\lambda = |\lambda| \cdot \ee^{\ii \psi} \in \C^\times$ with
  $g_1= \lambda \cdot u_1$ and $g_2=\lambda\inv \cdot u_2$. Then
  $ \alpha (|\lambda|, \ee^{\ii \psi} \cdot u_1 , |\lambda|\inv,
  \ee^{\ii \psi} \cdot u_2) =g_1\otimes g_2$. Hence
  $\ker ( |\det (\balita ) |) \subset \im \alpha$ too, and the
  sequence is exact also at $\mtc U (A_1\otimes A_2)$.
 \end{proof}

\begin{figure}
 \includegraphics[width=.45\textwidth]{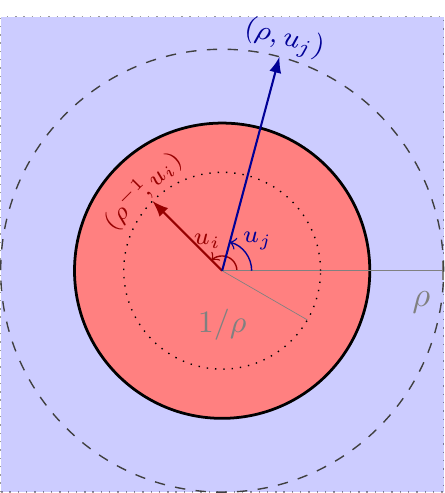}
 \caption{\label{fig:gaugegroup} Illustration of the group $G$. Such group appears in 
 the description of  $\mathcal{U}(A_1\otimes A_2)\cong G / \C^\times$.
There the indices refer to each $ \mtr{U}(n_i)$-factor, $i\neq j$,
 and the $\rho$ and $\rho\inv$ might lie outside the unit circle (thick line).}
\end{figure}

\begin{lemma} \label{thm:UofCenterOfTensorProd}
The following group sequence is exact:
\[
1\to \C^\times \to\C^\times\times_{|\det|}
\C^\times \to \mtc U \{Z(A_1\otimes  A_2)\} \to 1\,.
\]
The group $\C^\times\times_{|\det|}
\C^\times$ in the middle is 
the subgroup of $(\C^\times)^2$
whose entries $(z_1,z_2)$ satisfy
$|z_1|=|z_2|\inv$.
\end{lemma}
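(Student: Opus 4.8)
The plan is to mimic the proof of Lemma~\ref{thm:SESUAA}, specializing the matrix factors to scalars. First I would invoke Lemma~\ref{thm:CenterOfTensorProd} to identify $Z(A_1\otimes A_2)=Z(A_1)\otimes Z(A_2)=\C 1_{n_1}\otimes \C 1_{n_2}$, so that a central element has the form $z_1 1_{n_1}\otimes z_2 1_{n_2}=z_1z_2\,(1_{n_1}\otimes 1_{n_2})$ and is unitary precisely when $|z_1z_2|=1$; hence $\mtc U\{Z(A_1\otimes A_2)\}\cong \mtr U(1)$.

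Next I would specify the two maps. The right-hand arrow $\beta\colon\C^\times\times_{|\det|}\C^\times\to \mtc U\{Z(A_1\otimes A_2)\}$ sends $(z_1,z_2)\mapsto z_1 1_{n_1}\otimes z_2 1_{n_2}$; the defining constraint $|z_1|=|z_2|\inv$ guarantees $|z_1z_2|=1$, so the image indeed lands in the unitary center. The left-hand arrow $\C^\times\to \C^\times\times_{|\det|}\C^\times$ is $\lambda\mapsto(\lambda,\lambda\inv)$, which respects the constraint since $|\lambda|\,|\lambda\inv|=1$. This is the exact scalar analogue of the embedding $z=r\ee^{\ii\theta}\mapsto(r,\ee^{\ii\theta},r\inv,\ee^{-\ii\theta})$ used in Lemma~\ref{thm:SESUAA}, because there the two combined factors are $\rho_1u_1=z$ and $\rho_2u_2=z\inv$.

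Then I would check exactness at each of the three spots. Injectivity at the left is immediate: $(\lambda,\lambda\inv)=(1,1)$ forces $\lambda=1$. For exactness in the middle, $(z_1,z_2)\in\ker\beta$ means $z_1z_2=1$, i.e. $z_2=z_1\inv$ (the norm constraint is then automatic), so $\ker\beta=\{(z_1,z_1\inv)\mid z_1\in\C^\times\}$, which is precisely the image of the left-hand map. For surjectivity of $\beta$, given a unitary scalar $c\,(1_{n_1}\otimes 1_{n_2})$ with $|c|=1$, I would exhibit the preimage $(c,1)$, which satisfies $|c|\cdot 1=1$ and maps to $c\,(1_{n_1}\otimes 1_{n_2})$.

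I do not expect a genuine obstacle here; the only point requiring a little care is the identification $z_1 1_{n_1}\otimes z_2 1_{n_2}=z_1z_2\,(1_{n_1}\otimes 1_{n_2})$, which collapses the two scalar degrees of freedom onto the single product $z_1z_2$ and thereby explains why $\ker\beta$ is a full copy of $\C^\times$ rather than being trivial.
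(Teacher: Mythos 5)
Your proposal is correct and follows essentially the same route as the paper: the paper defines the identical maps $\lambda\mapsto(\lambda,\lambda\inv)$ and $(z_1,z_2)\mapsto z_1\otimes z_2$ and then simply defers the exactness checks to the (harder) argument of Lemma \ref{thm:SESUAA}, whereas you carry them out explicitly. Your additional observation that $z_1 1_{n_1}\otimes z_2 1_{n_2}=z_1z_2\,(1_{n_1}\otimes 1_{n_2})$, so that $\mtc U\{Z(A_1\otimes A_2)\}\cong \mtr U(1)$ and the kernel is a full copy of $\C^\times$, is accurate and makes the verification cleaner than the paper's cross-reference.
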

\begin{proof}
  The embedding
  $\C^\times \hookrightarrow \C^\times\times_{|\det|} \C^\times $ is
  given by $\lambda\mapsto (\lambda,\lambda\inv)$ and the next map
  $\C^\times\times_{|\det|} \C^\times \to \mtc U \{Z(A_1\otimes
  A_2)\}$ by $(z_1,z_2) \mapsto z_1\otimes z_2$.  Being the rest an
  easier case than that the proof of Lemma \ref{thm:SESUAA}, the
  details on exactness can be deduced from there.
\end{proof}

We are now in position to give the missing proof.

\begin{proof*}{Proof of Proposition \ref{thm:gaugegroup}.}
According to \eeqref{ifalgebrascoincide}, 
\begin{align*}
  \gauge(\A,J) & \cong {\mtc U (A_1\otimes A_2)}\big /{ \mtc U \{Z(A_1\otimes  A_2)\}} \\ &\cong {\mtc U (A_1\otimes A_2)}\big /{ \mtc U \{Z(A_1)\otimes  Z(A_2)\}}\,.\end{align*}
where one passes to the second line by Lemma \ref{thm:CenterOfTensorProd}.
By Lemmas \ref{thm:UofCenterOfTensorProd} for the group in the `numerator' and Lemma 
\ref{thm:SESUAA} for the one in the `denominator',
\[\gauge(\A,J)\cong  
 \frac{
\big[\{\re^+ \times \mtr U (n_1)\} \times_{|\det|} \{\re^+ \times \mtr{U}(n_2)\}\big]\big/\C^\times 
}{ 
 \big(\C^\times\times_{|\det|}
\C^\times\big) /\C^\times
 }
\,.
\]
By the third group isomorphism theorem, one can `cancel out' the
$\C^\times$, and get
\[\gauge(\A,J)\cong  
 \frac{
\big[\{\re^+ \times \mtr U (n_1)\} \times_{|\det|} \{\re^+ \times \mtr{U}(n_2)\}\big]
}{ 
 \big(\C^\times\times_{|\det|}
\C^\times\big)
}
\,.
\]
Notice that in each group $|\det|$ only constrains 
the real parts, while it respects the $\mtr U(n_1)$ and 
 $\mtr U(n_2)$ in the numerator and 
  the two factors $\mtr U(1)$ of each $\C^\times$ in the denominator. We conclude that
\[\gauge(\A,J)\cong  \frac{
\mtr U(n_1)\times \mtr U(n_2) }
{\mtr U(1)\times \mtr U(1) }\cong \mtr{PU}(n_1)\times \mtr{PU}(n_2)
\,. \qedhere
\]  
\end{proof*}

\subsection{Unimodularity and the gauge group}
It turns out that for real algebras the gauge group does not
automatically include the unimodularity condition, and this property
needs to be added by hand.  Since this is relevant for the algebra
that one uses as input to derive the Standard Model (cf. discussion in
\cite[Ch. 8.1.1, Ch. 11.2]{WvSbook}) we address also the unimodularity
of the gauge group, when the base itself is noncommutative.
\par 
Given a matrix representation $\varrho$ of a unital $*$-algebra $A$,
the special unitary group of $A$ is defined by
\[\mathcal{SU} (A ) := \{ m \in \mathcal U(A)\mid \det [\varrho(m)] =1 \} \,.\]
We now define the following morphisms $\delta_i : \mtr{GL}(n_i) \to \C^\times$,
 \begin{align}
 \delta_1 (g_1)= [\textstyle\det_{n_1}(g_1)]^{n_2} \qquad \text{ and }  \qquad \delta_2(g_2) = [\det_{n_2}(g_2)]^{-n_1}\,,
 \label{morfismos}
 \end{align}
 which shall be useful in the description of the special unitary group
 we care about (notice that both morphisms depend on the pair
 $(n_1,n_2)$ and the different signs in the exponents).

\begin{lemma}\label{thm:SU}The special unitary group of $A_1\otimes A_2$,
\[
\mtc{SU}(A_1\otimes A_2) = \{ u_1\otimes u_2 \in \mtc{U}(A_1\otimes A_2) \mid \det(u_1\otimes u_2)=1 \}
\]
fits in a short exact sequence of groups:
\[
1 \to \mtr{U}(1) \hookrightarrow \mtr U(n_1) \times_{\det} \mtr U(n_2 ) 
\stackrel{\kappa}\to \mtc{SU}(A_1\otimes A_2)  \to 1\,, 
\] where  $\mtr U(n_1) \times_{\det} \mtr U(n_2 )  $ is the (categorical) pullback of any of the two morphisms \eqref{morfismos} along the remaining one. 
\end{lemma}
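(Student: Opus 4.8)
The plan is to write the arrow $\kappa$ explicitly and to recognize the pullback as the locus cut out by the tensor-product determinant. First I would recall the Kronecker identity $\det(g_1\otimes g_2)=(\det g_1)^{n_2}(\det g_2)^{n_1}$ for $g_1\in M_{n_1}(\C)$ and $g_2\in M_{n_2}(\C)$. By definition the categorical pullback $\mtr U(n_1)\times_{\det}\mtr U(n_2)$ is the set of pairs $(u_1,u_2)$ with $\delta_1(u_1)=\delta_2(u_2)$, i.e. $(\det u_1)^{n_2}=(\det u_2)^{-n_1}$, equivalently $(\det u_1)^{n_2}(\det u_2)^{n_1}=1$. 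Comparing with the Kronecker identity, this is precisely $\det(u_1\otimes u_2)=1$; the opposite signs of the two exponents in \eqref{morfismos} are engineered exactly so that the constraint assumes the form of a fibre product. As a fibre product is symmetric in its two legs, it is immaterial which of $\delta_1,\delta_2$ is pulled back along the other.

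Set $\kappa(u_1,u_2):=u_1\otimes u_2$. By the previous paragraph its image lies in $\mtc{SU}(A_1\otimes A_2)$, and it is a homomorphism since $(u_1\otimes u_2)(u_1'\otimes u_2')=(u_1u_1')\otimes(u_2u_2')$. For surjectivity I would recycle the normalization from the proof of Lemma \ref{thm:SESUAA}: given any factorized unitary $g_1\otimes g_2$ of determinant one, the identity $(g_1^*g_1)\otimes(g_2^*g_2)=1$ forces each $g_i^*g_i$ to be a positive scalar multiple of $1_{n_i}$, so there is $\lambda\in\C^\times$ with $u_1:=\lambda\inv g_1$ and $u_2:=\lambda g_2$ both unitary; then $u_1\otimes u_2=g_1\otimes g_2$, and $\det(u_1\otimes u_2)=1$ places $(u_1,u_2)$ in the pullback, so it maps onto the given element.

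It remains to compute $\ker\kappa$. If $u_1\otimes u_2=1_{n_1}\otimes 1_{n_2}$ then, exactly as in Lemma \ref{thm:SESUAA}, $u_1$ and $u_2$ are mutually inverse scalar matrices, $u_1=\lambda 1_{n_1}$ and $u_2=\lambda\inv 1_{n_2}$, with $|\lambda|=1$ by unitarity. Each such pair does satisfy the pullback constraint, as $\delta_1(\lambda 1_{n_1})=\lambda^{n_1n_2}=\delta_2(\lambda\inv 1_{n_2})$. Hence $\ker\kappa=\{(\lambda 1_{n_1},\lambda\inv 1_{n_2}):\lambda\in\mtr U(1)\}$, which is exactly the image of the injective leftmost map $\mtr U(1)\hookrightarrow\mtr U(n_1)\times_{\det}\mtr U(n_2)$, $\lambda\mapsto(\lambda 1_{n_1},\lambda\inv 1_{n_2})$. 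Injectivity of this map yields exactness at $\mtr U(1)$ and at the middle term, while surjectivity of $\kappa$ yields exactness at $\mtc{SU}(A_1\otimes A_2)$.

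The argument is largely bookkeeping, and the two points requiring care are inherited from Lemma \ref{thm:SESUAA}: making the exponents $n_2$ and $-n_1$ in $\delta_1,\delta_2$ line up with the Kronecker-determinant powers so that the pullback truly reproduces $\det(u_1\otimes u_2)=1$; and the normalization in the surjectivity step, which relies on the fact that a pure-tensor unitary fixes its factors only up to the rescaling $(u_1,u_2)\mapsto(\lambda\inv u_1,\lambda u_2)$ — the same $\C^\times$-ambiguity that produces the kernel. I do not expect any genuine obstacle beyond tracking which determinant power absorbs this scaling.
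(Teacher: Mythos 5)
Your proposal is correct and follows essentially the same route as the paper: the same map $\kappa$, the same identification of $\ker\kappa$ with the antidiagonal $\mtr U(1)$, and a surjectivity argument that, like the paper's polar-decomposition step, rests on the observation that a pure-tensor unitary of nonzero determinant has factors that are unitary up to a common rescaling. The only cosmetic difference is that you extract the scalar directly from $(g_1^*g_1)\otimes(g_2^*g_2)=1$ rather than diagonalizing the positive parts of a polar decomposition.
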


\begin{proof}
  Define the homomorphism $\kappa$ by
  $(u_1,u_2)\mapsto u_1\otimes u_2$. Suppose that
  $u_1 \otimes u_2 \in \ker \kappa$, so
  $\kappa(u_1,u_2) = u_1 \otimes u_2 = 1_{n_1}\otimes 1_{n_2}$.  This
  means that there exists a $\lambda \in \C^\times $ with
  $u_1= \lambda 1_{n_1} $ and $u_2 = \lambda\inv 1_{n_2}$, but by
  assumption $u_i \in \mtr{U}(n_i)$, so $\lambda \in \mtr U(1)$.  Thus
  the image of the inclusion
  $\mtr U(1) \hookrightarrow \mtr U(n_1) \times_{\det} \mtr U(n_2 ) $
  $\lambda \mapsto (\lambda 1_{n_1} ,\lambda\inv 1_{n_2})$ is the
  kernel of $\kappa$. \par
  The last map to the right is the determinant. If
  $u_1 \otimes u_2 \in \im \kappa$, then by definition of the fibered
  group $\mtr U(n_1) \times_{\det} \mtr U(n_2 )$,
  $\delta_1 (u_1) = \delta_2(u_2)$ holds. But this happens if and only
  if
  $1=[\det_{n_1}(u_1)]^{n_2}\cdot [\det_{n_2}(u_2)]^{n_1}= \det
  (u_1\otimes u_2)=(\det \circ \kappa) (u_1,u_2)$. Therefore the image
  of $\kappa$ is in the kernel of the determinant. \par
  On the other hand, if
  $g_1\otimes g_2\in \ker (\det ) \subset \mtc{SU}(A_1\otimes A_2) $
  then each $g_i \in \mtr{GL}(n_i) $ (otherwise its determinant
  vanishes and by assumption it is $1$) so we can write them in matrix
  polar form $g_i= p_i u_i$ with $u_i \in \mtr{U}(n_i)$ and
  $p_i =p_i^*$ positive definite.  Since, in particular,
  $p_1 u_1\otimes p_2u_2\in \mtc{U}(A_1\otimes A_2)$, one obtains
\begin{equation}
1_{n_1} \otimes 1_{n_2}= p_1 u_1 u_1^* p_1^* \otimes p_2 u_2 u_2^* p_2^* = p_1^2 \otimes p_2^2\,. \label{quitarcuadrados}
\end{equation}
Being both $p_i$'s positive definite Hermitian matrices, they can be
written as $p_i= v_i \Lambda_i v_i^*$ for
$\Lambda_i =\diag (\lambda_{i,1},\ldots, \lambda_{i,n})$ with
$\lambda_{i,m} \geq 0$ and $v_i \in \mtr U (n_i)$. But then
\eeqref{quitarcuadrados} means the existence of certain $r\in \re^+$
for which $v_1 (\Lambda_1)^2 v_1^* = r \cdot 1_{n_1} $ and
$v_2(\Lambda_2)^2 v_2^* = r\inv \cdot 1_{n_2}$.  Solving each equation
leads to $\Lambda_1=r^{1/2}1_{n_2}$ and $\Lambda_2 = r^{-1/2}1_{n_2}$,
so we can forget the $v_i$'s, since $\Lambda_i$ is central.

In summary, there exist scalars $\rho_i$ such that 
$p_i=\rho_i 1_{n_i}$ with $\rho_i>0$ and $\rho _1=1/\rho_2$. This relation shows
that $g_1 \otimes g_2 =   u_1\otimes u_2 = \kappa(u_1,u_2)$,
since in the tensor product $ g_1 \otimes g_2 = ( \lambda\inv g_1 ) \otimes (\lambda g_2)$
for any $\lambda\in \C^\times$ (here, in particular, choosing $\lambda= \rho_1$).
By construction, $u_i$ are unitarities, which, by assumption, moreover satisfy $1=\det(g_1\otimes g_2)
=\det( u_1 \otimes u_2 )=\delta_1(u_1)  / \delta_2(u_2)$. Hence $(u_1,u_2)\in \mtr U(n_1) \times_{\det} \mtr U(n_2 )  $
and $g_1\otimes g_2 = \kappa(u_1,u_2)$, which concludes the proof of exactness at $\mtc{SU}(A_1\otimes A_2)$.
\end{proof}

\begin{lemma}\label{thm:secondSU}
The following sequence of groups is exact:
\[
1 \to \mu_{\mtr{mcd}(n_1,n_2)} \stackrel{\iota}\to \mtr{U}(1) \times \mtr{SU}(n_1) \times \mtr{SU}(n_2) 
 \stackrel{\xi}\to \mtr{U}(n_1)\times_{\det}   \mtr{U}(n_2) \stackrel{\zeta}\to \mu_{n_1\cdot n_2} \to 1
\]
where $\mtr{mcd}(n_1,n_2)$ is the maximum common divisor of $n_1 $ and $n_2$.
\end{lemma}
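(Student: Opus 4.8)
The plan is to write down explicit group homomorphisms $\iota,\xi,\zeta$ and then verify exactness at each of the four positions by direct determinant and root-of-unity bookkeeping, in the same spirit as the proof of Lemma \ref{thm:SU}. For the connecting maps I would take
\[
\iota(\lambda)=(\lambda,\lambda\inv 1_{n_1},\lambda 1_{n_2}),\qquad \xi(\lambda,v_1,v_2)=(\lambda v_1,\lambda\inv v_2),
\]
and let $\zeta$ be the determinantal obstruction described below. That $\iota$ is valued in $\mtr U(1)\times\mtr{SU}(n_1)\times\mtr{SU}(n_2)$ already forces $\lambda^{n_1}=\lambda^{n_2}=1$, i.e. $\lambda\in\mu_{\mtr{mcd}(n_1,n_2)}$, so $\iota$ is a well-defined injection; and since $\delta_1(\lambda v_1)=\lambda^{n_1 n_2}=\delta_2(\lambda\inv v_2)$, the pair $\xi(\lambda,v_1,v_2)$ indeed lands in the pullback $\mtr U(n_1)\times_{\det}\mtr U(n_2)$ of the morphisms \eqref{morfismos}.

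Exactness at the two easy spots comes first. Injectivity of $\iota$ is immediate, and I would obtain surjectivity of $\zeta$ by exhibiting preimages built from diagonal unitaries $\diag(\omega,1,\dots,1)$. For exactness at $\mtr U(1)\times\mtr{SU}(n_1)\times\mtr{SU}(n_2)$ I would compute $\ker\xi$ directly: the equation $\xi(\lambda,v_1,v_2)=(1_{n_1},1_{n_2})$ forces $v_1=\lambda\inv 1_{n_1}$ and $v_2=\lambda 1_{n_2}$, and special unitarity of the $v_i$ gives $\lambda\in\mu_{n_1}\cap\mu_{n_2}=\mu_{\mtr{mcd}(n_1,n_2)}$, which is exactly $\im\iota$.

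The crux is exactness at the pullback, i.e. identifying $\im\xi=\ker\zeta$. Given $(u_1,u_2)$ in the pullback, it lies in $\im\xi$ precisely when there is a single scalar $\lambda\in\mtr U(1)$ with $u_1=\lambda v_1$ and $u_2=\lambda\inv v_2$ for special unitary $v_i$; taking determinants, this is the solvability of the simultaneous system $\lambda^{n_1}=\det u_1$, $\lambda^{n_2}=(\det u_2)\inv$. The pullback relation $(\det u_1)^{n_2}(\det u_2)^{n_1}=1$ guarantees that the two equations become compatible after raising to the $n_1n_2$-th power, but a common $\lambda$ need not exist: fixing one root $\lambda_1$ of $\lambda^{n_1}=\det u_1$, the discrepancy $\omega:=\lambda_1^{\,n_2}\det u_2$ is forced to be a root of unity, and its class — taken modulo the indeterminacy $\lambda_1\mapsto\zeta\lambda_1$, $\zeta\in\mu_{n_1}$, which sends $\omega\mapsto\zeta^{n_2}\omega$ — is the genuine obstruction to lifting. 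I would define $\zeta(u_1,u_2)$ to be precisely this class, so that $\ker\zeta=\{\text{liftable pairs}\}=\im\xi$ holds by construction.

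The main obstacle is then the purely arithmetic identification of the finite obstruction group: one must prove that the classes $\omega$ sweep out exactly the group of roots of unity recorded in the statement, and that $\zeta$ is a homomorphism onto it. This reduces to computing the cokernel of the $n_2$-power map on $\mu_{n_1}$ — equivalently, to counting the connected components of the pullback $\mtr U(n_1)\times_{\det}\mtr U(n_2)$, since $\im\xi$ is forced to be its identity component — a Bézout-type count governed by $n_1,n_2$ and their greatest common divisor. Getting the order of this group right is the delicate point of the argument; once it is settled, exactness at the remaining two positions and the surjectivity of $\zeta$ are routine.
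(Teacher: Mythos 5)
Your maps $\iota$ and $\xi$ coincide with the paper's, and your verification of exactness at $\mu_{\mtr{mcd}(n_1,n_2)}$ and at $\mtr U(1)\times\mtr{SU}(n_1)\times\mtr{SU}(n_2)$ is the same computation the paper performs (you even silently correct a slip there: the paper writes $\delta_1(zm_1)=z$ where the common value is $z^{n_1n_2}$). The divergence is at the last node. The paper defines $\zeta(u_1,u_2)=[\det u_1]^{1/n_1}[\det u_2]^{1/n_2}$, which is multivalued --- changing the choice of roots multiplies the value by an arbitrary element of $\mu_{n_1}\cdot\mu_{n_2}=\mu_{\mtr{lcm}(n_1,n_2)}$ --- and its argument that $\ker\zeta\subset\im\xi$ amounts to exactly your liftability criterion: a simultaneous consistent choice $\lambda_0^{n_1}=\det u_1$, $\lambda_0^{-n_2}=\det u_2$ exists iff $(u_1,u_2)\in\im\xi$. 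Your obstruction-class formulation is the honest, single-valued version of the same idea, and it does give $\ker\zeta=\im\xi$ by construction.

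The genuine gap is the step you explicitly defer, and it cannot be closed as the statement demands. The obstruction group you describe --- the cokernel of the $n_2$-power map on $\mu_{n_1}$ --- is $\mu_{n_1}/\mu_{n_1/\mtr{mcd}(n_1,n_2)}\cong\mu_{\mtr{mcd}(n_1,n_2)}$, not $\mu_{n_1n_2}$. Equivalently, the image of $\mtr U(n_1)\times_{\det}\mtr U(n_2)$ under $(\det,\det)$ is the subgroup of $\mtr U(1)^2$ cut out by $z_1^{n_2}z_2^{n_1}=1$, which has exactly $\mtr{mcd}(n_1,n_2)$ connected components, and $\im\xi$ is precisely the preimage of the identity component, so the cokernel has order $\mtr{mcd}(n_1,n_2)$. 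Concretely, for $n_1=2$, $n_2=3$ the pullback is connected and $\xi$ is already surjective (given $(\det u_1)^3(\det u_2)^2=1$, set $z=(\det u_1\det u_2)\inv$; then $z^{2}=\det u_1$ and $z^{-3}=\det u_2$), so the cokernel is trivial rather than $\mu_6$. Your plan to get surjectivity onto $\mu_{n_1n_2}$ from diagonal unitaries therefore fails; what your (correct) framework actually establishes is the sequence with $\mu_{n_1n_2}$ replaced by $\mu_{\mtr{mcd}(n_1,n_2)}\cong\mu_{n_1n_2}/\mu_{\mtr{lcm}(n_1,n_2)}$. The paper's proof does not detect this because it never fixes the branch ambiguity of its $\zeta$ and never checks its surjectivity.
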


\begin{proof}
From left to right we start defining the maps and checking exactness along the way.
The first map is $\iota(\lambda) = (\lambda, \lambda\inv \cdot   1_{n_1}, \lambda \cdot 1_{n_2})$.
Since $\det_{n_i}(z\cdot 1_{n_i})=z^{n_i}=1$ for $z \in  \mu_{\mtr{mcd}(n_1,n_2)}$,
the map is well-defined, and clearly is also injective.
\par 
The next map is given by $ \xi(z,m_1,m_2) = (z m_1,z\inv m_2)$.  Since
$m_i$ have unit determinant, the condition
$\delta_1 (z m_1 ) = z = 1/ \delta_2(z\inv m_2) $ is satisfied
(cf. \eeqref{morfismos} above). The pair $(z m_1,z\inv m_2)$ is thus
in the fibered product $ \mtr{U}(n_1)\times_{\det} \mtr{U}(n_2)$, by
its definition and $\xi$ is thus well-defined. \par
To verify the exactness, notice that if $(z,m_1,m_2) $ is such that
$\xi(z,m_1,m_2)=(zm_1,z\inv m_2)=(1_{n_1},1_{n_2})$, since each
$m_i \in \mtr{SU}(n_i)$, one has $\det_{n_1}(z\inv 1_{n_1})=1$ and
$\det_{n_2}(z 1_{n_2})=1$. Hence $z \in \mu_{n_1}\cap \mu_{n_2}$.
Thus $(z,m_1,m_2)= (z,z\inv \cdot1_{n_1}, z\cdot 1_{n_1})$ and
therefore $\ker \xi \subset \im \iota$ since the group that generates
this intersection $z \in \mu_{n_1}\cap \mu_{n_2}$ is
$ \mu_{\mtr{mcd}(n_1,n_2)}$. The other containment holds also, since
 \[\xi \circ \iota(\lambda)= \xi (\lambda , \lambda\inv  \cdot  1_{n_1}, \lambda\cdot  1_{n_1})= 
   (\lambda \cdot [\lambda\inv \cdot 1_{n_1}], \lambda \inv \cdot
   [\lambda \cdot 1_{n_2}]\,)= (1_{n_1},1_{n_2})\] for each
 $\lambda \in \mu_{\mtr{mcd}(n_1,n_2)}$. Hence $\ker \xi = \im \iota$
 and the sequence is exact at the node having the triple product. \par
 The last map is given by
 $ \zeta(u_1,u_2)= \textstyle [\det_1(u_1)]^{1/n_1} \cdot
 [\det_2(u_2)]^{1/n_2}$, which for $(u_1,u_2)$ in the fibered group
 product satisfies, by definition,
 \[ 1=\delta_1(u_1) / \delta_2(u_2) = \{\textstyle
   [\det_1(u_1)]^{1/n_1} \cdot [\det_2(u_2)]^{1/n_2}\}^{n_1n_2}=
   [\zeta(u_1,u_2)]^{n_1n_2}\,.\] (well-definedness).  To see that
 $\ker \zeta \subset \im \xi$, take $(u_1,u_2)$ in the fibered product
 group satisfying
 $\zeta (u_1,u_2)= \textstyle [\det_1(u_1)]^{1/n_1} \cdot
 [\det_2(u_2)]^{1/n_2} =1$.  This means that
\begin{equation}\textstyle \label{lambdazero}
\lambda_0 : =[\det_{n_1} (u_1)]^{1/n_1} \qquad \text{and} \qquad 
\lambda_0\inv   =[\det_{n_2}(u_2)]^{1/n_2}
\end{equation}
are consistent. Due to \eeqref{lambdazero}, conveniently used, both
matrices $\lambda_0\inv \cdot u_1 $ and $ \lambda_0 \cdot u_2 $ are
special unitary, and we also obtain
$(u_1,u_2)=\xi(\lambda_0, \lambda_0\inv \cdot u_1, \lambda_0 \cdot u_2
)$.\par
Finally, on the other hand,
\[(\zeta\circ \xi)(\lambda, m_1,m_2)= \zeta(\lambda m_1,\lambda\inv m_2) =\textstyle[\det_{n_1} (\lambda 1_{n_1})]^{1/n_1}
 [\det_{n_2} (\lambda 1_{n_2})]^{1/n_2}=1\] so
the inverted injection holds $\ker \zeta \supset \im \xi$ too.
\end{proof}

Lemma \ref{thm:secondSU} extracts the Lie group part of $\mtr U(n_1) \times_{\det} \mtr U(n_2 ) $. (This group appears in the description of the unimodular gauge group in Lemma \ref{thm:SU}.)  Its proof was inspired by one of 
Chamseddine-Connes-Marcolli, but is different from it due to the
presence of the tensor product of algebras, whilst
\cite[Prop. 2.16]{CCM} or \cite[Prop. 1.185]{ConnesMarcolli} focus on
unitarities of semi-simple algebras,
$A_1\oplus A_2\oplus \ldots \oplus A_k$.
\par 
In particular for the Standard Model
\cite[Prop. 1.199]{ConnesMarcolli}, the unimodular gauge group is the
well-known
$\{ \mtr U(1) \times\mtr {SU}(2) \times \mtr {SU}(3) \}/ \mu_6$
Standard Model gauge group (cf. also \cite[\S
6.2]{vandenDungen:2012ky}).  The embedding of the group $\mu_6$ of
sixth roots of unit in the Lie group is given by
$\lambda \mapsto (\lambda,\lambda^3,\lambda^2)$, as pointed out in
\cite[\S 11.2.1]{WvSbook}. Our embedding of the roots of unit
appearing in the above Lemma is different, since the determinant for
tensor products of algebras is governed by another rule:
$\det(a_1\otimes a_2)= [\det_{n_1}(a_1) ]^{n_2}
\times[\det_{n_2}(a_1)]^{n_1} $ for each
$ a_1\otimes a_2 \in A_1\otimes A_2$. On the physical side, the origin
of the two roots of unit groups in the exact sequence
\[1\to \mu_3 \to \mtr U(1) \times\mtr {SU}(2) \times \mtr {SU}(3) \to
  \mtr {SU}(A_F) \to \mu_{12} \to 1
  \quad\text{\cite[Seq. 1.661]{ConnesMarcolli}}\] characterizing the
unimodular gauge group for the algebra of the Standard Model
$A_F=\C\oplus \mtb H\oplus M_3(\C)$ is quite different: on the one
hand, the group\footnote{Here the fact that the unitary quaternions
  $\{ q\in \mtb H : q^*q=1=qq^* \}$ are unimodular (i.e. their
  determinant is 1 in the embedding of $\mtb H$ into $2\times 2$
  matrices), $\mtc U (\mtb H)\cong \mtr{SU}(2)$, causes that
  unimodularity has influence on the $\mtb H$ summand. That is why
  $\mu_3$ appears as fiber instead of $\mu_{3\times 2}$.} $\mu_3$
comes from $M_3(\C)$; and on the other $\mu_{12}$ does depend also on
the number of generations and the representation of fermions.
\\

By way of contrast, an important one conceptually, we stress that for
$\mtr{SU}(n)$-Yang-Mills(--Higgs) finite geometries where one has
$A_1=\MN $ and $A_2 =\Mn$ (so $n_1=N$ and $n_2=n$ above), $n$ is the
`color' analogue, the two (special) unitary factors in Proposition
\ref{thm:gaugegroup} or the unimodular analogue above, have a
different nature. The $\mtr{PU}(N)$ [resp. $\mtr{SU}(N)$] describes
the symmetry of the base (and could be understood as the finite
dimensional analogue of diffeomorphisms of a manifold) and
$\mtr{PU}(n)$ [resp. $\mtr{SU}(n)$] along the fibers.

\section{Yang-Mills--Higgs theory with finite-dimensional
  algebras} \label{sec:YMH} The Higgs field being considered at the
same footing with the gauge bosons is one of the appealing
characteristics that is offered by the gauge theory treatment with
NCG. We now recompute the results of Section
\ref{sec:GeneralFinACGeom}, revoking the restriction $D_F=0$. The aim
is a formula informed by Weitzenb\"ock's.  The Weitzenb\"ock formula,
$D_\omega^2= \Delta^{\mathbb S\otimes E}+\mathcal E $, includes the
Higgs $\Phi$ and extends Lichnerowicz's formula, to the product of the
spinor bundle $\mathbb S$ with a vector bundle $E$. It is given in terms of an
endomorphism $\mathcal E$ in $\Gamma(\mathrm{End}(\mathbb S \otimes E))$:
\begin{align}
\mathcal E= \frac14 R\otimes 1 + 1\otimes \Phi^2 - \sum_{i,j} \frac12 \ii \Gamma^i \Gamma^j \otimes \mathbb F_{ij} +\sum_{j} \ii \gamma_M \Gamma^j\otimes \adj(\nabla_j^{\mathbb S\otimes E}) \Phi\,,
\end{align}
where $(\nabla^{\mtb S\otimes E})_j$ and $\mathbb F_{ij}$ are locally
the connection on $\mathbb S \otimes E$ and the curvature on $E$,
respectively.  Further, $\gamma_M$ is the chirality element or
$\gamma_5$ in physicists' speak. (See e.g. \cite[Prop. 8.6]{WvSbook}
for a proof.)

\subsection{The Higgs matrix field}
We now turn off the fuzzy-gauge part of the spectral triple in order
to compute the fluctuations along the finite geometry $F$. These
fluctuations are namely generated by the second summand in the
original (in the sense, `unfluctuated') Dirac operator of the product
spectral triple
$D= D\fuz\otimes 1_F + \gamma\fuz \otimes D_F = D\fuz\otimes 1_F +
\gamma \otimes 1_{\MN} \otimes D_F$ where
$D_F=D^*_F \in \Mn_{\mtr{s.a}}$ is the Dirac operator of the finite
geometry $F$.

\begin{proposition}\label{thm:fluct_Higgs}
 The inner fluctuations of the Dirac operator 
 along the finite geometry $F$ are
 \begin{align}
 (\omega_{F} + J\omega_{F}  J\inv)(\Psi)= 
 (\gamma\otimes   \phi) (\Psi )  + \epsilon'' (\gamma\otimes 1_{\MN} \otimes 1_{\Mn}) \Psi(  1_V\otimes \phi )\,,
 \end{align}
 for each $\Psi\in \H=V\otimes \MN\otimes \A_F$. 
 These are parametrized by $\phi\in  \MN\otimes \Omega^1_{D_F}(\Mn)$.
 Also $\phi^*=\phi$ holds. 
\end{proposition}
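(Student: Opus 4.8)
The plan is to mimic the proof of Lemma \ref{thm:lemma1gamma} almost verbatim, the only change being that the relevant slice of the Dirac operator is now the finite piece $\gamma\fuz\otimes D_F=\gamma\otimes 1_{\MN}\otimes D_F$ instead of $D\fuz\otimes 1_F$. Thus I would evaluate $\omega_F$ and $J\omega_F J\inv$ on a homogeneous vector $\Psi=v\otimes Y\otimes\psi\in V\otimes\MN\otimes\Mn$, parametrizing a generic one-form by $\ac=1_V\otimes W\otimes a$ and $\ac'=1_V\otimes T\otimes c$ with $W,T\in\MN$ and $a,c\in\A_F=\Mn$.

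First I would compute $\omega_F=\ac[\gamma\fuz\otimes D_F,\ac']$. Since $\gamma\fuz$ acts as the identity on the $\MN$-leg and commutes there with $\ac'$, while $D_F$ touches only the $\Mn$-leg, the commutator factorizes: the $V$-leg merely carries the chirality $\gamma$, the $\MN$-leg carries left multiplication by $T$, and the $\Mn$-leg produces the finite one-form $[D_F,c]$. Left-multiplying by $\ac$ and summing, I expect $\omega_F(\Psi)=(\gamma\otimes\phi)(\Psi)$ with $\phi:=\sum WT\otimes a[D_F,c]$ acting by left multiplication on the matrix factor $\MN\otimes\Mn$. As $W,T$ and $a,c$ range, $WT$ sweeps out all of $\MN$ and $a[D_F,c]$ sweeps $\Omega^1_{D_F}(\Mn)$, so indeed $\phi\in\MN\otimes\Omega^1_{D_F}(\Mn)$.

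Next I would conjugate by $J=C\otimes *_N\otimes *_n$ (with $J\inv=C\inv\otimes *_N\otimes *_n$), applying it to $\Psi$. The antilinear involutions $*_N,*_n$ reverse the order of the matrix products, so the left multiplication by $\phi$ becomes a right multiplication by $\phi^*$ on the two matrix legs, while the $V$-leg contributes $C\gamma C\inv=\epsilon''\gamma$ by the sign relation \eqref{signos} (equivalently $J\fuz\gamma\fuz=\epsilon''\gamma\fuz J\fuz$). This should give $J\omega_F J\inv(\Psi)=\epsilon''(\gamma\otimes 1_{\MN}\otimes 1_{\Mn})\Psi(1_V\otimes\phi^*)$. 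Finally, the gauge fluctuation must be self-adjoint for $D_\omega$ to remain self-adjoint; since $\gamma=\gamma^*$ and the Hilbert--Schmidt adjoint of left multiplication by $\phi$ is left multiplication by $\phi^*$, this is equivalent to $\phi^*=\phi$. Imposing it both proves the Hermiticity assertion $\phi^*=\phi$ and turns the $\phi^*$ in the second term into $\phi$, yielding the stated symmetric expression.

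I expect the only real obstacle to be the bookkeeping in the $J$-conjugation: tracking which antilinear involution hits which tensor leg, correctly transposing matrix products under $*_N$ and $*_n$ so that a left action turns into a right action by the adjoint, and extracting precisely the sign $\epsilon''$ from the commutation of $C$ with $\gamma$. The remaining manipulations are the same routine tensor/Clifford algebra already carried out in Lemma \ref{thm:lemma1gamma}.
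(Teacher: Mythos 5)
Your proposal is correct and follows essentially the same route as the paper's own proof: compute $\omega_F=\ac[\gamma\fuz\otimes D_F,\cc]=\gamma\otimes WT\otimes a[D_F,c]$, set $\phi:=WT\otimes a[D_F,c]$, deduce $\phi^*=\phi$ from self-adjointness of the fluctuation (since $\gamma=\gamma^*$), and then conjugate by $J=C\otimes *_N\otimes *_n$ so that the involutions turn left multiplication by $\phi$ into right multiplication by $\phi^*=\phi$ while the $V$-leg yields $C\gamma C\inv=\epsilon''\gamma$. The bookkeeping you flag as the only obstacle is exactly what the paper carries out, with the same outcome.
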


\begin{proof}
As before, one computes the corresponding Connes' 1-forms $
 \ac [\gamma\fuz \otimes D_F, \cc ]  $ 
in terms of $\ac = 1_V\otimes W\otimes a$ and $\cc= 1_V\otimes T \otimes c $,
being  $W,T\in \MN$ and $a,c\in \Mn$. Namely,%
%
%
\begin{salign}
\omega_{F} & = \ac [\gamma\fuz\otimes D_F, \cc ]\\&  = \ac [\gamma \otimes 1_{\MN}\otimes D_F, \cc ] 
\\ & = (1_V\otimes  W \otimes  a) \big [ \gamma \otimes 1_{\MN} \otimes D_F, 1_V \otimes T \otimes c \big ]  \\ 
&= \gamma\otimes W T \otimes a [D_F,c ] 
\end{salign}
We rename $\phi:= X \otimes a [D_F,c ] $, since $W,T$
are arbitrary and their product can replaced by any matrix $X\in
\MN$. Thus
$\omega_{F}=\gamma\otimes \phi \in \Omega^1_{\gamma\fuz\otimes
  {D_F}}(\A)= \MN\otimes \Omega^1_{{D_F}}(\Mn)$ as claimed.  Since
from the onset $\gamma$ is self-adjoint, so must be $\phi$, since
$\omega_{F}^*=\omega_{F}$ is required. The remaining part of the
fluctuations acting on
$v\otimes Y\otimes m \in V \otimes \MN\otimes \Mn$ are
\begin{salign}
  (J \omega_{F} J\inv ) (v\otimes Y\otimes m)& = 
  \big( (C\otimes  *_N\otimes *_n ) ( \gamma\otimes \phi ) (C\inv \otimes *_N\otimes
  *_n ) \big)(v\otimes Y\otimes m ) \\ & = (C\otimes *_N \otimes *_n )
  (\gamma C\inv v \otimes X Y^*\otimes a [D_F,c ] m^*) \\ & = C \gamma C\inv
  v \otimes Y X^*\otimes (a [D_F,c ] m^* )^*\label{irrelevante}
  \numerada \\ & = \epsilon'' \gamma v \otimes Y X^*\otimes m (a
  [D_F,c ])^* \,,
\end{salign} 
since  $C \gamma  = \epsilon'' \gamma C$ (cf. table of Def. \ref{def:fuzzy}).
Therefore,
\begin{align}\nonumber
(J \omega_{F} J\inv ) (\Psi) & = \epsilon''\{\gamma\otimes 1_{\MN}\otimes 1_{\Mn}\}  (\Psi)
 \{ (1_V\otimes X^* \otimes (a [D_F,c ] )^* \}
 \\ &=  \epsilon'' \{\gamma\otimes 1_{\MN}\otimes 1_{\Mn}\}  (\Psi)
 ( 1_V\otimes \phi )\,, 
\end{align}
since $\phi= X \otimes a [D_F,c ] $ is self-adjoint, as argued before.  
\end{proof}
In \eeqref{irrelevante} of the proof one could also have computed 
directly, using the explicit formula \eqref{gammadef}
for the chirality: 
\begin{salign} C\gamma C\inv  &= (C \sigma(\eta) \gamma^0 C\inv ) (C \gamma^1 C\inv)( C \gamma^2 C\inv )( C \gamma^3   C\inv ) \\
&= (C  {\sigma(\eta)}   \gamma^0 C\inv)  \gamma^1\gamma^2\gamma^3     =  \overline{\sigma(\eta)}   \gamma^0\gamma^1\gamma^2\gamma^3  = \pm \gamma  
 \,. 
\end{salign} 
The complex conjugate in the last line appears since $C$ is
anti-linear. The sign is chosen as follows: notice that
$ \overline{\sigma(\eta)} $ is purely imaginary for the (1,3) and
(3,1) signatures (and otherwise it is a sign). This means that the
sign $\pm$ in last equation is
$(-1)^{\#\text{number of minus signs in $\eta$}}=(-1)^q $.  This
different way to compute leads to the same result as the one given in
the proof. Indeed, for 4-dimensional geometries $(-1)^q$ is precisely
$\epsilon''$, according to the sign table in Definition
\ref{def:fuzzy}, namely $\epsilon''=-1$ for KO-dimensions $2$ and $6$
and and $\epsilon''=+1$ for KO-dimensions $0$ and $4$.  \\

From Proposition \ref{thm:fluct_Higgs} and Theorem \ref{thm:flucutatedD_AC}, the 
form of the most general fluctuated Dirac operator follows: \vspace{-2pt}
\begin{subequations}\label{Diracfullyfluctuated}
 \begin{align}
 D_\omega& =  \overbrace{ \vphantom{\sum} \gamma\otimes \Phi }^{\Dh} +\sum_{\mu}\overbrace{\gamma^\mu \otimes (\km_{\mu} + \am_{\mu}) + \gahmu \otimes (\xm_{\mu}+\sm_{\mu}) }^{\Dg}   \,,\\
 D & = \gamma \otimes   D_F+\sum_{\mu}\gamma^\mu  \otimes \km_\mu + \gahmu  \otimes \xm_\mu    \,,\\[2pt]
 \omega & = \gamma \otimes   \phi+ \sum_{\mu} \gamma^\mu  \otimes \mathsf A_\mu + \gahmu  \otimes \mathsf S_\mu \,, \\[3pt]
 J\omega J\inv & =  \epsilon'' \gamma \otimes (\balita )  \phi  + \sum_{\mu}e_\mu  \gamma^\mu  \otimes (\balita )  \mathsf A_\mu + (-1)^q e_{\mu}\gahmu  \otimes (\balita ) \mathsf S_\mu \,,  \\
  \text{with }\Phi :\!\!&= 1_{\MN}\otimes D_F+\phi + \epsilon'' (\balita) \phi =  1_{\MN}\otimes D_F+\{ \phi, \balita \}_{\epsilon''} \vphantom{\sum_a}\,. \label{Higgsdef}
\end{align}
\end{subequations}%
%
We will call
$\Phi \in \MN\sa \otimes \{\Omega^1_{D_F}[\Mn]\}\sa \subset \big\{
M_N[\Omega^1_{D_F}(\Mn)]\big\}_{\mtr{s.a.}}$ the \textit{Higgs field},
since in the smooth Riemannian case (where the analogous relation
reads $\Phi^{(C^\infty)}=D_F + J_F \phi^{(C^\infty)} J_F\inv$) its analogue in the context of
almost-commutative geometries leads to the Standard Model Higgs field,
when the finite algebra $\A_F$ is correctly chosen
(cf. \cite{CCM,WvSbook}).  
%

\begin{corollary}\label{thm:flatWeitzenbHiggs}
The fluctuated Dirac operator $D_{\omega}$ on the `flat' ($\xm=0$) fuzzy space 
factor $G_\fay$ of a gauge matrix geometry $G_\fay \times  F$ satisfies
\begin{align}\label{flatWeitzenbHiggs}
D_{\omega}^2|_{X=0}=  
  \frac12 \sum_{\mu,\nu} \gamma^\mu\gamma^\nu \otimes   \mathscr{F}_{\mu\nu}+ 1_V\otimes (\vartheta + \Phi^2) +\sum_\mu \gamma^\mu \gamma \otimes [\day_\mu,\Phi] \,.
\end{align}

\end{corollary}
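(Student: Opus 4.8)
The plan is to square the fluctuated Dirac operator restricted to the flat fuzzy factor. Setting $\xm=\sm=0$ in \eqref{Diracfullyfluctuated} leaves
\[ D_{\omega}|_{X=0}= \gamma\otimes\Phi + \sum_{\mu}\gamma^\mu\otimes\day_\mu, \qquad \day_\mu:=\km_\mu+\am_\mu, \]
so that $\Dh=\gamma\otimes\Phi$ and $\Dg|_{X=0}=\sum_\mu\gamma^\mu\otimes\day_\mu$ (the latter being exactly $\Dg|_{X=0}$ from Lemma \ref{thm:lemma1gamma}). Expanding $(\Dh+\Dg)^2$ produces four groups of terms: the pure Higgs square $\Dh^2$, the pure gauge square $\Dg^2|_{X=0}$, and the two mixed cross-terms. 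I would organize the whole computation around this split.

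First I would dispose of the two diagonal contributions. Because the chirality squares to the identity, $\gamma^2=1_V$, the Higgs square is immediate: $\Dh^2=(\gamma\otimes\Phi)^2=1_V\otimes\Phi^2$. The pure gauge square is precisely the content of Proposition \ref{thm:flatWeitzenb}, which I may quote verbatim to obtain $\Dg^2|_{X=0}=\tfrac12\sum_{\mu,\nu}\gamma^\mu\gamma^\nu\otimes\mathscr{F}_{\mu\nu}+1_V\otimes\vartheta$. Together these already furnish the field-strength term and the combined scalar piece $1_V\otimes(\vartheta+\Phi^2)$ of the claim.

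The substance of the proof is the cross-terms
\[ \Dh\,\Dg|_{X=0}+\Dg|_{X=0}\,\Dh=\sum_\mu\big(\gamma\gamma^\mu\otimes\Phi\circ\day_\mu+\gamma^\mu\gamma\otimes\day_\mu\circ\Phi\big). \]
Here the single structural input is that the chirality anticommutes with each gamma matrix, $\gamma\gamma^\mu=-\gamma^\mu\gamma$; this is the even-KO-dimension relation $D\gamma\fuz=-\gamma\fuz D$ from Definition \ref{def:fuzzy}, equally visible from the explicit form \eqref{gammadef} of $\gamma$ as a scalar times $\gamma^0\gamma^1\gamma^2\gamma^3$. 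Substituting it in the first summand makes both cross-terms share the common Clifford factor $\gamma^\mu\gamma$ and leaves the composition-commutator $\day_\mu\circ\Phi-\Phi\circ\day_\mu=[\day_\mu,\Phi]$, yielding $\sum_\mu\gamma^\mu\gamma\otimes[\day_\mu,\Phi]$ and hence the last term of \eqref{flatWeitzenbHiggs}.

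I do not expect a genuine obstacle: the statement follows mechanically once Proposition \ref{thm:flatWeitzenb} is in hand. The only point demanding care is the bookkeeping of the non-commuting composition $\circ$ on the matrix factor $\MNn$, since $\Phi$ and $\day_\mu$ are themselves (anti)commutators and their order must be preserved; all sign subtleties are concentrated in the single relation $\gamma\gamma^\mu=-\gamma^\mu\gamma$, after which the cross-terms assemble into the advertised commutator with no residual ambiguity.
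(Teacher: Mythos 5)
Your proposal is correct and follows exactly the paper's own route: the decomposition $(D_\omega)^2=\Dg^2+\Dh^2+\{\Dg,\Dh\}$, with $\Dg^2|_{X=0}$ imported from Proposition \ref{thm:flatWeitzenb}, $\Dh^2=1_V\otimes\Phi^2$ from $\gamma^2=1_V$, and the cross-term reduced to $\sum_\mu\gamma^\mu\gamma\otimes[\day_\mu,\Phi]$ via $\gamma^\mu\gamma=-\gamma\gamma^\mu$. Your write-up is simply a more detailed version of the same two-line argument given in the paper.
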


\begin{proof}
$ (D_\omega)^2= \Dg^2 + \Dh^2 + \{\Dg,\Dh\} $.  The gauge part $\Dg^2 $ is known
from Proposition \ref{thm:flatWeitzenb}; on the other hand,
$\Dh^2= (\gamma\otimes \Phi)^2=1_V\otimes \Phi^2$ from the axiom in
Definition \ref{def:fuzzy} for the chirality $\gamma$. 
Finally, $\{\Dg,\Dh \}= \sum_\mu \gamma^\mu \gamma \otimes [\day_\mu,\Phi]$,
since $\gamu \gamma = - \gamma \gamu$.
\end{proof}

Notice
also that $\gahmu=\gamma^\alpha \gamma^\rho \gamma^\sigma$
anti-commutes with $\gamma$, for 
\[\gamma^\alpha \gamma^\rho\gamma^\sigma \gamma = 
  - \gamma^\alpha \gamma^\rho \gamma \gamma^\sigma = + \gamma^\alpha
  \gamma \gamma^\rho \gamma^\sigma = - \gamma \gamma^\alpha
  \gamma^\rho\gamma^\sigma\,. \] 
Since the matrices $\gamma^\mu$ and
$\gahmu, \mu\in \Delta_4,$ span (the projection to $V$ of) $\Dg$, the
anti-commutator $\{\Dg,\Dh \}$ is traceless also if the fuzzy space is
`curved', $X\neq 0$.  
\begin{table} 
\includegraphics[width=\textwidth]{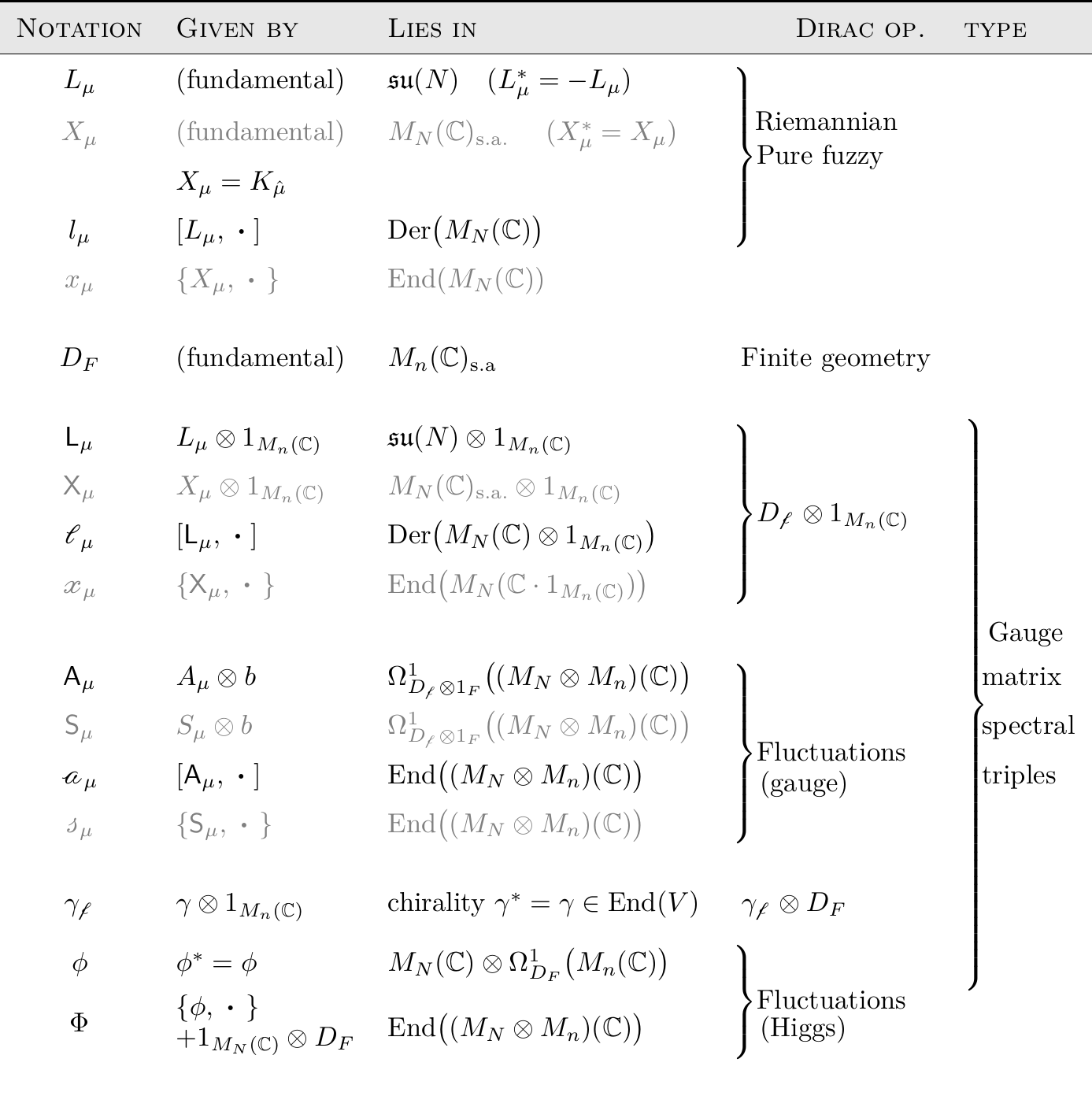}
 \caption{
 Notation for the matrices parametrizing the Dirac operator of \textit{Riemannian}
 four-dimensional Yang-Mills--Higgs matrix spectral triples
 and its fluctuations along $D=D\fuz\otimes 1_{F} + \gamma\fuz\otimes D_F$, which are split into blocks along the gauge ($D\fuz\otimes 1_{F} $) and Higgs parts ($\gamma\fuz\otimes D_F$). The accompanying gamma-matrices in the former case
 are omitted. 
 The rows in gray will not be used below ($X$ is set to zero, and this implies 
 the vanishing of the rest of operators in gray rows). 
 See \eeqref{Diracfullyfluctuated} for more details.
 \label{tab:notationRiemanniancase}}
\end{table}

\subsection{Transformations of the matrix gauge and Higgs fields}

Throughout this section, we always assume the Riemannian signature. 
We now compute the effect of the gauge transformations, already
explicitly known for the Dirac operator, on the field strength
$\mathscr F_{\mu\nu}$ and on the Higgs field. For the former, this requires to know how the matrices
$\mathsf{A}_\mu$ transform under
$\gauge(\A;J)=\mathcal U(\A)/ \mathcal U(\A_J)$.  We can pick a
representing element of $\gauge(\A;J)$ in $u\in \mathcal U(\A)$
directly, since the apparent ambiguity up to an element
$z\in \mathcal U(\A_J)$ leads to\footnote{The next equation is based on van Suijlekom's \cite[\S 8.2.1]{WvSbook}}
\begin{align*}
\omega ^{uz}& =(uz)\omega (uz)^* + uz [D,(uz)^*]  \\ &= u \omega u^* + uz \big \{ [D,z^*]u^*+ z^*[D,u^*] \big\}\\ &= u \omega u^* + uzz^*[D,u^*]= \omega ^{u}\,.
\end{align*}
The last line is obtained since  $\mathcal U(\A_J)= \mathcal U( \mathcal Z(\A))$,
so $z$ is central  (and thus $z^*$ too). Hence $[D,z^*]=0$.

\par 
Next, observe that, by definition, and also by Jacobi identity on $\MNn$, 
\begin{align} \label{forlateruse}
[\lm_\mu , \am_\nu]_\circ  & =
\big[\mathsf L_\mu,\comm{\mathsf A_\nu} \big]
-\big[\mathsf A_\nu,\comm{\mathsf L_\mu} \big] \\
&=\big[ [\mathsf L_\mu,\mathsf A_\nu],\balita \big] \nonumber
\end{align}
with analogous expressions for $[\lm_\nu, \am_\mu]_\circ $  and $[\am_\mu, \am_\nu]_\circ$.
This allows to write the field strength as the commutator with 
another quantity $\mathsf F_{\mu\nu} \in \MNn$ that we call \textit{field strength matrix},
\begin{subequations}
\begin{align}
 \mathscr F_{\mu\nu}&= \comm{ \mathsf F_{\mu\nu} }\,,\\
 \mathsf F_{\mu\nu} :\!&= 
 [\mathsf L_\mu + \mathsf A_\mu ,\mathsf L_\nu + \mathsf A_\nu] \,. \label{FieldStrenthMatrix}
\end{align}\end{subequations}%

We now find the way the field strength transforms under the gauge
group. By definition, the transformed field strength is given by the
expression $ \mathscr F_{\mu\nu}$ evaluated in the transformed
potential $\am_\mu^u$, this latter being dictated by the way the Dirac
operator transforms under $\gauge(\A;J)$.  Specifically,
 \begin{subequations}
   \begin{align} 
\mathscr F_{\mu\nu}^u & = [\lm_\mu + \am^u_\mu , 
\lm_\nu +\am^u_\nu]_{\circ} \,, \qquad   \\[2pt]
\am^u _\mu&= \comm{ \mathsf A_\mu ^u }\,, \\[2pt]
\mathsf A_\mu ^u &=  \Adj_u(\mathsf A_\mu ) + u[\mathsf L_\mu, u^*]
= u(\mathsf A_\mu )u^* + u[\mathsf L_\mu, u^*]\,,\\[2pt]
 \mathsf F_{\mu\nu}^u :\!&=
 [\mathsf L_\mu + \mathsf A^u_\mu ,\mathsf L_\nu + \mathsf A^u_\nu] ,.
\end{align}
Concerning the Higgs, we come back to Eqs. \eqref{Diracfullyfluctuated}.
We deduce from there and from \eqref{GaugeD}, that
the matrix field $\phi$, which parametrizes
by \eqref{Higgsdef} the Higgs field, transforms like
\begin{align}\phi\mapsto \phi^u = u \phi u^* + u[D_F,u^*]\, ,\qquad u\in\gauge(\A;J)\,. 
\end{align}
\end{subequations}
The transformation of the field strength  is more interesting: 
\begin{proposition}\label{thm:GaugeInvFs} 
In Riemannian signature, the field strength of a Yang-Mills(--Higgs) 
finite geometry transforms under the gauge group as follows:
 \begin{align}\label{field_transforms}
   \mathscr F_{\mu\nu}=\comm{\mathsf F_{\mu\nu} } \mapsto \mathscr F_{\mu\nu}^u= \comm{\mathsf F_{\mu\nu}^u} 
\qquad u\in \gauge(\A;J)\,,
\end{align}
which is completely determined by the next transformation rule on the
field strength matrix
\begin{align}
 \mathsf F_{\mu\nu} \mapsto  \mathsf F^u_{\mu\nu} & = 
 u
 \mathsf F_{\mu\nu} u^*   = \Adj_u(\mathsf F_{\mu\nu})  \,.
\end{align}

\end{proposition}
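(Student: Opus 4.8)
The plan is to prove the adjoint rule for the field strength \emph{matrix} $\mathsf F_{\mu\nu}$ first, and only afterwards transport it to $\mathscr F_{\mu\nu}$ through the identity $\mathscr F_{\mu\nu}=\comm{\mathsf F_{\mu\nu}}$ established via the Jacobi computation \eqref{forlateruse}. The decisive observation is that, although the potential $\mathsf A_\mu$ transforms inhomogeneously, the background-plus-potential combination $\mathsf L_\mu+\mathsf A_\mu$ transforms by \emph{pure conjugation}. Starting from the transformation rule $\mathsf A_\mu^u=u\mathsf A_\mu u^*+u[\mathsf L_\mu,u^*]$ recorded above and using unitarity $uu^*=1$, the inhomogeneous term expands as $u[\mathsf L_\mu,u^*]=u\mathsf L_\mu u^*-\mathsf L_\mu$, so that the fixed matrix $\mathsf L_\mu$ absorbs exactly the gauging anomaly:
\[
\mathsf L_\mu+\mathsf A_\mu^u=u(\mathsf L_\mu+\mathsf A_\mu)u^*=\Adj_u(\mathsf L_\mu+\mathsf A_\mu).
\]

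Next I would use that conjugation by the unitary $u$ is an algebra automorphism of $\MNn$, hence preserves commutators, to obtain
\[
\mathsf F_{\mu\nu}^u=[\mathsf L_\mu+\mathsf A_\mu^u,\mathsf L_\nu+\mathsf A_\nu^u]=\Adj_u\bigl([\mathsf L_\mu+\mathsf A_\mu,\mathsf L_\nu+\mathsf A_\nu]\bigr)=u\mathsf F_{\mu\nu}u^*,
\]
which is precisely the asserted rule $\mathsf F_{\mu\nu}\mapsto\Adj_u(\mathsf F_{\mu\nu})$. Applying the operator $\comm{\balita}$ to both sides and invoking \eqref{forlateruse}—the same Jacobi identity that produced $\mathscr F_{\mu\nu}=\comm{\mathsf F_{\mu\nu}}$—then yields $\mathscr F_{\mu\nu}^u=\comm{\mathsf F_{\mu\nu}^u}$, completing the transformation law for the field strength as an operator.

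Finally I would address well-definedness on the quotient, since the gauge group is $\gauge(\A;J)=\mtc U(\A)/\mtc U(\A_J)$ with $\mtc U(\A_J)=\mtc U(Z(\A))$ central. The conjugation form of the rule makes this immediate: replacing a representative $u$ by $uz$ with $z\in\mtc U(Z(\A))$ gives $\Adj_{uz}=\Adj_u\circ\Adj_z=\Adj_u$, because $z$ commutes with every element of $\A$ and $zz^*=1$; thus $\mathsf F_{\mu\nu}^{uz}=\mathsf F_{\mu\nu}^u$ and the rule descends to $\gauge(\A;J)$. (Alternatively, the independence follows already from the established identity $\omega^{uz}=\omega^u$, of which $\mathsf A_\mu^u$ is a gamma-matrix component.)

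I do not anticipate a genuine obstacle: once the transformation rule for $\mathsf A_\mu$ is in hand, the argument is short. The only point demanding care is the bookkeeping—keeping the background matrices $\mathsf L_\mu$ untouched (only $\mathsf A_\mu$ is gauged) so that the cancellation $u[\mathsf L_\mu,u^*]=u\mathsf L_\mu u^*-\mathsf L_\mu$ is available—and confirming that the covariance visible at the matrix level descends correctly to the operator $\mathscr F_{\mu\nu}$ through \eqref{forlateruse}.
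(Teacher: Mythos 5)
Your proof is correct, and it takes a genuinely shorter route than the paper's. The paper proceeds by brute force: it isolates $\mathsf T_{\mu\nu}=\mathsf F_{\mu\nu}-[\mathsf L_\mu,\mathsf L_\nu]$, expands the transformed quantity $\mathsf T^u_{\mu\nu}$ into $\mathsf{L}\mathsf{L}$-, $\mathsf{L}\mathsf{A}$- and $\mathsf{A}\mathsf{A}$-terms, computes each group by writing out all the commutators, and reassembles them into $\mathsf T^u_{\mu\nu}=\Adj_u(\mathsf T_{\mu\nu})+\Adj_u([\mathsf L_\mu,\mathsf L_\nu])-[\mathsf L_\mu,\mathsf L_\nu]$, from which the covariance of $\mathsf F_{\mu\nu}$ follows. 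Your observation that $\mathsf L_\mu+\mathsf A^u_\mu=u(\mathsf L_\mu+\mathsf A_\mu)u^*$ (the background matrix absorbing the inhomogeneous term $u[\mathsf L_\mu,u^*]=u\mathsf L_\mu u^*-\mathsf L_\mu$), combined with the fact that $\Adj_u$ is an algebra automorphism and hence preserves commutators, collapses that page of algebra to two lines and explains conceptually \emph{why} covariance holds: it is the full combination $\mathsf L_\mu+\mathsf A_\mu$, not $\mathsf A_\mu$, that conjugates cleanly. What the paper's longer computation buys is the explicit intermediate formula for $\mathsf T^u_{\mu\nu}$, which is precisely what Remark \ref{rem:Ls} needs in order to argue that the alternative field strength $\tilde{\mathsf F}_{\mu\nu}=\mathsf F_{\mu\nu}-[\mathsf L_\mu,\mathsf L_\nu]$ transforms anomalously and is therefore the wrong definition; if one adopts your argument, that corollary would have to be rederived separately (which is easy, by subtracting the untransformed $[\mathsf L_\mu,\mathsf L_\nu]$ from the covariant $\mathsf F^u_{\mu\nu}$). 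Your remaining steps --- descending from the matrix $\mathsf F^u_{\mu\nu}$ to the operator $\mathscr F^u_{\mu\nu}$ via the Jacobi-identity argument of \eqref{forlateruse}, and checking well-definedness on the quotient $\gauge(\A;J)=\mtc U(\A)/\mtc U(\A_J)$ using centrality of $\mtc U(\A_J)$ --- match what the paper does immediately before and at the start of its proof.
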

\begin{proof}
  Observe that for the pair $\lm_\mu, \am^u_\nu$ the same argument
  given about \eeqref{forlateruse} for the pair $\lm_\mu, \am_\nu$
  holds, and so does for the other pair of composition commutators
  appearing in the $u$-transformed field strength. Therefore, we can
  indeed write it in terms of the matrix
  $\mathsf F_{\mu\nu}^u:=[\mathsf L_\mu,\mathsf L_\nu] +[\mathsf L_\mu,\mathsf A_\nu^u ] - [\mathsf
  L_\nu,\mathsf A^u_\mu] + [\mathsf A_\mu^u,\mathsf A_\nu^u]$ as
  follows:
\begin{align}
\mathscr F_{\mu\nu}^u &= 
\big[ [\mathsf L_\mu,\mathsf L_\nu] + [\mathsf L_\mu,\mathsf A_\nu^u ] -  [\mathsf L_\nu,\mathsf A_\mu^u] + 
 [\mathsf A_\mu^u,\mathsf A_\nu^u]
,\balita \big] = \comm{\mathsf F^u_{\mu\nu}}\,.
\end{align}
We now compute the transformed field strength matrix
and all those terms that imply $\mathsf A$ (namely the
transformation under $u$ of $\mathsf T_{\mu\nu}:=
\mathsf F_{\mu\nu} -[\mathsf L_\mu,\mathsf L_\nu] \to 
\mathsf T_{\mu\nu}^u $) and infer from that those the gauge transformations on the 
field strength matrix $\mathsf F_{\mu\nu}$.
\begin{align*}
 \mathsf T^u_{\mu\nu}& = 
 +\big [\mathsf{L}_\mu, \Adj_u(\mathsf{A}_\nu) + u[\mathsf{L}_\nu,u^*] \big]
 -\big [\mathsf{L}_\nu, \Adj_u(\mathsf{A}_\mu) + u[\mathsf{L}_\mu,u^*] \big] 
 \\ & \quad +
 \Big[ \Adj_u(\mathsf{A}_\mu ) + u[\mathsf{L}_\mu,u^*],  \Adj_u(\mathsf{A}_\nu ) + u[\mathsf{L}_\nu,u^*] \Big] \\
 & = +\big[\mathsf{L}_\mu, u \mathsf{A}_\nu u^* \big ]+  \big [\mathsf{L}_\mu, u [\mathsf{L}_\nu, u^*] \big]
 \\
 &\quad - \big[\mathsf{L}_\nu, u \mathsf{A}_\mu u^* \big ] -  \big [\mathsf{L}_\nu, u [\mathsf{L}_\mu ,u^*] \big] \\
&\quad +  \big[ u\mathsf{A}_\mu u^*, u \mathsf{A}_\nu u^*\big] + \big[ u [\mathsf{L}_\mu, u^*], u\mathsf{A}_\nu u^*\big] \\
&\quad + \big[ u[\mathsf{L}_\mu,u^*], u[\mathsf{L}_\mu,u^*]\big]
+ \big[u\mathsf{A}_\mu u^*, u[\mathsf{L}_\nu,u^*] \big]
\end{align*} 
The contributions to $\mathsf T^u_{\mu\nu}$
split into three: 
$\mathsf{L}\mathsf{A}$-terms (i.e. containing $\mathsf L_\mu,\mathsf A_\nu$ or $\mathsf L_\nu ,\mathsf A_\mu$),
$\mathsf{A}\mathsf{A}$-terms,
and $\mathsf{L}\mathsf{L}$-terms. We compute them separately:
\begin{itemize}
 \itemb $\mathsf{A}\mathsf{A}$-terms: $ \big[ u\mathsf{A}_\mu u^*, u \mathsf{A}_\nu u^*\big]= u [\mathsf{A}_\mu,\mathsf{A}_\nu] u^*$, clearly
 \itemb $\mathsf{L}\mathsf{L}$-terms: When the commutators are expanded, the next $\mathsf{L}\mathsf{L}$-terms  
  \[ \big [\mathsf{L}_\mu, u [\mathsf{L}_\nu, u^*] \big] -  \big [\mathsf{L}_\nu, u [\mathsf{L}_\mu ,u^*] \big]
 + \big[ u[\mathsf{L}_\mu,u^*], u[\mathsf{L}_\mu,u^*]\big]\] yield the quantity 
 in bracelets, which can be neatly rewritten:
 \begin{salign}
 \left\{ \hspace{-12pt}
 \begin{array}{lr}
  &\phantom+ \mathsf{L}_\mu u \mathsf{L}_\nu u^*-\mathsf{L}_\mu \mathsf{L}_\nu -u \mathsf{L}_\nu u^* \mathsf{L}_\mu + \mathsf{L}_\nu \mathsf{L}_\nu\\
  &+\mathsf{L}_\nu \mathsf{L}_\mu - \mathsf{L}_\nu u \mathsf{L}_\mu u^* +u\mathsf{L}_\mu u^* \mathsf{L}_\nu - \mathsf{L}_\mu \mathsf{L}_\nu \\
  &+u \mathsf{L}_\mu \mathsf{L}_\nu u^* -u \mathsf{L}_\nu \mathsf{L}_\mu u^* +\mathsf{L}_\mu \mathsf{L}_\nu -\mathsf{L}_\nu \mathsf{L}_\mu \\
  &+u \mathsf{L}_\nu u^* \mathsf{L}_\mu - \mathsf{L}_\mu u \mathsf{L}_\nu u^* + \mathsf{L}_\nu u \mathsf{L}_\mu u^* -u\mathsf{L}_\mu u^* \mathsf{L}_\nu
\end{array} \right\} =u[\mathsf{L}_\mu,\mathsf{L}_\nu] u^*-[\mathsf{L}_\mu,\mathsf{L}_\nu] 
\label{llterms}
\end{salign} 
 \itemb $\mathsf{L}\mathsf{A}$-terms: $\big[\mathsf{L}_\mu, u \mathsf{A}_\nu u^* \big ] - \big[\mathsf{L}_\nu, u \mathsf{A}_\mu u^* \big ]+\big[ u [\mathsf{L}_\mu, u^*], u\mathsf{A}_\nu u^*\big] +\big[u\mathsf{A}_\mu u^*, u[\mathsf{L}_\nu,u^*] \big]$. This can be also obtained expanding 
 the commutators as above; the last two commutators 
 yield $ u\{[\mathsf{L}_\mu,\mathsf{A}_\nu]-[\mathsf{L}_\nu,\mathsf{A}_\mu]\}u^* + r(\mathsf{L},\mathsf{A})$.
 The excess terms $r(\mathsf{L},\mathsf{A})$ are actually cancelled out with the two first propagators, yielding 
 for the final expression of the $\mathsf{L}\mathsf{A}$-terms:
 \begin{align}
  u\big([\mathsf{L}_\mu,\mathsf{A}_\nu]-[\mathsf{L}_\nu,\mathsf{A}_\mu]\big)u^*
 \end{align}
\end{itemize} 
In view of the last equalities, we can conclude
that 
\begin{align} \label{Ts}
 \mathsf T^u_{\mu\nu} & = u
 \mathsf T_{\mu\nu} u^* +  u [\mathsf L_\mu,\mathsf L_\nu]
 u^*- [\mathsf L_\mu,\mathsf L_\nu]
 \\ & = \Adj_u(\mathsf T_{\mu\nu}) +
  \Adj_u \big([\mathsf L_\mu,\mathsf L_\nu]\big)-
  [\mathsf L_\mu,\mathsf L_\nu] \,, \nonumber 
\end{align} 
which, re-expressed in terms of $\mathsf F$, yields $\mathsf F_{\mu\nu}\to \mathsf F_{\mu\nu}^u = \Adj_u (\mathsf F_{\mu\nu})$.
\end{proof}

\begin{remark}\label{rem:Ls}
Notice that $\mathsf L_\mu$  being the fuzzy analogue of the
derivatives, the `surprising term' $[\mathsf L_\mu,\mathsf L_\nu]$  is the analogue\footnote{The precise 
statement is that 
$\adj_{[\mathsf L_\mu,\mathsf L_\nu]}$
is the analogue of $[\partial_\mu, \partial_\nu]$, but Jacobi identity used as in \eeqref{forlateruse}
allows one to state this in terms of $[\mathsf L_\mu,\mathsf L_\nu]$ only.} of 
$[\partial_\mu, \partial_\nu]$, which is identically zero on the algebra $C^\infty(M)$. This 
seems to (but, as we will see, does not) imply the freedom of choice 
as to whether we take the field strength matrix as 
defined above by ${\mathsf F}_{\mu\nu} $, or rather  $ \tilde {\mathsf F}_{\mu\nu} = 
{\mathsf F}_{\mu\nu} -[\mathsf L_\mu , \mathsf L_\nu]$ (called 
${\mathsf T}_{\mu\nu} $  above). 
According to \eeqref{Ts}, $\tilde {\mathsf F}_{\mu\nu}$ transforms then as 
 \begin{align}
 \tilde {\mathsf F}_{\mu\nu} \mapsto  \tilde{\mathsf F}^u_{\mu\nu} & =   \Adj_u(\mathsf F_{\mu\nu}) +
 \underbrace{ \Adj_u \big([\mathsf L_\mu,\mathsf L_\nu]\big)-
  [\mathsf L_\mu,\mathsf L_\nu]}_{\text{traceless}}\,.
\end{align}
Although for quadratic actions the last two terms add up 
to a traceless quantity, higher powers of the Dirac operator 
would mix the gauge sector with others. This confirms 
that the definitions in \eeqref{FieldStrength} and 
\eeqref{FieldStrenthMatrix} are correct. 
For only then, the pure gauge sector (i.e. powers of $\mathscr F$)
obtained from $\Tr _\H  (D^{2m}_{\omega\fuz}) $
would be expressible (see \cite{SAfuzzy}, and for $m=2$, \eeqref{tr_4gammas} above) 
as a sum over chord diagrams $\xi$, with $\boldsymbol \mu = (\mu_1,\ldots,\mu_m)$, $\boldsymbol \nu = (\nu_1,\ldots, \nu_m)$, 
\begin{align}
\sum_{\substack{\xi \\\,\,m\text{-chord diag.}}}\sum_ {\boldsymbol \mu,\boldsymbol \nu } \xi^{\mu_1 \nu_1 \mu_2 \nu_2 \ldots  \mu_m\nu_m}  \TrMNn \{ \mathscr   F^u_{\mu_1\nu_1} \cdots \mathscr  F^u_{\mu_m\nu_m} \}\,.\end{align}
The scalars 
$\xi^{\mu_1 \nu_1 \mu_2 \nu_2 \ldots  \mu_m\nu_m}$ 
are expressed as sums of $m$-fold products of the bilinear
form  $\eta^{\alpha\sigma}$ (signature) and are irrelevant for the discussion. 
The important conclusion is that, due to Proposition \ref{thm:GaugeInvFs}, the traced quantity is gauge invariant, since the transformation rule 
ignores the `space-time indices' $\mu_i$ and $\nu_i$. 
The quartic computation is explicitly given below. 
\end{remark}

\subsection{Traces of powers of $D$}

 The next statement is obvious:

\begin{lemma}\label{thm:TrD2}
  The fully fluctuated Dirac operator on the Yang-Mills--Higgs 
  matrix spectral triple satisfies in `flat space' (i.e. $X=0$),
\begin{align}\label{TrD2}
  \frac14 \TrH\big(D_{\omega}^2\big|_{X=0} \big) 
 & = \TrMNn ( \vartheta + \Phi^2 )
\end{align}

\end{lemma}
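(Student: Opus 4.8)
The plan is to substitute directly the closed expression for $D_\omega^2|_{X=0}$ provided by Corollary \ref{thm:flatWeitzenbHiggs} and then take the trace over $\H=V\otimes\MN\otimes\Mn$. Since the inner product on $\H$ is a tensor product of the form on $V$ with the Hilbert--Schmidt forms on the matrix factors, the trace factorizes as $\TrH=\TrV\otimes\TrMNn$. Hence the computation reduces to evaluating the three partial $V$-traces of the gamma-matrix coefficients appearing in \eqref{flatWeitzenbHiggs}.

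First I would dispose of the field-strength term $\tfrac12\sum_{\mu,\nu}\gamma^\mu\gamma^\nu\otimes\mathscr{F}_{\mu\nu}$. From $\{\gamma^\mu,\gamma^\nu\}=2\eta^{\mu\nu}1_V$ and cyclicity of the trace one obtains $\TrV(\gamma^\mu\gamma^\nu)=\eta^{\mu\nu}\dim V$, which is symmetric in $\mu\nu$; as $\mathscr{F}_{\mu\nu}=[\day_\mu,\day_\nu]$ is antisymmetric, the contraction vanishes. (Equivalently, $\mathscr{F}_{\mu\nu}=\comm{\mathsf F_{\mu\nu}}$ is an inner derivation of $\MNn$, hence traceless as an operator, so the term dies already under $\TrMNn$.) Next, the cross term $\sum_\mu\gamma^\mu\gamma\otimes[\day_\mu,\Phi]$ vanishes because $\gamma^\mu\gamma$ is, up to a scalar, an odd product of three distinct gamma matrices, whose $V$-trace is zero; this is precisely the tracelessness of $\{\Dg,\Dh\}$ noted right after Corollary \ref{thm:flatWeitzenbHiggs}.

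Only the middle term $1_V\otimes(\vartheta+\Phi^2)$ then survives, and it contributes $\TrV(1_V)\cdot\TrMNn(\vartheta+\Phi^2)=\dim V\cdot\TrMNn(\vartheta+\Phi^2)$. In four dimensions the irreducible spinor space has $\dim V=4$, so $\TrH(D_\omega^2|_{X=0})=4\,\TrMNn(\vartheta+\Phi^2)$, and multiplying by $\tfrac14$ yields the assertion. There is essentially no obstacle, which is why the statement is advertised as obvious; the only points needing care are the bookkeeping of $\dim V=4$ (so that the prefactor $\tfrac14$ cancels it exactly) and the confirmation that both the antisymmetric field-strength term and the odd-gamma cross term vanish under the $V$-trace.
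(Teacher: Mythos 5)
Your proposal is correct and follows essentially the same route as the paper: both start from Corollary \ref{thm:flatWeitzenbHiggs}, kill the field-strength term by pairing the symmetry of $\eta^{\mu\nu}$ (from $\TrV(\gamma^\mu\gamma^\nu)=\eta^{\mu\nu}\dim V$) against the antisymmetry of $\mathscr F_{\mu\nu}$, kill the cross term via $\{\gamma,\gamma^\mu\}=0$, and let $\dim V=4$ cancel the prefactor $\tfrac14$. The extra observation that $\mathscr F_{\mu\nu}=\comm{\mathsf F_{\mu\nu}}$ is an inner derivation and hence already traceless under $\TrMNn$ is a nice redundancy but not needed.
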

\begin{proof}First, $\{\Dh,\Dg\}=\gamma^\mu \gamma \otimes [\day_\mu,\Phi]$,
is traceless by $\{\gamma,\gamma^\mu\}=0$.
Now, from \eeqref{flatWeitzenbHiggs}, 
since tracing the first summand yields,
due to index symmetry of $\eta$ and index skew-symmetry of 
the field strength,
\[\sum_{\mu,\nu}\TrH(\gamu\ganu\otimes \mathscr{F}_{\mu\nu} )=\sum_{\mu,\nu} \eta^\munu\dim V
\TrMNn \mathscr{F}_{\mu\nu} = 0\]
one gets the result by \eeqref{Theta}. 
\end{proof}

\begin{lemma}\label{thm:TrD4}
  The fully fluctuated Dirac operator of the Yang-Mills-Higgs finite geometry  satisfies in `flat space' (i.e. $X=0$),
\begin{align} \nonumber
\frac{1}{4}\TrH \big(D_{\omega}^4\big|_{X=0} \big) 
&=   -\frac12 \sum_{\mu,\nu} \TrMNn (\Fcurv_\munu \Fcurv^\munu)  + \TrMNn \big( (\vartheta + \Phi^2)^2 \big)
\\&-\sum_{\mu,\nu}\eta^{\mu\nu} \TrMNn \big ( [\day_\mu, \Phi][\day_\nu,\Phi] \big)
\label{TrD4}\,.
\end{align}

\end{lemma}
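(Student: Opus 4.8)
The plan is to build directly on the flat-space Weitzenböck expansion of Corollary \ref{thm:flatWeitzenbHiggs}, square it, and then integrate out the spinorial degrees of freedom through the factorization $\TrH=\TrV\otimes\TrMNn$. Concretely, I would write $D_\omega^2|_{X=0}=P+Q+R$ with the three blocks
\[
P=\tfrac12\sum_{\mu,\nu}\gamma^\mu\gamma^\nu\otimes\Fcurv_{\mu\nu},\quad Q=1_V\otimes(\vartheta+\Phi^2),\quad R=\sum_\mu\gamma^\mu\gamma\otimes[\day_\mu,\Phi],
\]
and expand $D_\omega^4|_{X=0}=(P+Q+R)^2$ into its nine summands. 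The whole argument then reduces to showing that only the three diagonal blocks $P^2,Q^2,R^2$ survive the $V$-trace, and to evaluating those three.

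First I would dispose of the six cross terms by two distinct mechanisms. The pair $PQ+QP$ carries coefficients of the form $\gamma^\mu\gamma^\nu\otimes\Fcurv_{\mu\nu}(\cdots)$, whose $\TrV$ produces $\dim V\,\eta^{\mu\nu}$; since $\Fcurv_{\mu\nu}=[\day_\mu,\day_\nu]_\circ$ is antisymmetric in $\mu\nu$ while $\eta^{\mu\nu}$ is symmetric, the contraction dies. The remaining four cross terms $PR,RP,QR,RQ$ all carry an \emph{odd} number of gamma factors once the chirality is expanded via \eqref{gammadef} into the product of the four $\gamma^\mu$ (so their coefficients lie in $\Omega^-$); these vanish by the standard argument that, for $\Gamma\in\Omega^-$, cyclicity and $\gamma\Gamma\gamma^{-1}=-\Gamma$ give $\TrV(\Gamma)=\TrV(\gamma\Gamma\gamma^{-1})=-\TrV(\Gamma)$, hence $\TrV(\Gamma)=0$.

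Next I would evaluate the three survivors. The $Q^2$ block is immediate: $\TrV(1_V)=\dim V=4$, and with the overall prefactor $\tfrac14\dim V=1$ it contributes $\TrMNn((\vartheta+\Phi^2)^2)$. For $R^2$ I would first simplify the gamma string using $\{\gamma,\gamma^\nu\}=0$ and $\gamma^2=1$, namely $\gamma^\mu\gamma\gamma^\nu\gamma=-\gamma^\mu\gamma^\nu$, so that $\TrV$ returns $-\dim V\,\eta^{\mu\nu}$ and the block reproduces exactly the Higgs-kinetic term $-\sum_{\mu,\nu}\eta^{\mu\nu}\TrMNn([\day_\mu,\Phi][\day_\nu,\Phi])$.

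The real work, and the main place where care is needed, is the $P^2$ block, where I would apply the four-gamma trace formula \eqref{tr_4gammas}. Its three Wick-type pairings $\eta^{\mu\nu}\eta^{\alpha\rho}-\eta^{\mu\alpha}\eta^{\nu\rho}+\eta^{\mu\rho}\eta^{\nu\alpha}$ contract against $\Fcurv_{\mu\nu}\Fcurv_{\alpha\rho}$: the first pairing dies by the same symmetric-versus-antisymmetric clash seen in $PQ$, whereas the second and third pairings each collapse to $-\Fcurv_{\mu\nu}\Fcurv^{\mu\nu}$ (the third acquiring its sign from the antisymmetry of $\Fcurv$ when the indices are raised). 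Assembling the prefactor $\tfrac14\cdot\tfrac14\cdot\dim V\cdot(-2)=-\tfrac12$ then yields $-\tfrac12\sum_{\mu,\nu}\TrMNn(\Fcurv_{\mu\nu}\Fcurv^{\mu\nu})$, and summing the three blocks gives precisely \eqref{TrD4}. The only delicate point throughout is this index bookkeeping in $P^2$ — tracking the sign from raising $\Fcurv$ and confirming that exactly two of the three pairings survive and coincide — together with the uniform check that $\tfrac14\dim V=1$ normalizes every term correctly in dimension four.
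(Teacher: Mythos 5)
Your proposal is correct and follows essentially the same route as the paper: both square the flat Weitzenb\"ock expression of Corollary \ref{thm:flatWeitzenbHiggs}, discard the cross terms via the symmetric-versus-antisymmetric contraction (for $PQ+QP$) and the tracelessness of odd Clifford elements (for the terms involving $\gamma$), and evaluate the surviving $\Fcurv\Fcurv$ block with the four-gamma trace formula \eqref{tr_4gammas}, where exactly two of the three pairings contribute $-\Fcurv_{\mu\nu}\Fcurv^{\mu\nu}$ each. Your sign and normalization bookkeeping ($\tfrac14\dim V=1$, the prefactor $-\tfrac12$) matches the paper's computation.
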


\begin{proof}
Squaring the expression 
for $(D_\omega|_{X=0})^2$ given by 
Lemma \ref{thm:flatWeitzenbHiggs}. 
\begin{align*}
(D_\omega |_{X=0}^2)^2 & =
\frac{1}{4} \sum_{\mu,\nu,\rho,\sigma}\gamma^\mu \gamma^\nu \gamma^\rho \gamma^\sigma 
\otimes \mathscr F_{\mu \nu} \mathscr F_{\rho\sigma} \\& \vphantom{\frac12}+1_V \otimes (\vartheta + \Phi^2)^2
+ \sum_{\mu,\nu} (\gamma^\mu \gamma \gamma^\nu \gamma)\otimes [\day_\mu,\Phi] [\day_\nu,\Phi] \\ 
 & + \frac 12 \sum_{\mu,\nu}
\gamma^\mu \gamma^\nu \otimes \big (\mathscr F_{\mu \nu} ( \vartheta + \Phi^2
)+ ( \vartheta + \Phi^2
)\mathscr F_{\mu \nu} \big )
 \\
& +\vphantom{\frac12 \sum_\mu}\gamma^\mu\gamma \otimes\big( (\theta + \Phi^2) [\day_\mu,\Phi]
+ (\theta + \Phi^2) [\day_\mu,\Phi] \big) \\ &+ 
\frac12 \sum_{\rho,\mu,\nu} \gamma^\mu \gamma^\nu \gamma^\rho \gamma\otimes \big( \Fcurv_{\mu\nu} [\day_\rho,\Phi] \pm [\day_\rho,\Phi] \Fcurv_{\mu\nu} \big) \,.
\end{align*}
for some sign $\pm$ in last line, which is in fact irrelevant
since $\TrV( \gamma^\mu \gamma^\nu \gamma^\rho \gamma)=0$ for any 
choice of indices. The line before the last is also traceless. Further, using 
$\TrV(\gamma^\mu\gamma^\nu \gamma^\alpha \gamma^\rho)$ 
   given in \eeqref{tr_4gammas}, 
\begin{align}\nonumber
\Tr _\H (D_{\omega}^2|_{X=0}^2) 
&= 
\frac{\dim V}{4}\cdot\sum_{\mu,\nu,\rho,\sigma}  \bigg(
\raisebox{-.40\height}{\includegraphics[height=1.2cm]{PartitionsGammas4b.pdf}  } \!\! \bigg) 
\TrMNn\big(  \mathscr F_{\mu \nu} \mathscr F_{\alpha\rho} \big) 
\\ &+ \dim V \TrMNn \big ( (\vartheta+ \Phi^2)^2 \big)\nonumber
\\&+ \TrV(\gamma^\mu\gamma\gamma^\nu\gamma) \TrMNn \big( [\day_\mu,\Phi] [\day_\nu,\Phi] \big)
\nonumber 
\\
& + \dim V\sum_{\mu,\nu } \eta^{\mu\nu } \TrMNn ( \mathscr F_{\mu \nu} (\vartheta+ \Phi^2)) \label{jakiesrownanieAp}\,.
\end{align}
By symmetry of $\eta$ and skew-symmetry of $\mathscr F$,
the first chord diagram vanishes, and by the same token,
also the second line in \eeqref{jakiesrownanieAp}. 
The second chord diagram comes with a minus sign 
and, using the skew-symmetry $\mathscr F$, one can 
see that the third diagram yields the same contribution,
namely $\sum_{\mu,\nu,\rho,\sigma}(-\eta^{\nu\rho}\eta^{\mu\alpha})
\TrMNn(\mathscr F_{\mu \nu}\mathscr F_{\alpha \rho})$. 
For the second to last line, 
$ \TrV(\gamma^\mu\gamma\gamma^\nu\gamma)=-\eta^{\mu\nu}\TrV 1_V$.
Dividing the whole \eeqref{jakiesrownanieAp} by $\dim V=4$ and get the claim. %
\end{proof}

\section{The Spectral Action for Yang-Mills-Higgs matrix spectral triples: towards the continuum limit}  \label{sec:smoothlim} 

We now give the main statement and,  
after its proof, we compare it with \cite[\S 2]{Chamseddine:1996zu},
which derives from NCG the Yang-Mills--Higgs theory over a smooth manifold.
Since in differential geometry the Einstein summation convention is 
common, we restore it here (also in the fuzzy context) together with the
raising and lowering of indices with the constant
signature $\eta^{\mu\nu}=(\eta_{\mu\nu}\inv) $ and $\eta_{\mu\nu}$. Using the lemmata of previous 
sections, we can give a short proof to the main result:

\begin{thm}\label{thm:SA24}
For a Yang-Mills--Higgs matrix spectral triple
on a $4$-dimensional flat ($X=0$) Riemannian ($p=0$) fuzzy base, 
the Spectral Action for a real polynomial 
 $f(x)=  \frac{1}2 \sum_{i=1}^m a_i x^i  $    
 reads
 \begin{align} \label{SAintheorem}
 \frac14\TrH f(D) = S^\fay_\YM  + S^\fay_\Hi +  S^\fay_\gH   + 
 S^\fay _\vartheta +  \ldots,
 \end{align}
where each sector is defined as follows:
\begin{subequations}\label{HiggsPotdef}
 \begin{align}
   S^\fay_\YM(\lm, \am):\!&= - \frac{a_4}4 \TrMNn (\Fcurv_\munu \Fcurv^\munu)\,, \\
 S^\fay_\gH ( \lm,\am ,\Phi) :\hspace{-3pt}&= - a_4
\TrMNn \big ( \day_\mu \Phi \day^\mu \Phi \big) \,,
\\[3pt]
S^\fay _\Hi (\Phi) :\hspace{-3pt}& = \TrMNn f_{\mathrm{e}}(\Phi)\,, 
 \\[3pt]
 S_\vartheta^\fay (\lm, \am ):\hspace{-3pt}&= \TrMNn f_{\mathrm{e}}\big( {\vartheta^{1/2}} \big)  \,.
 \end{align}%
\end{subequations}%
 and the rest terms in the ellipsis 
 represents operators $
 \TrMNn [P]$ being 
 $P \in \C \langle \,\lm_\mu,\am_\mu \mid \mu =0,\ldots, 3 \rangle $ of order $\geq 5$. Further,
 $f_{\mathrm{e}}$ is the even part of the polynomial $f$ truncated to degree $<5$.
Moreover, one obtains positivity for each of the following
functionals, independently:
\[  S^\fay _\vartheta,\, S^\fay_\YM , \,  S^\fay_\Hi  \geq 0\,, \qquad \text{if $a_4\geq 0$} \,.\]
\end{thm}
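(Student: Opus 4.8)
The plan is to feed the polynomial $f(x)=\tfrac12\sum_{i=1}^m a_i x^i$ into $\TrH f(D)=\tfrac12\sum_i a_i\TrH(D^i)$ and to exploit the $\Z_2$-grading. Since $D$ is odd for the chirality, $\gamma\fuz D=-D\gamma\fuz$ with $\gamma\fuz^2=1$, cyclicity of the trace forces $\TrH(D^{2k+1})=0$; hence only even powers survive and $\tfrac14\TrH f(D)=\tfrac12\sum_{k\ge1}a_{2k}\,[\tfrac14\TrH(D^{2k})]$. I would then truncate: the monomials $D^2$ and $D^4$ produce the four advertised sectors, while every $D^{2k}$ with $2k\ge6$ contributes an operator polynomial in $\lm_\mu,\am_\mu$ (and $\Phi$) of degree $\ge 5$, which is exactly what the ellipsis collects. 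Throughout I set $p=0$, so $\eta^{\mu\nu}=-\delta^{\mu\nu}$, which trivialises the raising of indices.

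Next I would invoke the two trace formulas already established. Lemma~\ref{thm:TrD2} gives $\tfrac14\TrH(D_\omega^2|_{X=0})=\TrMNn(\vartheta+\Phi^2)$, and Lemma~\ref{thm:TrD4} gives $\tfrac14\TrH(D_\omega^4|_{X=0})$ as the sum of $-\tfrac12\TrMNn(\Fcurv_{\mu\nu}\Fcurv^{\mu\nu})$, of $\TrMNn((\vartheta+\Phi^2)^2)$, and of $-\eta^{\mu\nu}\TrMNn([\day_\mu,\Phi][\day_\nu,\Phi])$. Substituting and sorting by field content: the $\Fcurv\Fcurv$ term (carrying $\tfrac12 a_4$) becomes $S^\fay_\YM$; the $[\day_\mu,\Phi][\day_\nu,\Phi]$ term is the gauge--Higgs kinetic sector $S^\fay_\gH$; expanding $(\vartheta+\Phi^2)^2=\vartheta^2+\Phi^4+\{\vartheta,\Phi^2\}$ and combining with the $a_2$-piece, the pure-$\Phi$ monomials assemble into $\TrMNn f_{\mathrm e}(\Phi)=S^\fay_\Hi$ and the pure-$\vartheta$ monomials into $\TrMNn f_{\mathrm e}(\vartheta^{1/2})=S^\fay_\vartheta$, with $f_{\mathrm e}(x)=\tfrac12(a_2x^2+a_4x^4)$ the truncated even part. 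Reading off the prefactors (e.g. $\tfrac12 a_4\cdot(-\tfrac12)=-\tfrac{a_4}4$ for the Yang--Mills normalisation) yields \eqref{SAintheorem}--\eqref{HiggsPotdef}.

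For positivity I would use the (anti-)self-adjointness of the constituent operators on the Hilbert--Schmidt space $\MNn$. In Riemannian signature every $e_\mu=-1$, so $\mathsf L_\mu+\mathsf A_\mu$ is anti-Hermitian, whence $\day_\mu=\adj_{\mathsf L_\mu+\mathsf A_\mu}$ is an anti-Hermitian endomorphism and $\mathsf F_{\mu\nu}=[\mathsf L_\mu+\mathsf A_\mu,\mathsf L_\nu+\mathsf A_\nu]$ is anti-Hermitian too; by Proposition~\ref{thm:GaugeInvFs}, $\Fcurv_{\mu\nu}=\adj_{\mathsf F_{\mu\nu}}$, and $(\adj_X)^*=\adj_{X^*}$ forces $\Fcurv_{\mu\nu}^*=-\Fcurv_{\mu\nu}$. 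Since $\Fcurv^{\mu\nu}=\Fcurv_{\mu\nu}$ for $p=0$, one gets $\sum_{\mu,\nu}\TrMNn(\Fcurv_{\mu\nu}\Fcurv^{\mu\nu})=-\sum_{\mu,\nu}\|\Fcurv_{\mu\nu}\|_{\mathrm{HS}}^2\le 0$, so $S^\fay_\YM\ge 0$ whenever $a_4\ge 0$. Likewise $\vartheta=\sum_\mu\day_\mu^*\day_\mu\ge 0$ is a positive operator and $\Phi=\Phi^*$, so $\vartheta^{1/2}$ and $\Phi$ are self-adjoint; as $f_{\mathrm e}$ is a sum of even monomials, $S^\fay_\vartheta$ and $S^\fay_\Hi$ are non-negative combinations of traces of squares, hence $\ge 0$ for the quartic part governed by $a_4\ge 0$.

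The main obstacle I anticipate is the bookkeeping of the second paragraph rather than any conceptual difficulty: one must keep $\TrMNn$ in its two guises apart --- the operator (Hilbert--Schmidt) trace for the endomorphisms $\Fcurv_{\mu\nu}$ and $\day_\mu$ versus the plain matrix trace for $\Phi$ and $\vartheta$ --- track the cross term $\{\vartheta,\Phi^2\}$ so that it is correctly apportioned, and verify that no genuine degree-$\le 4$ contribution is inadvertently swept into the ellipsis. Once the gamma-matrix traces of Lemma~\ref{thm:TrD4} (equivalently \eqref{tr_4gammas}) have been used to annihilate the odd and the $\eta$-antisymmetric pieces, the remaining identification is a matter of matching coefficients.
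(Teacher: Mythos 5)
Your overall route is the paper's: discard odd powers, feed Lemmas \ref{thm:TrD2} and \ref{thm:TrD4} into $\tfrac12\sum_i a_i \TrH(D^i)$, sort the degree-$2$ and degree-$4$ contributions into the four sectors, sweep degree $\geq 5$ into the ellipsis, and deduce positivity from the (anti-)self-adjointness of $\day_\mu$, $\Fcurv_{\mu\nu}$ and $\Phi$ (your positivity argument is essentially the paper's). However, there is a genuine gap in the sorting step, and it is exactly the step the paper flags as delicate.

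You identify the term $-\tfrac{a_4}{2}\eta^{\mu\nu}\TrMNn([\day_\mu,\Phi][\day_\nu,\Phi])$ from Lemma \ref{thm:TrD4} directly with $S^\fay_\gH=-a_4\TrMNn(\day_\mu\Phi\,\day^\mu\Phi)$, and you leave the mixed term $\{\vartheta,\Phi^2\}$ arising from $(\vartheta+\Phi^2)^2$ unassigned: you distribute only the \emph{pure} $\vartheta$- and $\Phi$-monomials into $S^\fay_\vartheta$ and $S^\fay_\Hi$ and defer the cross term to ``bookkeeping''. Neither move works. The commutator term and $-a_4\TrMNn(\day_\mu\Phi\,\day^\mu\Phi)$ are not equal, and $\TrMNn(\vartheta\Phi^2)$ is a degree-$4$ contribution that cannot be hidden in the ellipsis, which by the statement collects only degree $\geq 5$. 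The paper closes both holes simultaneously with the identity \eqref{supermegawazne},
\[
a_4\,\TrMNn\Big(\Phi^2\vartheta-\tfrac12[\day_\mu,\Phi][\day^\mu,\Phi]\Big)=-a_4\,\TrMNn\big(\day_\mu\Phi\,\day^\mu\Phi\big)\,,
\]
obtained from $\vartheta=\day_\mu\day^\mu$ by expanding the commutators and using cyclicity: the cross term and the commutator term together are precisely the advertised $S^\fay_\gH$. Without this identity the decomposition \eqref{SAintheorem} with the sectors as defined in \eqref{HiggsPotdef} is not established. A secondary imprecision: your parity argument invokes $\gamma\fuz D=-D\gamma\fuz$, but the Higgs part $\gamma\fuz\otimes D_F$ (and $\Dh=\gamma\otimes\Phi$ after fluctuation) \emph{commutes} with $\gamma\fuz\otimes 1_F$ rather than anticommuting; to kill all odd powers you would need the full product grading $\gamma\fuz\otimes\gamma_F$. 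The paper sidesteps this by verifying $\TrH(D)=\TrH(D^3)=0$ through explicit $\gamma$-matrix traces, which suffices because higher odd powers land in the ellipsis anyway.
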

\noindent
\begin{proof} Recall $D=\Dg+\Dh$. It is
obvious that $\TrH(D)=0$. The possible crossed-products
contributions to $\Tr (D^3)$ are
$\Tr(\Dg^2\Dh)$ and $ \Tr (\Dg \Dh^2)$.
The former vanishes because
 in spinor space $V$
we have to trace over $\gamma^\mu \gamma^\nu \gamma$, which
vanishes. Similarly, $\Dh^3$ is traceless
since $\gamma^3=\gamma$ is, and
and $\Dg^3$ vanishes by $\TrV(\gamma^\mu\gamma^\nu\gamma^\rho)=0$.
Thus odd powers of
$D$ are traceless, at least for degrees $<5$.
\par 
Hence inside 
the trace over $\H$, $f$ can be replaced by its even part $f_{\mathrm{e}}$.
Notice that by Lemmas \ref{thm:TrD2} and \ref{thm:TrD4}, then
\allowdisplaybreaks[2]
\begin{align*}\frac14\TrH \bigg[f(D)   -\sum_{r=6}^m\frac{a_r}2 D^r\bigg]& =
\frac14 \Tr_V\otimes \TrMNn \Big\{
\frac{1}{2} a_2 D^2+\frac{1}{2}  a_4 D^4
\Big\} \\ 
&= \TrMNn \Big\{ 
\frac{a_2}{2} (\vartheta + \Phi^2 ) +\frac{a_4}{2} ( (\vartheta + \Phi^2 ))^2
 \\ & \quad- \frac{a_4}{4}
\Fcurv_{\mu\nu }\Fcurv^{\mu\nu } -\frac{a_4}{2}\eta^{\mu\nu} [\day_\mu,\Phi] [\day_\nu,\Phi] \Big\}\\
&=
\TrMNn \Big\{ 
\frac{1}{2} \big(a_2\vartheta + a_4\vartheta^2 + a_2\Phi^2 + a_4  \Phi^4 \big)\\&\quad - \frac{a_4}{4}
\Fcurv_{\mu\nu }\Fcurv^{\mu\nu }  + {a_4}  \Big ( \Phi^2 \vartheta  -\frac12 [\day_\mu,\Phi] [\day^\mu,\Phi] \Big)
\Big\}
\,.
\end{align*}
Observe that by 
definition, \eeqref{Theta}, $\vartheta \Phi^2=\day_\mu\day^\mu \Phi^2$,
so the last term yields  by expansion of the commutators and cyclicity,
\begin{equation} \label{supermegawazne}
{a_4}  \TrMNn\Big ( \Phi^2 \vartheta  -\frac12 [\day_\mu,\Phi] [\day^\mu,\Phi] \Big)
=-a_4 \TrMNn \big ( \day_\mu \Phi \day^\mu \Phi  \big)\,.
\end{equation}

The result follows by inserting the definitions from \eeqref{HiggsPotdef}
and by observing that the trace of $D^6$ is a
noncommutative polynomial (which we do not determine) of homogeneous
degree 6 in the eight letters $\am$ and $\lm$; this is,
in the worse case, the rest term in  \eqref{SAintheorem}.\par 
Regarding positivity:
 First, notice that
$\am _\mu \am ^\mu = \am_\mu (e_\mu \am_\mu) = \am_\mu (\am_\mu)^*$ is
a positive operator, and that so is
$\vartheta \in \mtr{End}( \MN \otimes \H_F) $ by the same token,
$ \vartheta=\sum_\mu(\am+\km) _\mu (\am+\km)^\ast_\mu \geq 0\,$.
Thus  $f_{\mathrm{e}}(\sqrt\vartheta)$ is well-defined and its trace
positive, since $f_{\mathrm{e}}$ is by definition an even polynomial. \par Further 
relations like $[\kay_\mu,\am_\nu]^*= - e_\mu e_\nu [\km_\mu,\am_\nu]$,  
and similar ones for all the commutators defining the field strength,
lead to $\Fcurv_{\mu\nu}^* = -e_\mu e_\nu   \Fcurv_{\mu\nu}  $.  Since $\eta=\diag (e_0,\ldots, e_3)$, one obtains the positivity of the operator 
\begin{align}\label{positivityF}
 -\Fcurv_{\mu\nu} \Fcurv^{\mu\nu } =  \Fcurv_{\mu\nu} (-e_\mu e_\nu   \Fcurv_{\mu\nu} )
 =\Fcurv_{\mu\nu}(\Fcurv_{\mu\nu})^* \geq 0\,,\,\qquad\text{(no sum)}.
\end{align}
Therefore, also the positivity holds summing over $\mu,\nu$, which is
a positive multiple of $S^\fay_\YM$, whose positivity also follows.   Similarly, since $\Phi$ is self-adjoint, even powers of it
are positive, thus so is $S^\fay_\Hi $.
\end{proof}

We now comment on the interpretation of this result. For fuzzy geometries, the equivalent of integration over the manifold
is tracing operators $\MN \to \MN$. (At the risk of being redundant,
notice that the unit matrix in that space has trace $N^2$.) First, recall
that $\Phi$ is self-adjoint. We identify the Higgs field $H$ on a smooth, closed manifold $M$ with
$\Phi$, so the quartic part  $\int_M |H|^4 \mtr{vol}$ of the potential
for the Higgs is $\TrMNn( \Phi^4) $.  In the Riemannian case, in order
to address the gauge-Higgs sector\footnote{I thank 
the attentional, anonymous referee, who found in previous
versions an inconsistent match between the fuzzy and smooth cases. 
This led to find the by far more closer correspondence 
\eeqref{supermegawazne}, for which integration by parts (performed
in a previous version) is not even needed.}, notice that 
since $\Phi=\Phi^*$, if $a_4=1$,
\begin{align} \label{gHiggs}
S^\fay_\gH ( \lm,\am ,\Phi) & =  -\TrMNn \big[ (\lm_\mu+\am_\mu) \Phi (\lm^\mu+\am^\mu) \Phi \big]
\leftrightarrow
 - \int _M   
\mathbb D_\mu H (\overline{\mathbb D^\mu H}) \mtr{vol} \,.
\end{align} 
This  interpretation of $\day_\mu=\lm_\mu+\am_\mu$ as the
covariant derivative $\mathbb D_\mu=\partial_\mu + \mathbb A_\mu$ for
Yang-Mills connection, with the local gauge potential $\mathbb A _\mu$
absorbing the coupling constant (cf. Def. \ref{def:FieldStr} and Remark \ref{rem:Ls}).  
\par 
Next, notice that $\Fcurv_\munu$ is a matrix-version of the
$\mathrm{SU}(n)$-Yang-Mills  (local) curvature $\mathbb  F_{\mu\nu} $  for the action
$S_{\textsc{ym}}$. If $a_4=1$, one has the exact correspondence
\begin{subequations}\label{YMactionF_identification}
 \begin{align}\label{YMactionF} \raisetag{35pt}
   S^\fay_\YM(\lm, \am)\!&= -\frac14 \TrMNn ( \Fcurv_\munu \Fcurv^\munu)
   \\ & \updownarrow \nonumber
   \\ 
  S_\YM(\mathbb A)&=-\frac14 \int_M  \Tr_{\sun} (\mathbb  F_{\mu\nu}   {\mathbb F^{\mu\nu}}) \mtr{vol}\,.
  \end{align}
\end{subequations}%
  For the time being, the previous identifications hold only the
  Riemannian signature, since for $(p,q)\neq (0,4)$ anti-commutators
  appear; these, unlike commutators, are no longer derivations in the
  algebraic sense. Nevertheless, keeping this caveat in mind, we
  extend the previously defined functionals to any signature (there,
  each $\lm_\mu$ is replaced by $\km_\mu$). It holds then in general
  signature.

The identification of
$\mathscr{F}_{\mu\nu}= [\lm_\mu , \lm_\nu]_{\circ}+ [\lm_\mu , \am_\nu]_{\circ} - [\lm_\nu ,
\am_\mu]_{\circ} + [\am_\mu, \am_\nu]_{\circ} $ in the Riemannian case
(and its extension
$\mathscr{F}_{\mu\nu}= [\km_\mu , \km_\nu]_{\circ}+ [\km_\mu , \am_\nu]_{\circ} - [\km_\nu ,
\am_\mu]_{\circ} + [\am_\mu, \am_\nu]_{\circ} $ to the general
signature) with the curvature
$\mathbb F_{\mu\nu}=\partial _\mu \mathbb A_\nu -\partial_\nu \mathbb
A_\mu + [\mathbb A_\mu,\mathbb A_\nu]$ of the smooth case is further supported
by the fact that $[\lm_\mu , \am_\nu]_{\circ}$ generalizes the
multiplication operator $\partial_\mu \mathbb A_\nu$, on top of the 
reason already given in Remark \ref{rem:Ls}. The alternative
to this definition, using only $\lm_\mu \circ \am_\nu$ in place of $[\lm_\mu , \am_\nu]_\circ$ (and similar replacements), yields instead
$(\partial_{\mu} \circ \mathbb A_{\nu}) \psi= (\partial_\mu
\mathbb{A}_\nu) \cdot \psi +\mathbb{A}_\nu \partial_\mu (\psi) $ on
sections $\psi$ (fermions).  Notice also that for the smooth field
strength one gets the positivity of the type of \eeqref{positivityF}, namely
$-\Tr_{\sun} (\mathbb F_{\mu\nu} \overline{\mathbb F^{\mu\nu}}) \geq 0
$, due to $\mtb F_{\mu\nu} \overline{\mtb F^{\mu\nu}}=-\mtb F_{\mu\nu}\mtb
F_{\mu\nu} $ \cite[below Eq. 1.597]{ConnesMarcolli}. 
We summarize this section in Table \ref{tab:smoothanalogy}.

\begin{remark} \label{rem:tetrah}
Notice that in the expression for the Yang-Mills
action, when the model is fully expanded in terms of 
the fields $\km$ and $\am$, the next tetrahedral action appears
\begin{align} \label{tetrahedron_action}
 S^\fay\!\!\!\tetrah(\km) :\hspace{-3pt}&=- \frac12 \sum_{\mu\neq \nu} \TrMNn  \big( \km_\mu \km_\nu \km^\mu \km^\nu  \big)
\end{align}
as well as the same type of action, $S^\fay\!\!\!\tetrah(\am)$, in the variable $\am$.
The reference to a tetrahedron is justified when one writes that action in full, 
\[
  (\km_{\color{blue!60} \mu \color{black}})_{ij} (\km_{\color{blue!60}
    \nu \color{black}})_{jm}(\km^{\color{blue!60} \mu
    \color{black}})_{ml} (\km^{\color{blue!60}\nu \color{black}})_{li}
  \sim
  \raisebox{-.4195\height}{\includegraphics[height=3cm]{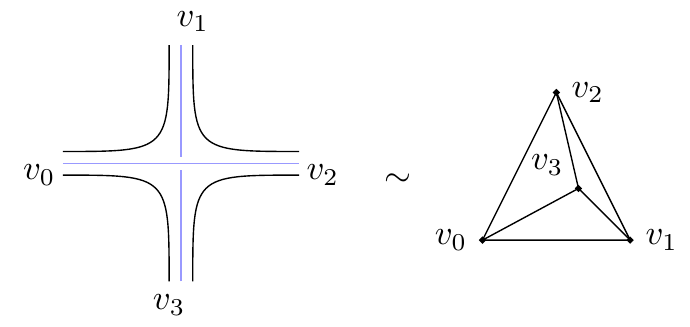}}\]
where the faint (blue) lines correspond to contractions of Greek
indices and black lines to matrix-indices $i,j,m,l$.  Modulo the
restriction $\mu \neq \nu$ present in the sum, this kind of action
$ S^\fay\!\!\!\tetrah$ is an example of the `matrix-tensor model'
class \cite{Benedetti:2020iyz}.
\end{remark}

\begin{table}[h] %
\includegraphics[width=\textwidth]{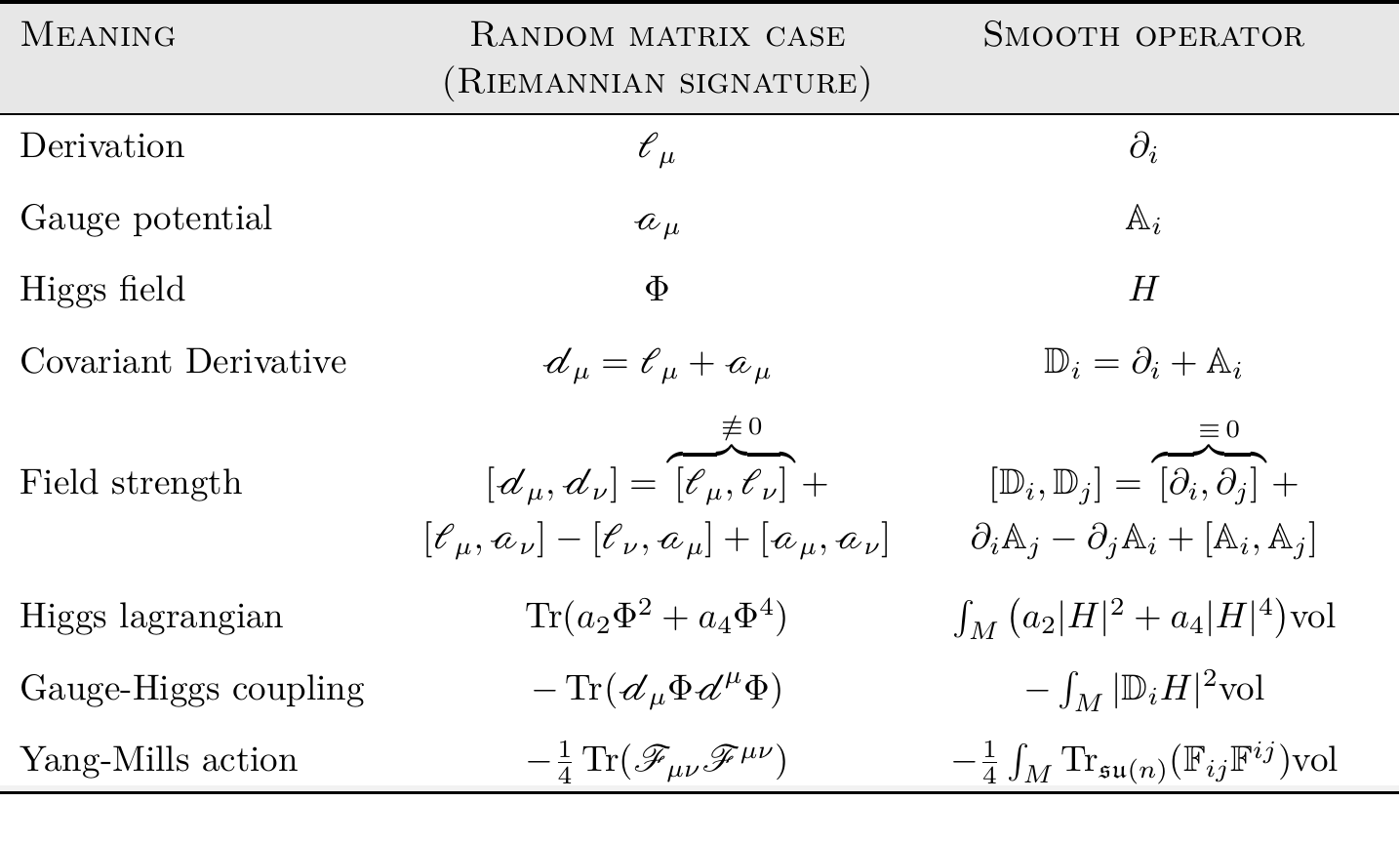}

 
 \caption{Only in this table, $\Tr \mathscr P$ denotes the trace of operators $\mathscr P:\MNn \to \MNn$;
\textit{gauge potential} means the local expression for the connection.
Finally, $a_2 $ and $a_4$ stand for for real parameters in $f$ 
in Theorem \ref{thm:SA24} which are particularly relevant for the Higgs Lagrangian, 
see \eeqref{HiggsPotdef}. The analogies implying $\lm_\mu \leftrightarrow \partial_i$ hold only for the Riemannian signature.
\label{tab:smoothanalogy}}
\end{table}



\section{Conclusions}\label{sec:Conclusions} 
We introduced gauge matrix spectral triples, computed
their spectral action and interpreted it as Yang-Mills--Higgs theory,
if the inner-space Dirac operator is non-trivial (and as Yang-Mills
theory if it is trivial), for the four dimensional geometry of
Riemannian signature.  We justified this terminology based on
Remark \ref{rem:Ls} and Section \ref{sec:smoothlim}; in particular see Table
\ref{tab:smoothanalogy} for the summary. The partition function of the
Yang-Mills-Higgs theory is an integral over gauge potentials
$\mathsf A_\mu$ and a Higgs field $\Phi $ in (subspaces of the)
following matrix spaces 
\[
\mathsf{A}_\mu  \in M_n(\Omega^1\fuz) \quad\text{and} \quad\phi \in M_N ( \Omega^1 _F)
\]
where $\Omega^1\fuz$ and $\Omega^1_F$ are the Connes' 1-forms along
the fuzzy and the finite geometry, respectively, both parametrized by
(finite) matrices, see Section \ref{sec:Outlook}. Additionally, the partition function for the spectral action implies an
integration over four copies of $\suN$; each of these matrix variables
$\mathsf{L}_\mu$ appears as the adjoint
$\lm_\mu=\adj_{\,\mathsf{L}_\mu}=\comm{\mathsf L_\mu} $. These
operators $\lm_\mu$ are interpreted as degrees of freedom solely of
the fuzzy geometry, in concordance with the identification of
$\mtr{Der}(\MN)$ with a finite version of the derivations on
$C^\infty (M)$, that is, vector fields. \par

As in the almost-commutative setting $M \times F$, with $M$ a
smooth manifold, the Higgs field arises from fluctuations along the
finite geometry $F$ and the Yang-Mills gauge fields from those along
the smooth manifold $M$. This is apparent in the parametrizing matrix
subspaces (see \eeqref{NgaugeNHiggs} below) for the matrix Higgs field and the
matrix gauge potentials, which are swapped if one
simultaneously\footnote{To be strict, one has to swap also the
  anti-Hermiticity by the Hermiticity in the both lines of
  \eqref{NgaugeNHiggs}, but this is clearly fixed by an imaginary
  factor and is ignored here.} exchanges $n \leftrightarrow N$ and
$F \leftrightarrow \fay$.  The Yang-Mills--Higgs matrix theory has a
projective gauge group $\gauge= \mtr{PU}(N) \times \mtr{PU}(n)$.  The
left factor corresponds with the symmetries of the fuzzy spacetime and
the right one with those of the `inner space' of the gauge theory (a
similar interpretation holds for the unimodular gauge groups in Lemmas
\ref{thm:SU} and \ref{thm:secondSU}), so the whole group $\gauge$
could be understood as $C^\infty(M,\mtr{SU}(n))$ after a truncation
has been imposed on $M$. A rigorous interpretation, e.g. in terms of spectral truncations \cite{ConnesWvS}, is still needed.

Another approach to reach a continuum limit resembling smooth spin
manifolds is the Functional Renormalization Group, which could be
helpful in searching the fixed points (cf. the companion paper
\cite{FRGEmultimatrix} for the application of this idea to general
multimatrix models).

\section{Outlook}\label{sec:Outlook}

Aiming at a model with room for gravitational degrees of freedom, the
careful construction of a Matrix Spin Geometry needs a separate study
(in particular requiring $X_\mu\neq 0$ and thus also a more general
treatment than that of Section \ref{sec:YMH}). If that is concluded,
one could identity for signature $(0,4)$

\begin{itemize}\setlength\itemsep{.2em}
 \itemb Lemma \ref{thm:Lichnerowicz} with `Fuzzy Lichnerowicz formula',
 \itemb Proposition \ref{thm:flatWeitzenb}
 with `Fuzzy flat Weitzenb\"ock formula', and 
 \itemb Proposition \ref{thm:Weitzenb} with `Fuzzy Weitzenb\"ock formula'.
\end{itemize}

In order to
give a more structured appearance to the partition function
for Riemannian ($p=0$), flat Yang-Mills--Higgs spectral triples, we recall
the dependence of our functionals on the \textit{fundamental matrix fields}
$\mathsf L_\mu,\mathsf A_\mu $ and $\phi$.
The $\mathsf L$'s are functioning as derivatives 
$\mathsf L_\mu \in \suN \otimes 1_n $, and
$\lm_\mu= \adj_{\,\mathsf L_\mu}= \comm{\mathsf L_\mu}$ is the
derivation defined by the adjoint action,
$\lm_\mu \in \mtr{Der} \,\big(\MN\big) \otimes 1_n$, for each $\mu$.
One arrives at a similar situation with the \textit{matrix gauge potentials}
\[
\mathsf A_\mu \in  \big\{\Omega^1_{D\fuz}[ \MN]\big\}_{\mtr{anti\text{-}Herm.}}  \otimes \Mn_{\mtr{s.a.}}  \subset \MN \otimes \Mn\,,
\]
where the subindex in the curly brackets restricts to anti-Hermitian
1-forms.  In terms of these
$ (\am_0,\am_1 ,\am_2 , \am_3) =\am = \am(\mathsf A_\mu) $ is defined,
again, via derivations:
$\am_\mu = \adj_{\, \mathsf A_\mu} = \comm{\mathsf A_\mu } $, which
already bear a non-trivial factor in the inner space\footnote{ We
  recall that $\am$ is actually dependent on $\lm$, since the
  operators $\am_\mu =\comm{\mathsf A_\mu}$ parametrize the inner
  fluctuations of the Dirac operator $D\fuz \otimes 1_F $, itself
  parametrized by $\lm$, but this dependence is not made
  explicit.}. This yields dependences $\lm= \lm(\mathsf L),\am=\am(\mathsf A)$.
  Further, by \eeqref{Higgsdef}, also $\Phi=\Phi(\phi)$. All in all, this yields for each sector
\begin{align*}
  S^\fay_\YM &=S^\fay_\YM (\mathsf L,\mathsf A)\,,  &&&   S^\fay_\gH&=S^\fay_\gH(\mathsf L, \mathsf A, \phi)\,,  
 \\ S^\fay_\Hi&= S^\fay_\Hi(\phi)\,,  &&&  
 S^\fay _\vartheta &= S^\fay _\vartheta( \mathsf L,\mathsf A)\,. 
\end{align*}

The partition function, using a polynomial $f(x) $, reads
\begin{align} \label{ZSA}
\mathcal Z^\fay = \int _{\mathscr N} 
\exp \Big(-\frac{1}{4} \TrH f(D) \Big) \dif D
\end{align}
where 
\begin{itemize}\setlength\itemsep{.4em}
 \itemb 
the Spectral Action is given by Theorem \ref{thm:SA24} 

\itemb the partition function $\mathcal Z^\fay=\mathcal Z_{N, n}^{\fay,f }$  
implies integration over the matrix space $\mathscr N$ that depends on the parameters $ N$ and $n$ 
via
\begin{align}\label{moduli}
\qquad
(\mathsf L_\mu,\mathsf A_\mu,\phi) \in 
\mathscr N = \mathscr N_{N,n}^{p=0,q=4} =\big[ \suN \big]^{\times 4}\times  \big[\mathscr N ^{\mtr{gauge}}_{N,n}\big]^{\times 4}
\times 
\mathscr N ^{\mtr{Higgs}}_{ N,n }  \,,
\end{align}
with the Higgs and gauge fields matrix spaces defined by
\begin{subequations} \label{NgaugeNHiggs}
\begin{align}
\mathscr N ^{\mtr{Higgs}}_{N,n}&:= 
 \ii \hspace{1pt}\mtf{u}(N) \otimes [\Omega^1 _{D_F} (\Mn )]_{\mtr{s.a.}} 
 \subset \big\{ M_N  \big[\Omega^1 _{D_F} \big(\Mn \big) \big]  \big\}_{\mtr{s.a.}}
\\
\qquad 
\mathscr N ^{\mtr{gauge}}_{N,n}&:= \ii \hspace{1pt} \Omega ^1 _{D\fuz}(\MN)_{\mtr{s.a.}} \otimes \ii \hspace{1pt} \mathfrak u (n) \subset \big\{ M_n\big[\Omega ^1 _{D\fuz}\big(\MN\big)\big] \big\}\antiH
\end{align}
\end{subequations}%
\itemb the measure $\dif D=\dif \mathsf L\, \dif \mathsf A \,\dif\phi  $
is the product of Lebesgue measures on the three factors of \eqref{moduli}.
\end{itemize}

Whilst writing down the path integral does not solve the general problem
of how to quantize noncommutative geometries, this finite-dimensional
setting might pave one of the possible ways there, for instance,
also by addressing these via computer simulations (Barrett-Glaser's aim). 
However, it
should be stressed that the treatment of this path integral is not yet
complete, due to the gauge redundancy to be still taken care of.  A
suitable approach is the BV-formalism\footnote{Tangentially, a
  discussion on gauge theories and the BV-formalism in the modern
  language of $L_\infty$-algebras appears in \cite{Ciric:2021rhi}, in
  a noncommutative field theory (but also different) context.} (after
Batalin-Vilkovisky \cite{Batalin:1984jr}), all the more considering
that it has been explored for $\mtr{U}(2)$-matrix models in
\cite{Iseppi:2016olv}, and lately also given in a spectral triple
description \cite{WomenMP}. Another direction would be
\cite{Nguyen:2021rsa}, \`a la Costello-Gwilliam.

En passant, notice that since the main algebra here is $ M_N(A)$ with
$ A$ a noncommutative algebra, the Dyson-Schwinger equations of these
multimatrix models would be `quantum' (in the sense of Mingo-Speicher
\cite[\S 4]{MingoSpeicher}; this is work in progress).

\footnotesize
\begin{acknowledgements} 
This work was mostly supported by the TEAM programme of the
  Foundation for Polish Science co-financed by the European Union
  under the European Regional Development Fund
  (POIR.04.04.00-00-5C55/17-00). Towards the end, this work was 
  funded by the European Research Council (ERC) under
the European Union’s 
Horizon 2020 research and innovation program (grant agreement
No818066) and also by the
Deutsche Forschungsgemeinschaft (DFG, German Research Foundation)
under Germany’s 
Excellence Strategy EXC-2181/1-390900948 (the Heidelberg \textsc{Structures}
Cluster of Excellence). I thank the Faculty of Physics, Astronomy and Applied Computer
  Science, Jagiellonian University for hospitality (in 2018). I am
  also indebted to Andrzej Sitarz and Alexander Schenkel for useful comments 
  during the preparation phase. 
\end{acknowledgements}
  
\appendix

\normalsize 
\section{Proofs of some lemmata}\label{app:prooflemma}

\begin{proof*}{Proof of Lemma \ref{thm:gamma13}}
  For the first equation: If $\mu=\nu$, then
  \[\gamma^\mu\gamma^{\hat \mu}=\gamma^\mu\gamma^0\cdots
    \widehat{\gamma^\mu} \cdots \gamma^3=(-1)^{\mu}\gamma^0\cdots
    \gamma^\mu \cdots \gamma^3\,,\] since the first $\gamma^\mu$ has
  to `jump' $\mu$ gamma-matrices in order to form
  $ \gamma^0\gamma^1\gamma^2\gamma^3$.  If $\mu \neq \nu$, precisely
  the two gamma matrices with indices different from $\mu$ and $\nu$
  survive, which explains $\delta_{\mu\nu\alpha\sigma}$.  The LHS of
  Eq. \eqref{gammas13} contains also $(\gamma^\mu)^2=e_\mu 1_V$.  To
  justify the sign $(-1)^\mu\mtr{sgn}(\nu-\mu)$, notice, by explicit
  computation, that if $\mu < \nu$ then the $(\mu,\nu)$-pairs
  $(0,\nu)$ and $(2,\nu)$ yield positive sign, whereas $(1,\nu)$
  negative.  Thus the sign is $(-1)^{\mu}$ (in no case the sign
  depends on $\nu$, as far as $\mu < \nu$).  The situation is inverted
  if $\mu> \nu$, where the pairs $(1,\nu)$ and $(3,\nu)$ yield
  positive sign and $(2,\nu)$ negative.  Thus the sign is
  $(-1)^{\mu-1}$.
 \par 
 Notice that in $\gamma^{\hat \mu}\gamma^\mu $ the last matrix has to
 move $4-\mu-1$ places to the right to arrive at the $\mu$-th factor,
 which says that
 $\gamma^{\hat \mu}\gamma^\mu=(-1)^{3-\mu} \gamma^0
 \gamma^1\gamma^2\gamma^3=-(-1)^{\mu} \gamma^0
 \gamma^1\gamma^2\gamma^3$ and, by \eeqref{gammas13},
 \eeqref{gammas13b} follows.\par
 For \eeqref{gammas13c} one notices that $\gamma^\mu$ has to jump, in
 order to pass to the other side, three matrices (one of which is
 $\gamma^\mu$ itself), which yields the sign $(-1)^2$.
 \par
 
 Regarding \eeqref{eq:gammastriples}
  To obtain the second summand, notice that if $\mu=\nu$, then the LHS
  is of the form $( \gamma^\iota \gamma^\lambda \gamma^\tau )^2 $ with
  pairwise different indices (i.e. anti-commuting gamma-matrices).
  Therefore
  $ \gamma^{\hat \mu }\gamma^{\hat \mu } =e_\iota \gamma^\lambda
  \gamma^\tau \gamma^\lambda \gamma^\tau =-e_\iota \gamma^\lambda
  \gamma^\lambda \gamma^\tau \gamma^\tau = - e_\iota e_\lambda e_\tau
  1_V$. Since each $e_{\!\!\balita}$ is a sign and
  $\{\iota,\lambda,\tau,\mu\}= \Delta_4$, by multiplying the last
  expression by $e_\mu^2=1$ one arrives to
  $\gamma^{\hat \mu }\gamma^{\hat \mu }=-e_\mu \cdot (e_0e_1 e_2 e_3 )
  1_V = -e_\mu \det(\eta)1_V $.
\par 
If $\nu\neq \mu$, we first determine the corresponding RHS term up to
a sign, and thereafter correct it. First, it is clear that
$\gamma^{\hat \mu }\gamma^{\hat \nu }$ is a product of $\gamma^\mu $
(which appears in $\gamma^{\hat \nu }$), with $\gamma^\nu$ (which
appears in $\gamma^{\hat \nu }$) and, additionally, with the other two
gamma-matrices whose indices $\{\lambda,\rho\}$ that are neither $\mu$
nor $\nu$. But each one of the latter appears twice, once in
$\gamma^{\hat \nu }$ and once in $\gamma^{\hat \mu }$.  The matrix is
then proportional to $\gamma^\mu \gamma^\nu$, which with the squared
matrices $\gamma^\lambda$ and $\gamma^\rho$ yield
$\varsigma_{\mu\nu} e_\lambda e_\rho \gamma^\mu\gamma^\nu$ for
$\lambda, \rho \in \Delta_4\setminus \{\mu,\nu\}$ and
$\lambda \neq \rho$, for a sign $\varsigma_{\mu\nu}=\pm $ that we now
determine. To enforce the inequality of all the indices we introduce
$\delta_{\mu\nu\lambda\rho}$, but since
$e_\lambda e_\rho \delta_{\mu\nu\lambda\rho}$ is symmetric in
$\lambda$ and $\rho$, we have to divide the sum over those indices by
$1/2$. To find the correct sign $\varsigma_{\mu\nu}$, by explicit
computation one sees that $\varsigma_{\mu\nu}=-1$ if and only if
$(\mu,\nu)$ is $(0,2), (2,0), (3,1)$ or $(1,3)$ and
$ \varsigma_{\mu\nu}=+1$ in all the other cases. That is,
$ \varsigma_{\mu\nu}=-1$ if and only if $|\mu-\nu|$ is even. But this
is precisely equivalent to $\varsigma_{\mu\nu}=(-1)^{|\mu-\nu|+1} $.
 \end{proof*}

\begin{lemma}\label{thm:CenterOfTensorProd} Let $A_i$ be unital, associative algebras, and let $Z(A_i)$
be the center of $A_i$. Then 
$   Z(A_1\otimes A_2) =   Z(A_1) \otimes   Z( A_2) $. 
\end{lemma}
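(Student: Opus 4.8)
The plan is to prove the two inclusions separately, the nontrivial one resting on the basic fact that in a tensor product of vector spaces over a field, a combination $\sum_i u_i\otimes w_i$ with the $w_i$ linearly independent vanishes only if every $u_i$ vanishes.

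First I would dispatch the easy inclusion $Z(A_1)\otimes Z(A_2)\subseteq Z(A_1\otimes A_2)$. For $z_i\in Z(A_i)$ and any elementary tensor $a_1\otimes a_2$ one has $(z_1\otimes z_2)(a_1\otimes a_2)=z_1a_1\otimes z_2a_2=a_1z_1\otimes a_2z_2=(a_1\otimes a_2)(z_1\otimes z_2)$; since elementary tensors span $A_1\otimes A_2$ and the commutator is bilinear, $z_1\otimes z_2$, and hence every element of $Z(A_1)\otimes Z(A_2)$, is central.

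For the reverse inclusion I would take $x\in Z(A_1\otimes A_2)$ and fix a vector-space basis $\{f_j\}$ of $A_1$, writing uniquely $x=\sum_j f_j\otimes c_j$ with $c_j\in A_2$ (almost all zero). Imposing that $x$ commute with $1\otimes b$ for every $b\in A_2$ gives $\sum_j f_j\otimes (bc_j-c_jb)=0$; linear independence of the $f_j$ forces each $c_j$ to commute with all of $A_2$, i.e. $c_j\in Z(A_2)$. Thus $x\in A_1\otimes Z(A_2)$, and I would then re-expand $x=\sum_k a_k\otimes z_k$ with the $z_k$ now chosen linearly independent in $Z(A_2)$. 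Commutation with $a\otimes 1$ for all $a\in A_1$ yields $\sum_k (aa_k-a_ka)\otimes z_k=0$, and linear independence of the $z_k$ forces $a_k\in Z(A_1)$, whence $x\in Z(A_1)\otimes Z(A_2)$.

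The only point requiring care, and the mild obstacle, is to apply the two centrality conditions with the correct linear-independence hypothesis at each stage: one first clears the ``second leg'' into $Z(A_2)$ using a basis of $A_1$, and only afterwards re-expands with independent elements of $Z(A_2)$ to treat the ``first leg''. A single fixed basis does not settle both legs simultaneously, so the two-step reorganization is essential. Since in our case $A_1=M_{n_1}(\C)$ and $A_2=M_{n_2}(\C)$ are finite-dimensional, all bases exist and the argument is entirely elementary.
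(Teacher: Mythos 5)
Your proof is correct, and in the reverse inclusion it actually takes a different and more careful route than the paper. The paper establishes $Z(A_1)\otimes Z(A_2)\subseteq Z(A_1\otimes A_2)$ via the commutator identity $[a_1\otimes a_2, b_1\otimes b_2]=[a_1,b_1]\otimes(b_2a_2)+(a_1b_1)\otimes[a_2,b_2]$, essentially the same computation as yours; but for the converse it assumes the central element is a \emph{single elementary tensor} $a_1\otimes a_2$, sets $b_1=1$ to conclude $a_1\otimes[a_2,b_2]=0$ and hence $a_2\in Z(A_2)$ (using $a_1\neq 0$), and then symmetrically for the other leg. A general element of $Z(A_1\otimes A_2)$ is of course a finite sum of elementary tensors, so the paper's argument, as written, only covers central elements that happen to be elementary. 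Your two-step basis expansion --- first writing $x=\sum_j f_j\otimes c_j$ over a basis $\{f_j\}$ of $A_1$ and using commutation with $1\otimes b$ to push the second leg into $Z(A_2)$, then re-expanding over linearly independent $z_k\in Z(A_2)$ to treat the first leg --- handles arbitrary elements and thus closes this gap. Your observation that a single fixed basis does not settle both legs at once, so the reorganization between the two steps is essential, is exactly the point the shorter argument glosses over.
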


\begin{proof}
  Notice that for (so far, arbitrary) $a_i, b_i \in A_i$ ($i=1,2$),
  one has by adding and subtracting $a_1b_1\otimes b_2a_2$ and
  rearranging,
\begin{equation}
 [a_1 \otimes a_2 , b_1\otimes b_2 ] = 
[a_1,b_1] \otimes (b_2 a_2 ) + (a_1 b_1) \otimes [a_2,b_2]\,.
\label{centrum}
\end{equation}
Clearly, if $a_i\in  Z(A_i)$ for $i=1,2$,
then the RHS vanishes for each $b_i \in A_i$, that is,  $a_1\otimes a_2\in   Z ( A_1 \otimes A_2)$. 
Therefore  $ Z(A_1)\otimes Z(A_2) \subset  Z ( A_1 \otimes A_2)$. \par Conversely,
notice that if $a_1\otimes a_2=0$, then we are done, so we suppose  
$a_1\otimes a_2 \in Z(A_1\otimes A_2)\setminus \{0\}$. If the LHS of the previous equation vanishes for  
each $b_1\otimes b_2 \in A_1\otimes A_2$, so does for $b_1=1$;
in which case, one gets $a_1 \otimes [a_2,b_2]=0$ for each $b_2 \in A_2$,
so $a_2\in Z(A_2)$, since $a_1\neq 0$ by assumption.  
Repeating the argument now taking $b_2=1$ instead, one gets $ Z(A_1)\otimes Z(A_2) \supset  Z ( A_1 \otimes A_2)$. \end{proof}

\begin{proof*}{Proof of Lemma \ref{thm:TrD4}} 
  Again, in the whole proof we set $K=0$, even though the notation
  will not reflect it. This can be obtained by small modifications
  from the previous lemma: if $K_\mu=0$ for each $\mu$, then
 \begin{salign}
\omega\fuz (\Psi)& = \ac [D\fuz\otimes 1_F, \ac' ] (\Psi)   = \sum_\mu\gamma^{\hat\mu} v \otimes W 
 [ X_\mu ,  T ]  Y \otimes  a c  \psi  \\ & = \sum_\mu\big( \gamma^{\hat \mu}\otimes W [X_\mu, T] \otimes a c \big) \Psi 
 \\& = \sum_\mu\big( \gamma^{\hat \mu} \otimes  S_\mu \otimes a c \big) \Psi 
\end{salign}%
 and $S_\mu:=W [X_\mu,T]$. Since $a$ and $c$ are arbitrary matrices,
 we rename $b=a c$. Again, since the 
\begin{align}
 \label{selfadjofGmu}
(\gamma^{\hat \mu} \otimes S_\mu \otimes b)^* &= e_{\hat \mu}
\gamma^{\hat \mu} \otimes   S_\mu ^* \otimes b^* 
 \end{align}
and since we had set already $b\in \ii \,\mathfrak{u}(n)$ 
 we conclude that $ S_\mu ^* = e_{\hat \mu}  S_\mu$. 
 This sign is  $e_{\hat \mu}=(-1)^{q+1}e_\mu$, according to \cite[App. A]{SAfuzzy}. 
 It follows from the definition, that the anti-linear operator $C:V \to V$ satisfies  
  $C(\gamma^\alpha \gamma^\rho \gamma^\sigma) C \inv = C \gamma^\alpha C\inv \cdot C \gamma^\rho  C\inv \cdot C \gamma^\sigma  C\inv
 =\gamma^\alpha \gamma^\rho \gamma^\sigma$ for each triple of indices $\alpha,\rho,\sigma\in \Delta_4$.
  Therefore $C\gamma^{\hat\mu} C\inv = \gamma^{\hat \mu} $ and the operator $J\omega\fuz  J\inv$
  can thus readily be computed: for $\Psi=v\otimes Y\otimes \psi \in V \otimes \MN\otimes \Mn$, 
  \allowdisplaybreaks[2]
  \begin{salign}
(J \omega\fuz  J\inv ) (\Psi)& = \vphantom{\sum_\mu}\big( J \ac [D\fuz\otimes 1_F, \ac' ] J\inv \big)  (\Psi)   \\ & = \sum_\mu (C\otimes *_N \otimes *_n )
(\gamma^{\hat \mu} C\inv v \otimes S_\mu Y^* \otimes b \psi ^*)
\\ & =
\sum_\mu  
(\underbrace{C \gamma^{\hat \mu} C\inv}_{\gamma^{\hat \mu}} v) \otimes (S_\mu Y^*)^* \otimes  (b \psi ^*)^* \\
&=\sum_\mu  
(\gamma^{\hat \mu} \otimes 1_{\MN}\otimes 1_{\Mn}) \Psi
(1_V\otimes e_{\hat \mu}  S_\mu \otimes b) \\ & = \sum_\mu  
( (-1)^{q+1}e_\mu \gamma^{\hat \mu} \otimes 1_{\MN}\otimes 1_{\Mn}) \Psi
(1_V\otimes S_\mu \otimes b) \,. \qedhere
\end{salign}
\end{proof*}


\end{document}